\documentclass{article}

\usepackage{fullpage,comment, amsmath, amsthm, amssymb, cleveref, url}
\usepackage[T1]{fontenc}
\usepackage[utf8]{inputenc}
\usepackage{footnote}
\usepackage{xcolor}
\usepackage{thmtools}
\usepackage{thm-restate}
\usepackage{cleveref}
\usepackage{xpatch}

\makeatletter
\xpatchcmd{\thmt@restatable}
   {\csname #2\@xa\endcsname\ifx\@nx#1\@nx\else[{#1}]\fi}
   {\ifthmt@thisistheone
       \csname #2\@xa\endcsname\ifx\@nx#1\@nx\else[{#1}]\fi%
       \else\fi}
   {}{\typeout{FIRST PATCH TO THM RESTATE FAILED}} 
\xpatchcmd{\thmt@restatable}
   {\csname end#2\endcsname}
   {\ifthmt@thisistheone\csname end#2\endcsname\else\fi}%
   {}{\typeout{FAILED SECOND THMT RESTATE PATCH}}
\makeatother

\newcommand{\ppp}{\mathbb{P}}

\DeclareMathOperator{\sign}{sign}

\newcommand{\eps}{\varepsilon}
\newcommand{\E}{\mathbb{E}}

\newcommand{\Norm}{\mathcal{N}}
\newcommand{\norm}[1]{\left\lVert#1\right\rVert^2}

\newcommand{\cphdii}{}

\renewcommand{\log}{\lg}

\newcommand{\clbi}{c_7}
\newcommand{\clbii}{c_4}
\newcommand{\clbiii}{c_5}
\newcommand{\clbiiii}{\min\left\{1/50,\clbxii/\log_2\left(e\right)\right\}}
\newcommand{\clbiiiii}{k}
\newcommand{\clbiiiiii}{l}

\newcommand{\clbxii}{c_4}
\newcommand{\clbxiii}{c_6}
\newcommand{\cubi}{2c_1^3e^8}
\newcommand{\cubii}{c_1}
\newcommand{\cubiii}{c_1}
\newcommand{\cubiiii}{c_2}
\newcommand{\cubiiiii}{c_1}

\newcommand{\R}{\mathcal{R}}
\newcommand{\Reals}{\mathbb{R}}

\newenvironment{customlem}[1]
  {\innercustomlem}
  {\endinnercustomlem}

\newcommand{\alert}[1]{\textbf{\color{green}
[#1]}\marginpar{\textbf{\color{green}**}}\typeout{ALERT:
\the\inputlineno: #1}}

\renewcommand{\Pr}{\ppp}

\title{The Fast Johnson-Lindenstrauss Transform is Even Faster}
\author{Ora Nova Fandina \and Mikael M\o ller H\o gsgaard \and Kasper Green Larsen}

\begin{document}
 
\date{}
\maketitle

\begin{abstract}
  The seminal Fast Johnson-Lindenstrauss (Fast JL) transform by Ailon and Chazelle (SICOMP'09) embeds a set of $n$ points in $d$-dimensional Euclidean space into optimal $k=O(\eps^{-2} \ln n)$ dimensions, while preserving all pairwise distances to within a factor $(1 \pm \eps)$. The Fast JL transform supports computing the embedding of a data point in $O(d \ln d +k \ln^2 n)$ time, where the $d \ln d$ term comes from multiplication with a $d \times d$ Hadamard matrix and the $k \ln^2 n$ term comes from multiplication with a sparse $k \times d$ matrix. Despite the Fast JL transform being more than a decade old, it is one of the fastest dimensionality reduction techniques for many tradeoffs between $\eps, d$ and $n$.

In this work, we give a surprising new analysis of the Fast JL transform, showing that the $k \ln^2 n$ term in the embedding time can be improved to $(k \ln^2 n)/\alpha$ for an $\alpha = \Omega(\min\{\eps^{-1}\ln(1/\eps), \ln n\})$. The improvement follows by using an even sparser matrix. We also complement our improved analysis with a lower bound showing that our new analysis is in fact tight.
\end{abstract}
\thispagestyle{empty}
\newpage
\setcounter{page}{1}

\section{Introduction}
Dimensionality reduction is a central technique for speeding up algorithms and reducing the memory footprint of large data sets. The basic idea is to map a set $X \subset \R^d$ of $n$ high-dimensional points to a lower dimensional representation, while approximately preserving similarities between the points. The most fundamental result in dimensionality reduction, is the Johnson-Lindenstrauss transform~\cite{johnson1984extensions}, which for any precision $0 < \eps < 1$, gives a mapping $f : X \to \Reals^k$ with $k = O(\eps^{-2} \ln n)$ such that
\begin{gather}
      \label{eq:jl}
  \forall x,y \in X: \|f(x) - f(y)\|_2 \in (1 \pm \eps)\|x - y\|_2.
\end{gather}
That is, the pairwise Euclidean distance between the embeddings of any two points $x, y \in X$ is within a factor $(1 \pm \eps)$ of the original distance. The target dimensionality of $k = O(\eps^{-2} \ln n)$ is known to be optimal~\cite{DBLP:conf/focs/LarsenN17,DBLP:conf/focs/AlonK17}. For algorithmic applications where one can tolerate a small loss of precision, one can apply a Johnson-Lindenstrauss transform as a preprocessing step to reduce the dimensionality of the input. Since the running time of most algorithms depend on the dimensionality of the input, this typically speeds up the analysis while also reducing memory consumption.

A simple construction of a mapping $f$ satisfying \cref{eq:jl} is to let $f(x) = k^{-1/2}Ax$, where $A$ is a random $k \times d$ matrix, having each entry i.i.d. $\Norm(0,1)$ distributed~\cite{IM98}. This results in an embedding time of $O(kd)$ to compute the matrix-vector product $Ax$. For some applications, this embedding time may dominate the running time of the algorithms applied to the embedded data, hence dimensionality reducing maps with a faster embedding time has been the focus of much research. The line of research on faster dimensionality reducing maps splits roughly into two categories: 1) maps based on sparse matrices, and 2), maps based on structured matrices with fast matrix-vector multiplication algorithms.

\paragraph{Sparse JL.} A sparse JL transform is obtained by replacing the dense matrix $A$ above with a matrix having only $t$ non-zero entries per column. Computing the product $Ax$ now takes only $O(td)$ time instead of $O(kd)$. Perhaps even more importantly, if the input vectors $x \in X$ are themselves sparse vectors, then the embedding time is further reduced to $O(t\|x\|_0)$, where $\|x\|_0$ denotes the number of non-zero entries in $x$. This is particularly useful when applying JL on e.g. bag-of-words, $n$-gram or tf-idf representations of text documents \cite{MS99NLP}, which are often very sparse. The fastest (sparsest) known construction, due to Kane and Nelson~\cite{DBLP:journals/jacm/KaneN14}, achieves $t = O(\eps^{-1} \ln n)$, which nearly matches a sparsity lower bound by Nelson and Nguyen~\cite{DBLP:conf/stoc/NelsonN13}, stating that any Sparse JL must have $t = \Omega(\eps^{-1} \ln n/\ln(1/\eps))$. Sparse JL thus improves over classic JL by an $\eps^{-1}$ factor.

While the lower bound by Nelson and Nguyen rules out significant further improvements, the Feature Hashing technique by Weinberger et al.~\cite{WeinbergerDLSA09} study the extreme case of $t=1$. Since this is below the sparsity lower bound, they have to assume that the ratio $\nu = \|z\|_\infty/\|z\|_2$ is small for all pairwise difference vectors $z = y-x$ for $x,y \in X$ to ensure \cref{eq:jl} holds. Determining the exact ratio $\nu$ for which \cref{eq:jl} holds was subsequently done by Freksen et al.~\cite{FeatureHashing} and generalized to $t$-sparse embeddings for all $t \geq 1$ by Jagadeesan~\cite{DBLP:conf/nips/Jagadeesan19}.

\paragraph{Fast JL.} Ailon and Chazelle~\cite{Ailon2009TheFJ} initiated the study of JL transforms that exploit dense matrices with fast matrix-vector multiplication algorithms. Concretely, they defined the Fast JL transform where the embedding of a vector $x$ is computed as $PHDx$, such that $D$ is a diagonal matrix with random signs on the diagonal, $H$ is a $d \times d$ standardized Hadamard matrix and $P$ is a sparse $k \times d$ matrix. Computing $Dx$ takes only $O(d)$ time, and multiplication with the Hadamard matrix can be done in $O(d \ln d)$ time. The key observation that permits a very sparse matrix $P$, is that with high probability, the vector $y=HDx$ has a small ratio $\nu = \|y\|_\infty/\|y\|_2$, i.e. no single entry contributes most of the "mass". As was the case for Feature Hashing, such a bound allows for an even sparser random projection matrix $P$ than what a  Sparse JL transform could achieve. Ailon and Chazelle proved that a matrix $P$ in which each entry is non-zero only with probability $q = O((\ln^2 n)/d)$ suffices for \cref{eq:jl}. Thus the expected number of non-zeroes in $P$ is $kdq = O(k \ln^2 n)$ (also with high probability) and the product $Py$ can be computed in $O(k \ln^2 n)$ time. This yields a total embedding time of $O(d \ln d + k \ln^2 n)$.

Numerous follow-up works have attempted to improve over the Fast JL construction of Ailon and Chazelle, in particular attempting to shave off the $k \ln^2 n$ additive term to obtain a clean $O(d \ln d)$ time embedding. These approaches naturally divide into a couple of categories. First, a number of constructions sacrifice the optimal target dimensionality of $k = O(\eps^{-2} \ln n)$ for faster embedding time. This includes e.g. five solutions with $O(d \ln d)$ embedding time, but different sub-optimal $k=O(\eps^{-2} \ln n \ln^4 d)$~\cite{DBLP:journals/corr/abs-1009-07441}, $k=O(\eps^{-2} \ln^3 n)$~\cite{4959960}, $k=O(\eps^{-1} \ln^{3/2} n \ln^{3/2}d  + \eps^{-2} \ln n \ln^4 d)$~\cite{DBLP:journals/corr/abs-1009-07441}, $k = O(\eps^{-2} \ln^2 n)$~\cite{https://doi.org/10.1002/rsa.20360,article1,DBLP:journals/algorithmica/FreksenL20} and $k=O(\eps^{-2} \ln n \ln^2(\ln n) \ln^3 d)$~\cite{DBLP:journals/corr/abs-2003-10069}, respectively. The second category is solutions where one assumes that $k$ is significantly smaller than $d$. Here there are two solutions that both achieve $O(d \ln k)$ embedding time under the assumption that $k = o(d^{1/2})$~\cite{inproceedings,Bamberger2017OptimalFJ}. Among solutions that insist on optimal $k = O(\eps^{-2} \ln n)$ and that make no assumption about the relationship between $k$ and $d$ (other than the obvious $k \leq d$), only the recent analysis~\cite{DBLP:journals/corr/abs-2003-10069} of the Kac JL transform~\cite{kac} improves over the classic Fast JL solution by Ailon and Chazelle for some tradeoffs between $\eps, d$ and $n$. The Kac JL transform works by repeatedly picking two coordinates and doing a random unitary rotation on the two coordinates. After a sufficient number of steps, one projects on to the first $k=O(\eps^{-2} \ln n)$ coordinates and scales the coordinates appropriately. Since each rotation takes $O(1)$ time, the running time is proportional to the number of steps needed. Jain et al.~\cite{DBLP:journals/corr/abs-2003-10069} showed that
\begin{gather}
O(d \ln d + \min\{d \ln n, k \ln n \ln^2(\ln n) \ln^3 d\})\label{eq:kac}
\end{gather}
rotations suffice. Compared to the $O(d \ln d + k \ln^2 n)$ embedding time of Fast JL, Kac JL is an improvement unless $\ln^3 d > \ln n/\ln^2(\ln n)$. Despite these numerous approaches to Fast JL, we still lack a clean $O(d \ln d)$ or $O(d \ln k)$ time solution.

\paragraph{Our Contributions.}
While Fast JL has been the focus of a considerable amount of research, we give a surprising new analysis of the classic Fast JL transform by Ailon and Chazelle~\cite{Ailon2009TheFJ}. Our analysis shows that the sparsity parameter $q$ in the matrix $P$ can be lowered by a factor $\Omega(\min\{\eps^{-1} \ln(1/\eps), \ln n\})$, thereby yielding a similar improvement in embedding time. Concretely, we show that Fast JL can embed a vector $x$ in time:
\begin{gather}
\label{eq:our}
O\left(d \ln d + \min\left\{\eps^{-1} d \ln n, k \ln n \cdot \max\left\{1, \frac{\eps \ln n}{\ln(1/\eps)}\right\} \right\}\right).
\end{gather}
While this rather complicated expression might seem like an artifact of our proof, we complement our improved upper bound by showing the existence of a vector requiring precisely this embedding time using the $PHDx$ Fast JL construction. In later sections, we also give an intuitive description of where the different terms originate from.

Before giving more details on our results, let us thoroughly compare the bound to previous work. Compared to the classic $O(d \ln d + k \ln^2 n)$ Fast JL bound, we observe that \cref{eq:our} is always bounded by $O(d \ln d + k \ln n \max\{1, \eps \ln n/\ln(1/\eps)\})$, i.e. the term $O(k \ln^2 n)$ is improved by a factor $\Omega(\min\{\eps^{-1} \ln(1/\eps), \ln n\})$. Also, if we consider the case of $\eps = O(\ln(\ln n)/\ln n)$, then $1$ takes the maximum value in the $\max$-expression and the bound simplifies to $O(d \ln d + k \ln n)$. Comparing this clean bound to the Kac JL bound in~\cref{eq:kac}, this is a strict improvement (for $\eps < \ln(\ln n)/\ln n$).

In the next section, we give a detailed description of the Fast JL transform and formally state our new results.
\section{The Fast Johnson-Lindenstrauss Transform}
In the spirit of \cite{Ailon2009TheFJ} we now introduce the notation for the Fast JL transform. Here we let $d$ denote the input dimension and $k$ the output dimension. We assume $d$ is a power of two, which can always be ensured by padding with $0$'s. The Fast JL transform is the composition of three matrices $P\in\mathbb{R}^{k \times d}$ and $H,D \in \mathbb{R}^{d \times d}$. Here $D$ is a random diagonal matrix with independent Rademacher variables ($D_{i,i}$ is $1$ or $-1$ with equal probability) on its diagonal, $H$ is the normalized $d \times d$ Hadamard matrix (denoted $H_d$ in the following):
\begin{eqnarray*}
H_2 &=& \frac{1}{\sqrt{2}}\begin{pmatrix}
1 & 1 \\
1 & -1
\end{pmatrix},\\
H_d &=& \frac{1}{\sqrt{2}}
\begin{pmatrix}
H_{d/2} & H_{d/2} \\
H_{d/2} & -H_{d/2}
\end{pmatrix}
\end{eqnarray*}
and $P$ is a random matrix with the $(i,j)$'th entry being $\sqrt{1/q}\ b_{i,j}N_{i,j}$ where $b_{i,j}$ is a Bernoulli random variable with success probability/sparsity parameter $q$ and $N_{i,j}$ a standard normal random variable, where all the $b_{i,j}$'s, $N_{i,j}$'s and $D_{i,i}$'s are independent of each other.  The final embedding of a vector $x$ is then computed as $k^{-1/2}PHDx$.

\paragraph{Analysis Sketch.}
As is standard in the analysis of JL transforms, we observe that $k^{-1/2}PHD$ is a linear transformation. Hence for $k^{-1/2}PHD$ to satisfy~\cref{eq:jl} for a set of points $X$, it suffices that $k^{-1/2}PHD$ preserves the norm of every vector $z = x-y$ with $x,y \in X$ to within a factor $(1 \pm \eps)$. Also by linearity, we guarantee this by arguing that $k^{-1/2}PHD$ preserves the norm of a fixed unit vector $x$ to within $(1 \pm \eps)$ with probability $1-\delta$ when $k=O(\eps^{-2} \log(1/\delta))$. Setting $\delta = 1/n^3$ and doing a union bound over all normalized difference vectors $z/\|z\|$ with $z=x-y$ for $x,y \in X$ ensures~\cref{eq:jl} holds with probability $1-1/n$. For shorthand, we from here on use $\| \cdot \|$ to denote the norm $\| \cdot \|_2$.

To build some intuition for the key ideas used to show that the $PHD$ construction approximately preserves the norm of a unit vector with high probability, we first observe that $H$ and $D$ are both unitary matrices, hence $HDx$ preserves the norm of any vector $x$. Moreover, if we examine a single coordinate $(HDx)_i$, then it is distributed as $d^{-1/2} \sum_j \sigma_j x_j$ for independent Rademachers $\sigma_j = \sign(H_{i,j}) D_{j,j}$. Standard tail bounds show that $(HDx)_i$ is bounded by $\sqrt{\ln(d/\delta)/d}$ in absolute value with probability $1-\delta/d$ when $x$ has unit norm. A union bound over all $d$ coordinates gives that they are all bounded by $\sqrt{\ln(d/\delta)/d}$ with probability $1-\delta$. Now that $HDx$ has only small coordinates (recall $x$ has unit norm), it suffices to use a very sparse matrix $P$, precisely as in the analysis of Feature Hashing. Recall that we will set $\delta \leq 1/n^3$ and thus the $d$ term in $\ln(d/\delta)$ is irrelevant for $d \leq n$. For simplicity, we will thus assume $d \leq n$, which is also consistent with previous work (it was assumed both for Fast JL~\cite{Ailon2009TheFJ} and Kac JL~\cite{DBLP:journals/corr/abs-2003-10069}).

\paragraph{Upper Bounds.}
In their seminal work, Ailon and Chazelle~\cite{Ailon2009TheFJ} showed that it suffices to set 
\[
q = O(\ln^2(n)/d)
\]
to guarantee~\cref{eq:jl} for a set $X$ of $n$ points (with probability $1-1/n$ by setting $\delta=1/n^3$). Their proof follows the template above, union bounding over preserving the norm of all normalized pairwise difference vectors. This results in an expected $kdq = O(k \ln^2 n)$ number of non-zero entries in $P$. Our main upper bound result is an improved analysis, showing that an even sparser $P$ suffice:

\begin{restatable}{Theorem1}{sensone}
\label{phdtheoremuppermain}
Let $X$ be a set of $n$ vectors in $\mathbb{R}^d$ and let $k=\Theta(\eps^{-2}\ln n)$. Let further $0<\varepsilon \leq C$ where $C$ is some universal constant. Then for 
\[
q=O\left(\min \left\{\eps, \frac{\ln n}{d} \cdot \max\left\{ 1,  \frac{\eps \ln n}{ \ln(1/\eps)}\right\}\right\}\right),
\]
we have that $k^{-1/2}PHD$ guarantees~\cref{eq:jl} with probability at least $1-1/n$. 
\end{restatable}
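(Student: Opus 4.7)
By linearity of the map $x \mapsto k^{-1/2}PHDx$, it suffices to show that for an arbitrary fixed unit vector $x \in \mathbb{R}^d$ and $\delta := 1/n^3$, one has $\Pr\bigl[\bigl|\|k^{-1/2}PHDx\|^2-1\bigr| > \eps\bigr] \le \delta$; a union bound over the at most $\binom{n}{2}$ normalized pairwise difference vectors of $X$ then gives \cref{eq:jl} with probability at least $1-1/n$. Writing $w := HDx$, the matrix $HD$ is unitary, so $\|w\|=1$ and one is reduced to analyzing the concentration of $\|Pw\|^2/k$ around $1$ for a fresh, independent sparse-Gaussian $P$.

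The plan is to split the proof into a \emph{preconditioning} step and a \emph{projection} step. For preconditioning, each coordinate $(HDx)_i = d^{-1/2}\sum_j \sigma_{i,j}x_j$ is a Rademacher sum with variance $1/d$, so Hoeffding followed by a union bound over $d$ coordinates yields $\|w\|_\infty \le \nu := C\sqrt{\ln(d/\delta)/d} = O(\sqrt{\ln n/d})$ with probability at least $1-\delta/2$. From here on I condition on a realization of $w$ with $\|w\|=1$ and $\|w\|_\infty \le \nu$, and must show $\Pr\bigl[\bigl|\|Pw\|^2 - k\bigr| > \eps k \bigm| w\bigr] \le \delta/2$ over the randomness of $P$.

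For the projection step, the rows of $Pw$ are i.i.d., so writing $r_i := q^{-1/2}\sum_j b_{i,j} N_{i,j} w_j$ gives $\|Pw\|^2 = \sum_{i=1}^{k} r_i^2$ with $\E[r_i^2]=1$. The key structural observation is that, conditional on the support $(b_{i,j})_j$, $r_i$ is a centered Gaussian with variance $S_i/q$ where $S_i := \sum_j b_{i,j}w_j^2$, so that $\E[\exp(\lambda(r_i^2-1)) \mid b] = e^{-\lambda}(1-2\lambda S_i/q)^{-1/2}$ whenever $2\lambda S_i/q < 1$. Using the infinity-norm bound via $\|w\|_{2\ell}^{2\ell} \le \nu^{2\ell-2}$ to control the combinatorial moments $\E[S_i^\ell] = \sum_{\text{partitions}} q^{\#\text{parts}}\prod \|w\|_{2k_j}^{2k_j}$ and expanding $(1-2\lambda S_i/q)^{-1/2}$, I would deduce a Bernstein-type tail inequality of the shape
\[
\Pr\bigl[\bigl|\|Pw\|^2 - k\bigr| > \eps k \bigm| w\bigr] \le 2\exp\!\left(-c\,\min\!\left\{\tfrac{\eps^2 k}{1+\nu^2/q},\;\tfrac{\eps k}{M(\nu,q)}\right\}\right),
\]
where $M(\nu,q)$ quantifies the effective sub-exponential scale of $r_i^2-1$, governed by how large $S_i/q$ can be on a high-probability event of the Bernoulli supports.

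The main obstacle is sharply controlling $M(\nu,q)$: the support budget $B_i := \sum_j b_{i,j}$ is a sum of $d$ independent Bernoulli$(q)$'s concentrated near $qd$ but with sub-exponential upper tails, and on the event $B_i \lesssim qd + t$ one has $S_i/q \lesssim \nu^2(d + t/q)$. Pushing the Chernoff exponent past $\ln(1/\delta) = \Theta(\ln n)$ therefore requires a careful truncation that splits off atypically large supports and absorbs the Poisson-like fluctuations of $B_i$ into a correction of $q$. It is exactly this truncation that produces the two regimes of the final bound: when $\eps\ln n \lesssim \ln(1/\eps)$ the sub-Gaussian term dominates and $q = \Theta(\ln n/d)$ already suffices, whereas for larger $\eps$ the sub-exponential term forces $q$ to be inflated by $\eps\ln n/\ln(1/\eps)$. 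The cap $q \le O(\eps)$ is what keeps the Chernoff parameter $\lambda \sim \eps$ inside the MGF's convergence window $2\lambda S_i/q < 1$. Combining the projection bound with the preconditioning event via a union bound gives overall failure probability at most $\delta$, and a final union bound over the $\binom{n}{2}$ normalized difference vectors completes the proof.
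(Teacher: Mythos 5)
Your reduction (linearity, union bound over pairwise differences, preconditioning via the Rademacher-sum bound on $\|HDx\|_\infty$) matches the paper, and so does the starting observation that, conditional on the Bernoulli support, $(Pw)_i$ is a centered Gaussian with variance $Z_i/q = S_i/q$. But from that point on your route diverges: you propose computing the \emph{unconditional} MGF $\E\bigl[e^{\lambda(r_i^2-1)}\bigr]$ by averaging the conditional $\chi^2$-MGF over the Bernoullis and expanding in the moments of $S_i$. The paper instead takes a two-stage conditioning approach: it fixes the Bernoullis, applies an off-the-shelf sub-exponential tail bound (\Cref{chisquareuppertails1}) with parameters $\|Z\|_2^2$ and $\|Z\|_\infty$, and then separately proves concentration of $\|Z\|_2^2$ and $\|Z\|_\infty$ over the Bernoulli randomness (\Cref{ublemma1,ublemma3,ublemma4}). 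These are genuinely different decompositions, and your one-shot MGF framework is plausible in principle, but it must ultimately carry exactly the same burden as the paper's Lemmas, just in a different guise.

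The gap is that you identify this burden --- ``sharply controlling $M(\nu,q)$'' via ``a careful truncation that splits off atypically large supports'' --- but you do not carry it out, and the sketch you give would not recover the claimed improvement. The entire gain over Ailon--Chazelle rests on the fact that $\sum_i Z_i^2$ concentrates at $O(q^2 k)$ (\Cref{ublemma3}), rather than the trivial $k\cdot(\max_i Z_i)^2$; the paper proves this through a multi-scale dyadic counting of indices $i$ with $Z_i^2 \ge t/2^{j+3}$ (events $E_j$, $E_j'$), which is where the extra factor $\Omega(\min\{\eps^{-1}\ln(1/\eps), \ln n\})$ in the exponent comes from. Your truncation, as described, acts per-coordinate on the support budget $B_i = \sum_j b_{i,j}$; a uniform truncation on $\max_i B_i$ (or $\max_i S_i$) followed by a Chernoff bound is exactly what Ailon--Chazelle do, and it loses a $\sqrt{k}$ factor in the exponent, yielding only $q = O(\ln^2 n / d)$. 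To reach the stated $q$, your moment expansion of $(1-2\lambda S_i/q)^{-1/2}$ would need to exploit the \emph{aggregate} distribution of $(S_1,\ldots,S_k)$ --- not just the worst-case single coordinate --- which is the hard step and is missing. As written, the proposal outlines a framework and honestly flags where the difficulty lies, but it does not establish the theorem.
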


Compared to \cite{Ailon2009TheFJ} which uses $q=O(\ln^{2}(n)/d)$, we notice that even if we ignore the first term in the $\min$-expression, our guarantee on $q$ is $q = O(\max\{\ln(n)/d, \eps \ln^2(n)/(d\ln(1/\eps)))$, i.e. always at least a factor $\Omega(\min\{\ln n,\eps^{-1} \ln(1/\eps)\})$ better. Also, for the case of $\eps = O(\ln(\ln n )/\ln n)$, the $1$-term in the max dominates, and the expression for $q$ simplifies to a clean $q = O(\ln(n)/d)$. Plugging in the value of $q$ from Theorem~\ref{phdtheoremuppermain} (and recalling $k = \Theta(\eps^{-2} \ln n)$), we get that the number of non-zeroes of $P$ is 
\[
kdq = O\left(\min \left\{\eps^{-1} d \ln n, k \ln n \cdot \max\left\{ 1,  \frac{\eps \ln n}{ \ln(1/\eps)}\right\}\right\}\right),
\]
in expectation. Moreover, since this number is larger than $\ln n$, it follows from a Chernoff bound that the number of non-zeroes is strongly concentrated around its mean. 

\paragraph{Lower Bound.}
A natural question to ask now is whether the above $q$ is optimal, or an even more refined analysis can lead to further improvements. To answer this question, we show an example of a unit vector $x$, such that for the mapping $k^{-1/2}PHDx$ to preserve the norm of $x$ to within $(1\pm \eps)$ with probability $1-\delta$, we cannot make $P$ sparser than in Theorem~\ref{phdtheoremuppermain}:
\begin{restatable}{Theorem1}{senstwo}
\label{phdtheorem2main}
For $0<\delta,\varepsilon\leq r$ where $r$ is a universal constant and $k=\eps^{-2}\ln(1/\delta)$, there is a unit vector $x \in \mathbb{R}^d$ for which we must have 
\[
q=\Omega\left(\min \left\{\eps, \frac{\ln(1/\delta)}{d} \cdot \max\left\{ 1,  \frac{\eps \ln(1/\delta)}{ \ln(1/\eps)}\right\}\right\}\right),
\]
for
\begin{gather*}
\frac{1}{\sqrt{k}}\|PHDx\| \in (1\pm\varepsilon),
\end{gather*}
to hold with probability at least $1-\delta$.
\end{restatable}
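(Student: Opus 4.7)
The plan is to exhibit a specific hard unit vector $x$ and analyze the distribution of $\|PHDx\|^2$ very carefully. The natural candidate is $x = e_1$: since the first column of $H$ equals $d^{-1/2}\mathbf{1}$, we have $HDx = D_{1,1}d^{-1/2}\mathbf{1}$, so that, conditionally on the sparsity pattern $(b_{i,j})$, each coordinate $(PHDx)_i$ is a centered Gaussian of variance $B_i/(qd)$, where $B_i := \sum_j b_{i,j} \sim \mathrm{Bin}(d,q)$ is independent across $i$. Introducing independent standard normals $Z_1,\dots,Z_k$, this yields the clean representation
\[
\tfrac{1}{k}\|PHDx\|^2 \;=\; \tfrac{1}{k q d}\sum_{i=1}^k B_i\,Z_i^2,
\]
which is the only object that needs to be analyzed.

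From this representation I would exhibit three anti-concentration events, one for each way the $\min$-expression in the theorem can be tight, and for each show that it occurs with probability at least $\delta$ whenever $q$ drops below the corresponding threshold. To witness the $\Omega(\eps)$ term, consider the event that a single pair $(i,j)$ with $b_{i,j}=1$ has $N_{i,j}^2$ so extreme that $(PHDx)_i^2/k \geq \eps$ on its own; using $\Pr[N^2 \geq t] = \Theta(e^{-t/2}/\sqrt{t})$ and a union-style \emph{lower} bound over the $\approx kdq$ independent Gaussian entries conditioned on $b_{i,j}=1$, calibrating the threshold so this event has probability at least $\delta$ forces $q = \Omega(\eps)$.

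To witness the $\Omega(\ln(1/\delta)/d)$ term, use that $\Pr[B_i=0] = (1-q)^d \approx e^{-qd}$. If $qd \leq c\ln(1/\delta)$ for a sufficiently small constant $c$, then $\Pr[B_i=0]$ is bounded below by some non-trivial $p_0$, and a binomial anti-concentration argument on $\#\{i:B_i=0\}$---for example a Paley--Zygmund style lower bound---shows that with probability at least $\delta$ a constant fraction of the rows contribute zero to the sum, forcing $\|PHDx\|^2/k \leq 1-\eps$. The middle term $\Omega(\eps\ln^2(1/\delta)/(d\ln(1/\eps)))$ is the most delicate and arises by interpolation between these two mechanisms: for a Gaussian threshold $t$, the event ``there exists $i$ with $B_i\geq 1$ and $Z_i^2\geq t$'' occurs with probability roughly $1 - (1-(1-(1-q)^d)e^{-t/2})^k$, and balancing the Gaussian tail factor $e^{-t/2}$ against the combinatorial count $kdq$, while requiring that the resulting row single-handedly push $\|PHDx\|^2$ outside $(1\pm\eps)k$, makes the denominator $\ln(1/\eps)$ surface from solving the resulting transcendental equation---in direct analogy with the Nelson--Nguyen lower bound for sparse Johnson--Lindenstrauss.

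The main obstacle I anticipate is obtaining \emph{lower} bounds on tail probabilities, whereas standard JL analyses only need upper bounds. Such anti-concentration requires sharper tools (Paley--Zygmund, reverse Gaussian tail estimates, explicit binomial computations), and a careful conditioning argument is needed so that a one-sided anomaly in a single row really forces $\|PHDx\|^2/k$ outside $(1\pm\eps)$: one has to show that the remaining $k-1$ rows do not, with substantial conditional probability, compensate for the anomaly, which will entail a separate concentration statement on the bulk of the sum (conditioned on the bad row) to complement the anti-concentration bound on the distinguished row.
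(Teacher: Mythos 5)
Your reduction to the representation $\frac{1}{k}\|PHDx\|^2 = \frac{1}{kqd}\sum_i B_i Z_i^2$ and the plan to prove anti-concentration for a distinguished row while proving concentration for the bulk is the right strategy and matches the paper's at a high level. However, the choice of hard vector $x = e_1$ is a genuine gap: it is not hard enough, and the lower bound you can extract from it is off by a factor of $\ln(1/\delta)$. With $x = e_1$ the vector $u = HDx$ always has all $d$ coordinates of equal magnitude $d^{-1/2}$, which is the ``flattest'' possible unit vector. Running your binomial anti-concentration mechanism on $\sum_i B_i$ with $B_i \sim \mathrm{Bin}(d,q)$, the reverse-Chernoff deviation from the mean is of order $\sqrt{\ln(1/\delta)\,kqd}$, and after scaling by $(kqd)^{-1}$ this forces only $\sqrt{\ln(1/\delta)/(kqd)} \gtrsim \eps$, i.e.\ $q = \Omega(1/d)$ once you substitute $k=\eps^{-2}\ln(1/\delta)$ --- not the claimed $\Omega(\ln(1/\delta)/d)$. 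The same factor $\ln(1/\delta)$ is lost in the middle term, because a single heavy row can only pull $O(\ln(1/\delta))$ out of the Gaussian tail and $O(\ln(1/\delta)/\ln(1/q))$ out of the binomial tail, and those are divided by $qd$ rather than by $qm$ for a smaller $m$.

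The paper instead chooses $x$ to have $2^\ell \approx \log_2(1/\delta)$ equal non-zero entries so that $Dx=x$ with probability $2^{-2^\ell} \geq \delta^{\Theta(1)}$; on that event, Hadamard-orthogonality forces $u=HDx$ to have only $m = d/2^\ell = \Theta(d/\ln(1/\delta))$ non-zero entries, each of magnitude $m^{-1/2}$. Paying a $\delta^{\Theta(1)}$ factor for this conditioning is cheap, while the resulting $u$ is precisely the extremal spiky vector that the upper-bound analysis identifies as hardest, and the effective sample size in $B_i \sim \mathrm{Bin}(m,q)$ shrinks by $\ln(1/\delta)$, which is exactly what restores the missing $\ln(1/\delta)$ in both the $\ln(1/\delta)/d$ term and the middle term. (It also makes your $\Omega(\eps)$ mechanism tractable: ``all of the first row's non-zero Bernoullis succeed'' now has probability $q^m$ with $m$ of order $d/\ln(1/\delta)$, which can be $\delta^{\Theta(1)}$ in the relevant regime, whereas with $m=d$ it is astronomically small.) Concretely, you should replace $e_1$ with this Hadamard-structured hard instance and condition on $Dx=x$ before carrying out the anti-concentration/concentration split; the rest of your outline (Paley--Zygmund / reverse tail bounds for the distinguished row, second-moment concentration for the remaining $k-1$ rows) then matches the paper's two-step proof.
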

For the reader concerned with assuming $k=\eps^{-2} \ln(1/\delta)$, we remark that \Cref{phdtheorem2main} can also be shown with $k=\tilde{c}\eps^{-2}\ln(1/\delta)$ for $\tilde{c}\geq1$, and another universal constant $r'$. 

Comparing \Cref{phdtheorem2main} to \Cref{phdtheoremuppermain}, we observe that the bound on $q$ match exactly when setting $\delta = n^{-\Theta(1)}$. This means that the analysis of Fast JL cannot be improved if one attempts to show that any fixed vector has its norm preserved except with probability $n^{-\Theta(1)}$ and doing a union bound over all pairwise difference vectors. It is however still conceivable that a more refined analysis could somehow argue that there are only very few worst case vectors in any set $X$. However, such an improved analysis remains to be seen for any JL transform (when focusing only on the type of guarantee in~\cref{eq:jl}, whereas net-based arguments have been used e.g. for subspace embeddings~\cite{DBLP:conf/stoc/ClarksonW13}). In this light, \Cref{phdtheorem2main} can be seen either as a hard barrier for Fast JL, or as hinting at a way towards further improvements.

In the next section, we formally prove \Cref{phdtheoremuppermain} and also discuss how our analysis differs from the previous analysis by Ailon and Chazelle and conclude by giving more intuition on where the different terms in the expression for $q$ come from.
\section{Upper Bound}
In this section we give the proof of \Cref{phdtheoremuppermain}. We start by giving the high level ideas of our proof. As in previous works, our analysis follows by arguing that for any fixed unit vector $x$, it holds with probability at least $1-1/n^3$ that $\|k^{-1/2}PHDx\| \in (1 \pm \eps)$. 

First, we observe that $HD$ is a unitary matrix and thus $\|HDx\|=\|x\|=1$ for a unit vector $x$. Moreover, any single coordinate $(HDx)_i$ equals $d^{-1/2} \sum_{j=1}^d \sigma_j x_j$, where the $\sigma_j = D_{j,j}\sign(H_{i,j})$'s are independent Rademacher random variables. Thus in line with the analysis by Ailon and Chazelle~\cite{Ailon2009TheFJ}, we get that any coordinate $(HDx)_i$ is bounded by $O(\sqrt{\ln(n)/d})$ in absolute value with probability $1-1/n^4$. A union bound over all $d \leq n$ coordinates (this assumption is also made in previous work) gives that all coordinates of $HDx$ are bounded by $O(\sqrt{\ln(n)/d})$ with probability $1-1/n^3$.

What remains now is to argue that $k^{-1/2}\|Pu\| \in (1 \pm \eps)$ with high probability when $u = HDx$ is a unit vector with all coordinates bounded by $O(\sqrt{\ln(n)/d})$. 

To simplify the analysis, we will argue that $k^{-1}\|Pu\|^2 \in (1 \pm \eps)$ with probability $1-1/n^3$. This is stronger since $\sqrt{1\pm\eps} \subset (1\pm\eps)$. To understand the distribution of $\|Pu\|^2$ for a fixed $u$, notice that the $i$'th coordinate of $Pu$ is given by $\sum_{j=1}^d q^{-1/2} u_j b_{i,j} N_{i,j}$ by definition of $P$. Let us assume that the Bernoulli random variables $b_{i,j}$ have been fixed. In this case, $(Pu)_i$ is a sum of weighted and independent $\Norm(0,1)$ random variables. Hence $(Pu)_i$ is itself $\Norm(0,q^{-1} \sum_{j=1}^d b_{i,j} u_j^2)$ distributed. Now define $Z_i = \sum_{j=1}^d b_{i,j} u_j^2$ and let $N_1,\dots,N_k$ be independent $\Norm(0,1)$ random variables. We see that, for fixed values of all Bernoullis, $\|Pu\|^2$ is distributed as $\sum_{i=1}^k q^{-1} (\sqrt{Z_i} N_i)^2$, which is equal to $\sum_{i=1}^k q^{-1} Z_i N_i^2$. Our proof now has two steps: 1.) Give a bound on the $Z_i$'s that holds with high probability over the random choice of the Bernoullis $b_{i,j}$, and 2.), use the bound on the $Z_i$'s to argue that $\sum_{i=1}^k q^{-1} Z_i N_i^2$ behaves in a desirable manner.

In order to understand what type of bounds we need on the $Z_i$'s, we start by examining step 2. For this step, we need a tail bound on $\sum_{i=1}^k q^{-1} Z_i N_i^2$. When the $Z_i$'s are fixed, this is a weighted sum of sub-exponential random variables. To analyse it, we use Proposition 5.16 from \cite{DBLP:books/cu/12/VershyninEK12}, which gives upper bounds on the tails of centered sub-exponential random variables:

\begin{restatable}[\cite{DBLP:books/cu/12/VershyninEK12}]{Lemma1}{sensthree}
\label{chisquareuppertails1} Let $Y_{1}, \ldots, Y_{k}$ be independent centred sub-exponential random variables in the sense that there exist a constant $C>0$ such that $\mathbb{E} [\exp \left(C Y_{i}\right)] \leq e$. Then for every $a_{1}, \ldots, a_{k}\in \mathbb{R}$ and  $R= a_{1} Y_{1}+\cdots+a_{k} Y_{k}$ we have 
$$
\begin{array}{ll}
\mathbb{P}\left[ |R| \geq x\right]  \leq 2 \exp \left(-\frac{c x^{2}}{\|a\|_{2}^{2}}\right), & \forall 0 \leq x \leq \frac{\|a\|_{2}^{2}}{\|a\|_{\infty}} \\
\mathbb{P}[ |R| \geq x]  \leq 2 \exp \left(-\frac{c x}{\|a\|_{\infty}}\right), & \forall x \geq \frac{\|a\|_{2}^{2}}{\|a\|_{\infty}}.
\end{array}
$$
where $c>0$ is an absolute constant.
\end{restatable}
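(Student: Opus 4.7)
The plan is a standard Chernoff/Laplace-transform argument in the style of Bernstein's inequality for sub-exponential variables. At a high level, I would (i) promote the one-point hypothesis $\mathbb{E}[\exp(C Y_i)] \leq e$ to a quantitative quadratic bound on the MGF of each $Y_i$ in a neighbourhood of the origin, (ii) multiply out, using independence, to get the analogous bound for $R$, and (iii) optimize the Chernoff exponent in two regimes of $x$, recovering the sub-Gaussian tail for small $x$ and the sub-exponential tail for large $x$.

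For step (i), I would establish that there exist absolute constants $c_0, c_1 > 0$ with
\[
\mathbb{E}[\exp(\lambda Y_i)] \leq \exp(c_0 \lambda^2) \quad \text{for all } |\lambda| \leq c_1 C.
\]
This would come from the elementary Taylor inequality $e^u \leq 1 + u + (u^2/2)\, e^{|u|}$, the centering $\mathbb{E}[Y_i] = 0$ (which kills the linear term), and the hypothesis, which controls $\mathbb{E}[Y_i^2 e^{|\lambda Y_i|}]$ as long as $|\lambda|$ is a small multiple of $C$. With this in hand, for $|\lambda| \leq c_1 C/\|a\|_\infty$ independence yields
\[
\mathbb{E}[\exp(\lambda R)] \;=\; \prod_{i=1}^{k} \mathbb{E}[\exp(\lambda a_i Y_i)] \;\leq\; \exp\bigl(c_0 \lambda^2 \|a\|_2^2\bigr).
\]

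For step (iii), Markov's inequality gives $\mathbb{P}[R \geq x] \leq \exp(-\lambda x + c_0 \lambda^2 \|a\|_2^2)$ for every admissible $\lambda \geq 0$. The unconstrained optimum is $\lambda^\star = x/(2 c_0 \|a\|_2^2)$, which is admissible precisely when $x \leq 2 c_0 c_1 C\, \|a\|_2^2/\|a\|_\infty$; plugging it in delivers the sub-Gaussian tail $\exp(-c x^2/\|a\|_2^2)$ in the first regime. For larger $x$ the optimum falls outside the admissible window, so I would instead choose $\lambda = c_1 C/\|a\|_\infty$ on the boundary; using $x \geq \|a\|_2^2/\|a\|_\infty$, the quadratic term is absorbed into the linear term and one gets the sub-exponential tail $\exp(-c x/\|a\|_\infty)$. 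A symmetric application to $-R$ plus a union bound supplies the factor of $2$, and the bounds are uniform in the $a_i$'s and $Y_i$'s since all constants came from $C$ alone.

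The main obstacle is step (i): passing from a single one-sided inequality to a genuine two-sided quadratic MGF bound in a neighbourhood of $0$, given that the $Y_i$ need not be symmetric. One has to combine the centering with a careful Taylor/truncation estimate, and to be attentive to whether the hypothesis is used in its one- or two-sided form (if only one-sided, one first derives $\mathbb{E}[\exp(C|Y_i|)] \leq 2e$ by applying the hypothesis to both $\pm Y_i$). Once that intermediate lemma is in place, the remaining Chernoff bookkeeping across the two regimes is routine, and the only care needed is in tracking constants so that the final $c$ can be chosen as a single absolute constant.
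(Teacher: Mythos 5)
The paper does not prove this lemma; it is stated and attributed directly to Vershynin's survey (Proposition 5.16 of that reference), so there is no internal proof for your sketch to be compared against. Your proposal is essentially the standard Bernstein-type proof of that proposition: upgrade the hypothesis to a quadratic MGF bound $\mathbb{E}[\exp(\lambda Y_i)]\leq\exp(c_0\lambda^2)$ for $|\lambda|\leq c_1 C$, multiply over $i$ by independence, and optimize the Chernoff exponent in the two regimes, which is exactly right and matches the intended source.

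One point worth making explicit, which you already flag as the main obstacle: as written, the hypothesis $\mathbb{E}[\exp(C Y_i)]\leq e$ is one-sided, and together with $\mathbb{E}[Y_i]=0$ it controls only the upper tail of $Y_i$; a heavy lower tail is not ruled out, so step (i) cannot be completed from the literal hypothesis alone. What is really needed is the two-sided version (equivalently a bound on $\mathbb{E}[\exp(C|Y_i|)]$), which is what ``sub-exponential'' means in Vershynin and what the paper actually verifies for its application: it shows $\mathbb{E}[\exp(t(N^2-1))]\leq e$ for all $|t|\leq 0.3$, i.e.\ for both signs of $t$. Provided you read the hypothesis that way, your argument goes through, the Taylor bound $e^u\leq 1+u+\tfrac{u^2}{2}e^{|u|}$ plus the centering yields the quadratic MGF estimate, and the two-regime Chernoff optimization (interior optimum for small $x$, boundary $\lambda=c_1 C/\|a\|_\infty$ for large $x$) is correct.
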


Note that for a random variable $N \sim \Norm(0,1)$, we have that the centred square (i.e. $N^2-1$) is a sub-exponential random variable in the spirit of \Cref{chisquareuppertails1}. This can be seen by $|t|\leq 0.3$ we have that
\begin{gather*}
\E\left[ \exp\left(t(N^2-1) \right)\right] \leq \E\left [\exp(tN^2)\right ]=\left(1-2t\right)^{-\frac{1}{2}}=\exp\left(-\frac{\ln(1-2t)}{2}\right)\leq \exp\left(\frac{(-2t)+(-2t)^2}{2}\right)\leq e,
\end{gather*}
where the first equality follows by the $\chi^2$-distribution's moment generating function and the second to last inequality follows by $-\ln(1+x)\leq x+x^2$ for $x>-0.68$. So for $C=0.3$  we can apply \Cref{chisquareuppertails1} to $\sum_{i=1}^k q^{-1} Z_i N_i^2$ by rewriting as $\sum_{i=1}^k q^{-1} Z_i (N_i^2-1) + \sum_{i=1}^k q^{-1} Z_i$. The latter term is constant when the Bernouillis have been fixed and thus we may use \Cref{chisquareuppertails1}.

Examining \Cref{chisquareuppertails1}, we see that we need two bounds on the $Z_i$'s, one on $\sum_i Z_i^2$ and one on $\max_i |Z_i|$. Thus for step 1., we focus on giving bounds on these two quantities. For this, we will use that $u = HDx$ has all coordinates bounded in absolute value by $O(\sqrt{\ln(n)/d})$ as observed earlier. We then argue that the hardest such vector $u$, is one in which precisely $m$ coordinates all take the value $m^{-1/2} = O(\sqrt{\ln(n)/d})$ and the remaining coordinates of $u$ are all $0$. This is also the hard vector analysed by Ailon and Chazelle. In their analysis, they simply bound $\sum_{i=1}^k Z_i^2$ by $k (\max_i |Z_i|)^2$ and this is where we improve over their work. Giving a tight analysis of $\sum_i Z_i^2$ is far from trivial and takes up the majority of \Cref{sec:upperin}.

For now, we merely state the concentration inequalities we need and return to proving them in \Cref{sec:upperin}. For bounding $\max_i Z_i$, we prove the following lemma:

\begin{restatable}{Lemma1}{sensfour}
\label{ublemma1}
For $i=1,\ldots,k$ let $Z_i=\sum_{j=1}^d u_j^2 b_{i,j}$ where the $b_{i,j}$'s are independent Bernoulli random variables with success probability $q$ and the $u_j^2$'s are positive real numbers bounded by $1/m$ and summing to 1. We then have for $\alpha\leq 1/4$ that
\begin{gather*}
\ppp\left[ \max_{i=1,\ldots,k} Z_i>\frac{q}{2\alpha }\right] \leq k\exp\left(-\frac{mq\ln(1/\alpha)}{32\alpha}\right).
\end{gather*}
\end{restatable}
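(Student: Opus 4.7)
The plan is to union-bound over $i=1,\ldots,k$ and reduce to showing that for a single $Z = \sum_{j=1}^d u_j^2 b_j$ with independent $\mathrm{Ber}(q)$ variables $b_j$ we have
\[
\Pr\!\left[Z > \frac{q}{2\alpha}\right] \leq \exp\!\left(-\frac{mq \ln(1/\alpha)}{32\alpha}\right).
\]
Since $\mathbb{E}[Z]=q$, this is a tail bound at $1/(2\alpha)$ times the mean, and the natural tool is a Chernoff bound. The task is to exploit both the boundedness $u_j^2 \leq 1/m$ and the normalisation $\sum_j u_j^2 = 1$ in controlling the moment generating function.

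First, by independence and $1 + x \leq e^x$,
\[
\mathbb{E}[e^{\lambda Z}] = \prod_{j=1}^d \bigl(1 + q(e^{\lambda u_j^2} - 1)\bigr) \leq \exp\!\left(q \sum_{j=1}^d (e^{\lambda u_j^2} - 1)\right).
\]
The key step is a chord inequality: since $x \mapsto e^{\lambda x} - 1$ is convex on $[0,1/m]$ and vanishes at $0$, for $u_j^2 \in [0,1/m]$ one has $e^{\lambda u_j^2} - 1 \leq m u_j^2 (e^{\lambda/m} - 1)$, and summing with $\sum_j u_j^2 = 1$ gives $\sum_j (e^{\lambda u_j^2}-1) \leq m(e^{\lambda/m}-1)$. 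This is exactly the reduction to the worst-case vector with $m$ equal coordinates of value $1/\sqrt{m}$, which is the extremal configuration highlighted in the proof sketch above.

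Writing $s = \lambda/m$, the Chernoff bound becomes $\Pr[Z > t] \leq \exp(-smt + qm(e^s-1))$, and the minimising choice $s = \ln(t/q)$ at $t = q/(2\alpha)$ yields
\[
\Pr\!\left[Z > \frac{q}{2\alpha}\right] \leq \exp\!\left(-\frac{mq(\ln(1/(2\alpha)) - 1 + 2\alpha)}{2\alpha}\right).
\]
After multiplying by $k$ to accommodate the union bound, what remains is the elementary inequality
\[
16\bigl(\ln(1/(2\alpha)) - 1 + 2\alpha\bigr) \geq \ln(1/\alpha), \qquad \alpha \in (0,1/4].
\]
I would verify this by considering the difference $g(\alpha)$ of the two sides: its derivative is $-15/\alpha + 32 < 0$ on $(0,1/4]$, its value at $\alpha = 1/4$ is $14 \ln 2 - 8 > 0$, and $g(\alpha) \to +\infty$ as $\alpha \to 0$, so the inequality holds throughout the interval.

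The main obstacle is not conceptual but bookkeeping: the raw Chernoff exponent $\ln(1/(2\alpha)) - 1$ is negative for $\alpha$ near $1/4$, so one must retain the leftover $-mq$ term arising from $e^s - 1$ (rather than discarding it as $e^s$) in order to obtain a genuinely negative exponent at the boundary. The constant $32$ in the claimed bound appears to be tuned precisely to absorb this boundary loss while still yielding a clean statement of the form $\exp(-\Omega(mq \ln(1/\alpha)/\alpha))$.
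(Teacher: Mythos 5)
Your proof is correct and takes essentially the same route as the paper: a union bound over $i$, a Chernoff bound with $\lambda = m\ln(t/q)$ at $t = q/(2\alpha)$, a convexity reduction of the MGF to the extremal vector with $m$ coordinates equal to $1/m$, and the same elementary inequality $16\bigl(\ln(1/(2\alpha)) - 1 + 2\alpha\bigr) \ge \ln(1/\alpha)$ for $\alpha \le 1/4$. The only cosmetic difference is that you apply $1+x \le e^x$ first and then the pointwise chord bound $e^{\lambda u_j^2}-1 \le m\,u_j^2(e^{\lambda/m}-1)$, whereas the paper first observes that the full MGF is convex on the polytope $\{x \ge 0 : x_i \le 1/m,\ \sum_i x_i = 1\}$ and evaluates at a vertex before applying $1+x \le e^x$; both orderings land on the identical intermediate bound $\mathbb{E}[e^{\lambda Z}] \le \exp\bigl(qm(e^{\lambda/m}-1)\bigr)$ and the same final algebra.
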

And to bound $\sum_i Z_i^2$, we show the following:
  \begin{restatable}{Lemma1}{sensfive}
  \label{ublemma3}
    Let $Z_1,\dots,Z_k$ be i.i.d. random variables distributed as the $Z_i$'s in \Cref{ublemma1}. Then for any $t \geq 64\cdot24 e^3 q^2 k$ and $q \geq 8/(em)$, we have:
    \[
      \Pr\left[\sum_{i=1}^k Z_i^2 >  t\right] < 14\exp\left(-\frac{m\sqrt{t}\ln(\sqrt{t/2^{3}}/(eq))}{200\cdot44\cdot2^{\frac{5}{2}}} \right).
      \]
  \end{restatable}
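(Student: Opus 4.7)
The plan is to combine a tail bound on $\max_i Z_i$ (from \Cref{ublemma1}) with a Bernstein tail bound on $\sum_i Z_i$, via the elementary inequality $\sum_i Z_i^2 \leq (\max_i Z_i) \cdot (\sum_i Z_i)$. If one picks $T, S$ with $TS \leq t$ and shows that $\max_i Z_i \leq T$ and $\sum_i Z_i \leq S$ both hold with high probability, the desired bound on $\sum_i Z_i^2$ follows by a union bound.

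For the max, I would extract a single-variable tail from \Cref{ublemma1}: specialising that lemma to $k=1$ and setting $\alpha = q/(2T)$ (legal as soon as $T \geq 2q$, so that $\alpha \leq 1/4$) gives $\Pr[Z_i > T] \leq \exp(-mT\ln(2T/q)/16)$, and a union bound over $i \in [k]$ yields $\Pr[\max_i Z_i > T] \leq k\exp(-mT\ln(2T/q)/16)$. The exponent already has the target shape $m\sqrt{t}\ln(\sqrt{t}/q)$ whenever $T$ is of order $\sqrt{t}$. For the sum, I would write $\sum_i Z_i = \sum_{i,j} u_j^2 b_{i,j}$ as a sum of $kd$ independent nonnegative terms, each in $[0, 1/m]$, with mean $kq$ and variance at most $kq/m$ (using $\sum_j u_j^4 \leq 1/m$), and apply Bernstein's inequality to obtain
\[
\Pr\!\left[\sum_i Z_i \geq kq + s\right] \;\leq\; \exp\!\left(-\frac{m s^2}{2(kq + s/3)}\right),
\]
so in particular $\sum_i Z_i \leq 2kq$ fails only with probability $\exp(-\Omega(mkq))$.

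To combine the two I would pick $(T,S)$ with $TS = t$ by a case split on whether $\sqrt{t} \geq 2kq$: in the first case take $T = S = \sqrt{t}$; in the second take $S = 2kq$ and $T = t/(2kq)$. The hypothesis $t \geq 64 \cdot 24\, e^3 k q^2$ guarantees $T \geq 2q$ in both cases, so that the single-variable tail is applicable, and the hypothesis $q \geq 8/(em)$ keeps the log factor $\ln(\sqrt{t/8}/(eq))$ in the target strictly positive and absorbs the $\ln k$ overhead from the max-side union bound.

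The main obstacle I expect is the regime $\sqrt{t} < 2kq$: there the natural choice $T = t/(2kq)$ is smaller than $\sqrt{t}$, so the max-tail exponent $mT\ln(T/q)$ is only comparable to, rather than strictly dominating, the target $m\sqrt{t}\ln(\sqrt{t/8}/(eq))$. Recovering the exact shape of the log factor in this regime likely requires either replacing Bernstein with Bennett's inequality on $\sum_i Z_i$ (which naturally produces the missing log factor), or a finer dyadic decomposition of the $Z_i$'s across scales together with Chernoff on the number of $Z_i$'s at each scale. The relatively large universal constants $200 \cdot 44 \cdot 2^{5/2}$ and the prefactor $14$ in the stated bound are consistent with such a multi-case argument in which a handful of failure events are union-bounded and the constants accumulate through several applications of concentration inequalities.
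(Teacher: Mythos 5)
Your central decomposition $\sum_i Z_i^2 \leq (\max_i Z_i)\cdot(\sum_i Z_i)$ is, up to a constant factor, exactly the Ailon--Chazelle argument that the lemma is designed to improve upon, and it provably loses a factor $\Theta(\sqrt{k})$ in the exponent in the very regime where the lemma is invoked. Concretely, Theorem~\ref{phdtheoremuppermain} uses Lemma~\ref{ublemma3} with $t = 64\cdot 24\,e^3 q^2 k$, so $\sqrt{t} = \Theta(q\sqrt{k})$ is far below $2kq$ and you are in your second case with $T = t/(2kq) = \Theta(q)$. Then Lemma~\ref{ublemma1} with $q/(2\alpha)=T$ gives an exponent of order $mT\ln(T/q) = \Theta(mq)$, while the lemma's claimed exponent is
\[
\frac{m\sqrt{t}\ln\left(\sqrt{t/8}/(eq)\right)}{200\cdot 44\cdot 2^{5/2}} = \Theta\!\left(mq\sqrt{k}\ln k\right).
\]
You characterize this as ``recovering the exact shape of the log factor,'' but it is a polynomial-in-$k$ gap, not a log-factor gap. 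This is not a matter of fine-tuning constants; the product decomposition simply cannot see that large values of $\sum_i Z_i^2$ can arise from many moderate $Z_i$'s, not only from one large $Z_i$ times a large total.

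A related issue: even in your first case $\sqrt{t}\geq 2kq$, Bernstein's inequality on $\sum_i Z_i$ (at threshold $\sqrt{t}$) yields an exponent $\Theta(m\sqrt{t})$ without the $\ln(\sqrt{t}/q)$ factor, and Bennett's inequality on $\sum_i Z_i$ gives $\Theta\bigl(m\sqrt{t}\ln(\sqrt{t}/(kq))\bigr)$, which near the case boundary $\sqrt{t}\approx 2kq$ is again just $\Theta(m\sqrt{t})$. Both are smaller than the claimed exponent by a factor that grows with $k$. So the sum-side bound is not safe either; the ``$\ln$'' in the target cannot be manufactured purely from concentration of $\sum_i Z_i$, because the target's $\ln$ tracks the Poisson-like tail of a \emph{single} $Z_i$, not the Gaussian/Bennett tail of the aggregate.

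The paper's proof instead does a peeling/dyadic decomposition across scales. It introduces, for $j=0,\ldots,\log_2(k)/2$, events $E_j$ (at least $2^j/(j+1)^2$ indices $i$ with $Z_i^2 \geq t/2^{j+3}$) and dual events $E_j'$ (at least $k/(2^j(j+1)^2)$ indices with $Z_i^2 \geq t\,2^{j-3}/k$), shows that $\sum_i Z_i^2 > t$ forces one of these to occur, and bounds each $\Pr[E_j], \Pr[E_j']$ by a union bound over index subsets together with the single-variable Poisson-type tail of Lemma~\ref{ublemma2a}. This is precisely the ``finer dyadic decomposition $\ldots$ together with Chernoff on the number of $Z_i$'s at each scale'' that you mention in passing as a possible patch; it is not an optional refinement but the main idea, and the entire $\sqrt{k}$ gain in the exponent comes from interpolating between the two scale regimes via the paired families $E_j$ and $E_j'$. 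To fix your proposal you would need to abandon the $(\max)\cdot(\sum)$ split altogether and develop that decomposition.
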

Before continuing, let us briefly argue that \Cref{ublemma3} is tighter than using the approach of Ailon and Chazelle where $\sum_i Z_i^2$ is merely bounded as $k (\max_i Z_i)^2$. For large enough $t$, \Cref{ublemma3} roughly gives that $\ppp[ \sum_i Z_i^2 > t]  < \exp(-m \sqrt{t} \ln(\sqrt{t}/q))$. If we instead bounded $\sum_i Z_i^2$ by $k (\max_i Z_i)^2$, then for any $t$, their approach would need $\max_i Z_i \leq \sqrt{t/k}$. Choosing $\alpha$ such that $\sqrt{t/k} = q/(2\alpha)$ and examining \Cref{ublemma1}, we would roughly get $\ppp[ \sum_i Z_i^2 > t]  < k\exp(-(m (\sqrt{t/k}) \ln((\sqrt{t/k})/q)))$. We would thus lose almost a factor $\sqrt{k}$ in the exponent. This is basically where our improvement comes from.

Unfortunately, \Cref{ublemma3} does not capture all tradeoffs between $\eps, d$ and $n$ that we need. Thus we also need the following alternative to \Cref{ublemma3}:

 \begin{restatable}{Lemma1}{senssix}
 \label{ublemma4}
 Let $Z_1,\dots,Z_k$ be i.i.d. random variables distributed as the $Z_i$'s in \Cref{ublemma1}, with $m=\cubiiii d/\ln n$ and the embedding dimension $k= \cubiii \eps^{-2} \ln n$ and $q =\cubii \varepsilon$, where $\cubii\geq 1/\cubiiii$. For $\varepsilon\leq \cubii^{-1}/(e4)$ and $t\geq \cubi \ln n$, we have that 
    \[
      \Pr\left[\sum_{i=1}^k Z_i^2 >  t\right] \leq 3n^{-4\cubiiiii}.
    \]
 \end{restatable}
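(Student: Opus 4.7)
The plan is to apply Bernstein's inequality directly to the independent sum $S := \sum_{i=1}^{k} Z_i^2$ of $[0,1]$-bounded random variables, after estimating its mean $\mu := \E[S]$ and the sum-of-variances $V := \sum_i \mathrm{Var}(Z_i^2)$.

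For $\mu$, one has $\E[Z_i^2] = q^2 + q(1-q)\sum_j u_j^4 \le q^2 + q/m$, using $\sum_j u_j^4 \le (\max_j u_j^2)\sum_j u_j^2 \le 1/m$. Under the lemma's parameters ($k = c_1\eps^{-2}\ln n$, $q = c_1\eps$, $m = c_2 d/\ln n$), the leading contribution $kq^2 = c_1^3\ln n$ dominates the cross term $kq/m = c_1^2\ln^2 n/(c_2 d\eps)$ in the implicit JL regime $d\ge k$: indeed, $d\ge k$ yields $kq/m\le c_1\eps\ln n/c_2\le c_1^3\ln n/(4e)$ after plugging in $\eps\le 1/(4ec_1)$ and $c_1 c_2\ge 1$. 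Hence $\mu \le (1+O(1/c_1^2))\,c_1^3\ln n \le t/(2e^8)$. For $V$, since $Z_i\in[0,1]$ we have $\mathrm{Var}(Z_i^2)\le \E[Z_i^4]\le \E[Z_i^2]$, so $V\le \mu$.

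Bernstein's inequality for sums of independent $[0,1]$-valued variables then gives
\[
\Pr[S > t] \le \exp\!\Big(-\frac{(t-\mu)^2}{2V + 2(t-\mu)/3}\Big).
\]
Since $V\le \mu \le t/(2e^8)$, one has $t-\mu\ge t/2\gg V$, so the $2(t-\mu)/3$ term dominates the denominator and the exponent is $-\Theta(t)$. For $t\ge 2c_1^3 e^8\ln n$, this yields $\Pr[S>t]\le n^{-\Omega(c_1^3 e^8)}$, far below the target $n^{-4c_1}$.

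The main obstacle is establishing the mean bound cleanly across all relevant parameter ranges: the cross term $kq/m$ must be dominated by $kq^2$, which is immediate from the JL constraint $d\ge k$ but requires a short bookkeeping computation using the hypotheses $\eps\le 1/(4ec_1)$ and $c_1 c_2\ge 1$. The constant $3$ in $3n^{-4c_1}$ leaves slack for residual union-bound contributions (e.g., if a peeling of the sum into typical and atypical terms is needed to handle a borderline parameter regime where the naive Bernstein step is not immediately sharp).
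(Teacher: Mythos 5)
Your proposal is correct and takes a genuinely different, and notably simpler, route than the paper. The paper handles \Cref{ublemma4} by the same peeling machinery it uses for \Cref{ublemma3}: it defines a logarithmic family of events $E_j$ counting how many of the $Z_i^2$ exceed geometrically decreasing thresholds $t/2^{j+1}$, bounds each $\Pr[E_j]$ via a union bound over index sets together with the Chernoff-type bound on a single $Z_i$ (\Cref{ublemma2a}), and sums a geometric series. You instead observe that in this parameter regime ($q=c_1\eps$) the second moment is small, $\E[Z_i^2]\leq q^2+q/m$, so $\mu:=k\E[Z_i^2]=c_1^3\ln n+kq/m$, and the cross term $kq/m$ is a lower-order correction once you feed in $d\geq k$, $\eps\leq 1/(4ec_1)$ and $c_1c_2\geq 1$; then $Z_i^2\in[0,1]$ gives $\mathrm{Var}(Z_i^2)\leq\E[Z_i^4]\leq\E[Z_i^2]$ and a one-shot Bernstein bound finishes the job with an exponent of order $c_1^3e^8\ln n$, far more than the required $4c_1\ln n$. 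This is a cleaner argument; the paper's peeling decomposition is genuinely needed in \Cref{ublemma3} (where $q$ can be as small as $\Theta(1/m)$ and the tail is governed by a few exceptionally large $Z_i$'s rather than by a bounded-variance bulk), but it is overkill in the regime of \Cref{ublemma4}.

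One tiny imprecision worth flagging: you assert $\mu\leq t/(2e^8)$, but the clean bound you actually derive is $\mu\leq\bigl(1+1/(4ec_1^2)\bigr)c_1^3\ln n$, whereas $t/(2e^8)\geq c_1^3\ln n$, so $\mu$ could in principle exceed $t/(2e^8)$ by a factor close to $1$. This has no bearing on the conclusion since all you need is $\mu\leq t/100$ (say), which holds with enormous slack, so the Bernstein denominator is dominated by the $s/3$ term and the exponent is $\Theta(t)$. You should also note explicitly that $c_1\geq 1$ (the paper later takes $c_1=\cubiii$ sufficiently large) when you pass from $c_1\ln n/(4e)$ to $c_1^3\ln n/(4e)$; alternatively, just keep the bound $\mu\leq(c_1^3+c_1/(4e))\ln n$, which already suffices against $t\geq 2c_1^3e^8\ln n$ regardless of whether $c_1\geq 1$.
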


With the central lemmas laid out, we now give the full proof details by following the above proof outline. The proofs of \Cref{ublemma1}, \Cref{ublemma3} and \Cref{ublemma4} can be found in \Cref{sec:upperin}.

\paragraph{Proof of \Cref{phdtheoremuppermain}.}
\begin{proof}
 Let further $m=\cubiiii d/\ln n$ for a small enough constant $\cubiiii$.
Let the embedding dimension $k=\cubiii \eps^{-2} \ln n$, with $\cubiii \geq 1/\cubiiii$. Let the success probabilities of the binomial random variables $b_{i,j}$ in $P$ be 
\[
q=\max\left\{\cubiiiii/m,\cubii\varepsilon\min \left\{1,\ln \left(n\right)/\left(m\ln\left (1/\varepsilon\right)\right)\right\} \right\}.
\]
 Assume for now that $u$ is a vector in $\mathbb{R}^d$ such that $u_i^2\leq 1/m$ for all $i=1,\ldots,d$ and $\norm{u}=1$. By construction of $P$ and the $2$-stability of the standard normal distribution we have that 
\begin{gather*}
\norm{Pu}\stackrel{}{=}\sum_{i=1}^k \left(\sum_{j=1}^{d} \sqrt{1/q}u_j b_{i,j} N_{i,j}\right)^2\stackrel{d}{=}\sum_{i=1}^\clbiiiii \frac{1}{q}Z_i  N_{i}^2,
\end{gather*}
where $Z_i=\sum_{j=1}^{d}  u_j^2b_{i,j}$ and  $N_i$'s are independent standard normal random variables. 
We first prove a bound on $\sum_{i=1}^k Z_i$. For this, notice that $\sum_{i=1}^k Z_i$ is a sum of independent random variables, where each $Z_i$ is a sum of independent random variables with values between $[0,1/m]$. Furthermore, we have $\E\left[Z_i\right]=q$, implying that $\E [\sum_{i=1}^k mZ_i]=qmk$. We therefore get by a Chernoff bound that

\begin{gather*}
\ppp\left[\sum_{i=1}^k Z_i\not\in (1\pm\varepsilon/4)qk\right]=\ppp\left[ \sum_{i=1}^k mZ_i\not\in (1\pm\varepsilon/4)qmk\right] \leq 2\exp\left(-\frac{qmk\varepsilon^2}{48}\right)\leq 2n^{-\cubiii^2 /48},
\end{gather*}
where the last inequality follows by $q\geq \cubiiiii/m$ and $k=\cubiiiii \eps^{-2} \ln n$, so $qmk\varepsilon^2\geq\cubii^2\ln n$. Thus we have $\sum_{i=1}^k Z_i\in (1\pm\varepsilon/4)qk$ with probability at least $1-2n^{-\cubiii^2/48}$.

In the following we do a case analysis based on the value of $q$. Our goal is to show that $\|Pu\|^2 = \sum_i Z_i N_i^2/q \in (1 \pm \eps/4)k$ with high probability (conditioned on $u$ having bounded coordinates as remarked earlier).

\subsubsection*{Cases $q=\cubiiiii/m$ and $q= \cubii\varepsilon\ln(n)/(m\ln(1/\varepsilon))$.}
We treat the cases $q=\cubiiiii/m$ and $q= \cubii\varepsilon\ln(n)/(m\ln(1/\varepsilon))$ in a similar manner. In both these cases, we have $q\geq  \cubii\varepsilon\ln(n)/(m\ln(1/\varepsilon))$ (due to the $\max$ in the definition of $q$). Thus \Cref{ublemma1}, with $\alpha=\eps$ now implies that $\|Z\|_\infty=\max_{i=1,\ldots,k} Z_i \leq q/(2\varepsilon) $ with probability at least $1-k\exp(-(mq\ln(1/\varepsilon))/(32\varepsilon))\geq 1-n^{-\cubii/32 + 1}$ (which follows by $mq\ln (1/\varepsilon)/\varepsilon\geq \cubii \ln n$ and $k \leq n$). 

Using $q \geq c_1/m$ we may invoke \Cref{ublemma3}. Combining this with $q\geq \cubii\varepsilon\ln(n)/(m\ln(1/\varepsilon))$ we conclude that $\|Z\|^2=\sum_{i=1}^k Z_i^2\leq 64\cdot24 \cdot e^3q^2k$ with probability at least  

\begin{gather*}1-14\exp\left(-\frac{m\sqrt{64\cdot24 \cdot e^3q^2k}\ln(\sqrt{(64\cdot24 \cdot e^3q^2k)/(2^{3})}/(eq))}{200\cdot44\cdot2^{\frac{5}{2}}} \right)\\
\geq 1- 14\exp\left(-\frac{(\cubii\ln (n))^{3/2}\ln\left(22\sqrt{k}\right)}{300\ln\left(1/\eps \right) } \right)\\
\geq 1-14n^{-\cubii^{3/2}/300},
\end{gather*}
where in the first inequality we used that $(\sqrt{64\cdot24e^3})/(200\cdot44\cdot 2^{5/2})\geq 1/300$, $\sqrt{(64\cdot24e)/2^3}\geq 22$ and $mq\sqrt{k}\geq
 (c_1\varepsilon\ln(n)/\ln(1/\varepsilon))\sqrt{c_1\ln (n)/\varepsilon^2}
 =(\cubii \ln n)^{3/2}/\ln\left(1/\eps \right)$ and in the second inequality that $\ln (22\sqrt{k})/(\ln\left(1/\eps\right))=\ln(22\sqrt{c_1\ln (n)/\varepsilon^2})/(\ln\left(1/\eps \right))\geq 1$.

 Hence in these cases we have that $\sum_{i=1}^k Z_i\in (1\pm\varepsilon/4)qk$, $\|Z\|_\infty=\max_{i=1,\ldots,k} Z_i \leq \frac{q}{2\varepsilon} $ and $\|Z\|^2=\sum_{i=1}^k Z_i^2\leq 64\cdot 24 e^3q^2k$ with probability at least $1-17n^{-\cubiii/300+1}$. We call such outcomes of the variables $Z_i$ \emph{desirable}.

 We now notice that for desirable outcomes of the $Z_i$'s, we have from \Cref{chisquareuppertails1} that if $(\varepsilon/4) \sum_{i=1}^k Z_i \geq \|Z\|^2/\|Z\|_\infty$, then (with probability over the $N_i$'s)
\begin{gather*}
\ppp\left[  \sum_{i=1}^k \frac{1}{q}N_i^2Z_i\not\in (1\pm\varepsilon/4)\sum_{i=1}^k \frac{1}{q}Z_i\right] \leq 2\exp\left(-\frac{c(\varepsilon/4)\sum_{i=1}^k Z_i}{\|Z\|_\infty}\right)\leq 2\exp\left(-\frac{c\varepsilon qk/8}{q/(2\varepsilon) }\right)= 2n^{-c \cubiii/4},
\end{gather*}
where we used that $k=\cubiii \eps^{-2} \ln n$. On the other hand, if $(\varepsilon/4) \sum_{i=1}^k Z_i \leq \|Z\|^2/\|Z\|_\infty$, then by \Cref{chisquareuppertails1} (and using $\eps < 1$):
\begin{gather*}
\ppp\left[ \sum_{i=1}^k \frac{1}{q}N_i^2Z_i\not\in (1\pm\varepsilon/4)\sum_{i=1}^k \frac{1}{q}Z_i\right] \leq 2\exp\left(-\frac{c((\varepsilon/4)\sum_{i=1}^k Z_i)^2}{\|Z\|^2}\right)\\
\leq 2\exp\left(-\frac{c\varepsilon^2q^2k^2}{16 \cdot 64\cdot96e^3q^2k}\right)= 2n^{-c \cubiii/(16 \cdot 64\cdot 96 e^3)}.
\end{gather*}
By this we conclude that for desirable outcomes of the $Z_i$'s, for the constant $r_1:=c/(16 \cdot 64\cdot 96 e^3)$, it holds (with probability over the $N_i$'s):
\begin{gather*}
1-2n^{-r_1 \cubiii} \leq\ppp\left[  \sum_{i=1}^k \frac{1}{q}N_i^2Z_i\in (1\pm\varepsilon/4)\sum_{i=1}^k \frac{1}{q}Z_i\right] \leq\ppp\left [ \sum_{i=1}^k \frac{1}{q}N_i^2Z_i\in (1\pm \varepsilon)k\right ], 
\end{gather*}
where in the last inequality we used that for desirable outcomes of the $Z_i$'s it holds $\sum_{i=1}^k Z_i\in (1\pm\varepsilon/4)qk$.

Since the $Z_i$'s and $N_i$'s are independent, it follows from the above that with probability at least $(1-2n^{-r_1\cubiii})\cdot(1-17n^{-\cubii/300+1})\geq1-34n^{-\min\{r_1,1/300\}\cubiii+1}$ it holds that $\sum_{i=1}^k N_i^2Z_i/q\in (1\pm \varepsilon)k$.

\subsubsection*{Case $q=\cubii\varepsilon$.}
In the case that $q=\cubii\varepsilon$ (we assume that $\varepsilon<\cubii^{-1}/(4e)$), it follows from \Cref{ublemma4} with $t= 2\cubii^3e^8\ln n$ that $\|Z\|^2=\sum_{i=1}^k Z_i^2\leq 2\cubii^3e^8 \ln n$ with probability at least $ 1-3n^{-4\cubiiiii}$. Thus we conclude that with probability at least $1-5n^{-\cubii/48}$ we have $\sum_{i=1}^k Z_i\in (1\pm\varepsilon/4)qk$ and $\|Z\|^2=\sum_{i=1}^k Z_i^2\leq 2\cubii^3e^8 \ln n$. In this part of the case analysis, we refer to such outcomes as \emph{desirable}.

Now for desirable outcomes of the $Z_i$'s, we get again using \Cref{chisquareuppertails1} that if $(\varepsilon/4) \sum_{i=1}^k Z_i \geq \|Z\|^2/\|Z\|_\infty$, then with probability over the $N_i$'s, and using the trivial bound that the $Z_i$'s are at most 1, it follows that
\begin{gather*}
\ppp\left[  \sum_{i=1}^k \frac{1}{q}N_i^2Z_i\not\in (1\pm\varepsilon/4)\sum_{i=1}^k \frac{1}{q}Z_i\right] \leq 2\exp\left(-\frac{c(\varepsilon/4)\sum_{i=1}^k Z_i}{\|Z\|_\infty}\right)\leq 2\exp\left(-\frac{c\varepsilon qk}{8}\right)= 2n^{-c \cubiii^2/8},
\end{gather*}
where the last inequality follows from $\sum_{i=1}^k Z_i \geq (1-\eps/4)qk \geq qk/2$ and the equality follows from $\varepsilon qk=c_1^2\ln n$. In the case of $(\varepsilon/4) \sum_{i=1}^k Z_i \leq \|Z\|^2/\|Z\|_\infty$, \Cref{chisquareuppertails1} yields:
\begin{gather*}
\ppp\left[  \sum_{i=1}^k \frac{1}{q}N_i^2Z_i\not\in (1\pm\varepsilon/4)\sum_{i=1}^k \frac{1}{q}Z_i\right] \leq 2\exp\left(-\frac{c((\varepsilon/4)\sum_{i=1}^k Z_i)^2}{\|Z\|^2}\right)\\ \leq 2\exp\left(-\frac{c\varepsilon^2q^2k^2}{128\cubii^3e^8\ln n}\right)\leq 2n^{-c \cubiii/(128e^8)},
\end{gather*}
where the last inequality follows from $\varepsilon^2q^2k^2/\ln n=\cubii^4\varepsilon^4\ln^2(n) /(\varepsilon^4\ln n)\geq \cubii^4\ln\ n$.

Now, let $r_2=c/(128e^8)$. From the above, we conclude that for desirable outcomes of the $Z_i$'s, with probability (over the $N_i$'s):
\begin{gather*}
1-2n^{r_2 \cubiii} \leq\ppp\left[  \sum_{i=1}^k \frac{1}{q}N_i^2Z_i\in (1\pm\varepsilon/4)\sum_{i=1}^k \frac{1}{q}Z_i)\right] \leq\ppp\left[  \sum_{i=1}^k \frac{1}{q}N_i^2Z_i\in (1\pm \varepsilon)k\right] ,
\end{gather*}
and again using the independence of the $Z_i$'s and $N_i$'s, we get that $\sum_{i=1}^k N_i^2Z_i/q\in (1\pm \varepsilon)k$ holds with probability at least $(1-2n^{-r_2 \cubiii})(1-5n^{-\cubiii/48})\geq 1-10n^{-\min\{r_2,1/48\}\cubiii}$.

\subsubsection*{Conclusion.} 
In the above we had assumed that the vector $u$ had entries $u_i^2\leq 1/m$ and had unit length. By a similar argument to \cite{Ailon2009TheFJ} equation (4) page 308, we get that with probability at least $1-1/(2n^3)$, it holds that $u_i^2=(HDx)_i^2\leq \ln (n)/(\cubiiii d)=1/m$ for all $i=1,\ldots,d$ simultaneously, when $\cubiiii$ is small enough (assuming $d \leq n$ such that $\ln d = O(\ln n)$), thus we have $u_i^2\leq 1/m$ as required.

From the above, we see that in all cases, if we set $\cubiii$ as a sufficiently large constant, then with probability at least $1-1/(2n^3)$, we have $\sum_{i=1}^k \frac{1}{q}N_i^2Z_i\in (1\pm \varepsilon)k$. Since $\|Pu\|^2$ was equal in distribution to $\sum_{i=1}^k \frac{1}{q}N_i^2Z_i$, the same holds for $\|Pu\|^2$. 

Since $D$ is independent of $P$, we get that with probability at least $(1-1/(2n^3))(1-1/(2n^3)) \geq 1-1/n^3$, we have $k^{-1}\|PHDx\|^2 \in (1 \pm \eps)$ as desired.

For a set of $n$ vectors $X$, we finally union bound over all vectors $z/\|z\|$ where $z = x-y$ with $x,y \in X$. There are less than $n^2$ such pairs and we conclude that with probability at least $1-1/n$, we have that $k^{-1/2}PHD$ guarantees~\eqref{eq:jl}.

We now claim that our choice of 
\[
q=\max\left\{\cubiiiii/m,\cubii\varepsilon\min\left \{1,\ln (n)/(m\ln (1/\varepsilon))\right \}\right \},
\]
is equivalent to that claimed in \Cref{phdtheoremuppermain}. Recalling that $m = O(d/\ln n)$, we see that our choice of $q$ is $O(\max\{(\ln n)/d, \eps \min\{1, \ln^2(n)/(d\ln(1/\eps))\}\})$. Since $(\ln n)/d \leq (\ln n)/k = O(\eps^2) = O(\eps)$, we can never have $(\ln n)/d = \omega(\eps)$ and hence we can move the $\max$ into the min and get
\[
q=O\left(\min\left\{\eps, \frac{\ln n}{d} \cdot \max\left\{1, \frac{\eps \ln n}{\ln(1/\eps)}\right\} \right\}\right).
\]
This completes the proof of \Cref{phdtheoremuppermain}.
\end{proof}

\subsubsection*{Discussion of Expression.}
Let us conclude by giving some more intuition on where the different terms in the expression for $q$ originate from. Recall from above that the hardest vector for $k^{-1/2}P$ is a unit vector $u$ with $m = O(d/\ln n)$ non-zero entries, each of magnitude $m^{-1/2}$. Also recall that each entry of $P$ is the product of a Bernoulli $b_{i,j}$ with success probability $q$ and a normal distributed random variable with variance $1/q$.

The term $\ln(n)/d$ in the expression for $q$ intuitively comes from the following: There is a total of $km$ Bernoulli random variables $b_{i,j}$ that are each multiplied with the same non-zero value $u_j^2$. This gives an expected $kmq$ of them that are non-zero. Intuitively, since they are all multiplied with the same coefficient, we need the number of non-zero Bernouillis to be within $\eps k m q$ of the expectation. A binomial distribution with $km$ trials and success probability $q$ deviates from its expectation by $\Omega(\sqrt{kmq \ln n})$ with probability $n^{-1/2}$ and thus we require $\sqrt{kmq \ln n} < \eps k m q$. This implies that we must set $q > \ln(n)/(\eps^2mk) =\Omega(1/m) = \Omega(\ln(n)/d)$.

The terms $\eps \ln^2 n/(d \ln(1/\eps))$ and $\eps$ in the expression for $q$ come from the event that the square of the first coordinate, $(k^{-1/2}Pu)_1^2$ is larger than $\eps$ (which causes a distortion if the rest of the coordinates are concentrated). Conditioned on the Bernoullis $b_{1,j}$, the square of the first coordinate is the square of a normal distributed random variable. Hence it is a factor $\Omega(\ln n)$ larger than its variance with probability $n^{-1/2}$. There are now two cases: 1. $m < c\ln_{1/q} n$ for a small constant $c>0$, and 2., $m \geq c\ln_{1/q} n$. 

In the first case, $m < c \ln_{1/q} n$, it happens with probability at least $n^{-1/2}$ that all Bernoullis $b_{1,j}$ that are multiplied with a non-zero coefficient take the value $1$. In that case, the first coordinate of $k^{-1/2}Pu$ is normal distributed with mean zero and variance $1/(qk)$ (since $\sum_j u_j^2=1$). We thus need $\ln n/(qk) < \eps$. Using that $k = \Theta(\eps^{-2} \ln n)$, this means we have to set $q = \Omega(\eps)$. 

In the second case, $m \geq c \ln_{1/q} n$, we expect to see $qm$ non-zero Bernoullis $b_{1,j}$ that are each multiplied with $1/m$ for the first coordinate of $k^{-1/2}Pu$. However, by a "reverse" Chernoff bound, with probability at least $n^{-1/2}$, we see at least $c\ln_{1/q} n$ non-zero Bernoullis. In that case, the first coordinate of $k^{-1/2}Pu$ is normal distributed with mean zero and variance $\Theta((\ln_{1/q} n)/(mqk)) = \Theta(\eps^2 \ln_{1/q}(n)/(dq))$. Since the square of the first coordinate was a factor $\ln n$ larger than its variance with probability $n^{-1/2}$, we hence need $\eps^2 \ln n \ln_{1/q}(n)/(dq) = O(\eps)$. If we for simplicity approximate $q$ by $\eps$ in $\ln_{1/q} n$, this gives precisely $q = \Omega(\eps \ln^2 n/(d \ln(1/\eps)))$.
\section{Lower Bound}
In this section, we prove the lower bound in \Cref{phdtheorem2main}. That is, we give an example of a unit vector $x \in \mathbb{R}^d$, such that one must have
\[
q=\Omega\left(\min \left\{\eps, \frac{\ln(1/\delta)}{d} \cdot \max\left\{ 1,  \frac{\eps \ln(1/\delta)}{ \ln(1/\eps)}\right\}\right\}\right),
\]
to guarantee $\ppp[ \|k^{-1/2}PHDx\| \in (1 \pm \eps)]  \geq 1-\delta$. 

The proof of the lower bound goes in two steps. In the first step, we show that we must have $q =\Omega(\ln(1/\delta)/d)$. In the second step, we use the result from step one to conclude that $q$ must also be $\Omega(\varepsilon\min\{1,\ln^2(1/\delta)/(d\ln (1/\eps))\})$. Combining the two, we have:
\[
q = \Omega\left(\max\{\ln(1/\delta)/d, \eps \min \{1,\ln^2(1/\delta)/(d\ln (1/\eps))\}\}\right).
\]
Noticing that we always have $\ln(1/\delta)/d = O(\ln(1/\delta)/k) = O(\eps^2) = O(\eps)$, we can move the max inside the min and obtain the bound claimed above.

In both steps, we use the same hard instance vector $x$. This hard vector $x$ has the property that with probability at least $\delta^c$ for a small constant $c >0$, $u=HDx$ has $m = \Theta(d/\ln(1/\delta))$ non-zero entries, each of magnitude $1/\sqrt{m}$. Conditioning on such a transformed vector $u=HDx$ puts a lot of structure on $u$, which simplifies the analysis of the product $Pu$. Indeed, if we consider a coordinate $(Pu)_i$, then this coordinate is $\Norm(0,\sum_j b_{i,j} u_j^2/q)$ distributed if we condition on the Bernoullis $b_{i,j}$. But $u_j^2$ is $1/m$ for precisely $m$ values of $j$ and $0$ for all others. Thus $\sum_j b_{i,j} u_j^2/q$ is distributed as $1/(qm)$ times a binomial distribution with $m$ trials and success probability $q$. One part of the analysis is thus to study this distribution. Secondly, if we consider $\|Pu\|^2$, then this is a linear combination of $k$ independent $\chi^2$ random variables, with the $i$'th being scaled by $\sum_j b_{i,j} u_j^2/q$. Hence we also need to understand the tail of such a distribution.

For the first step, i.e. showing $q = \Omega(\ln(1/\delta)/d)$, we argue that the sum of the coefficients $\sum_j b_{i,j} u_j^2/q$ deviates a lot from its expectation with reasonable probability. More precisely, notice that $\E[\sum_j b_{i,j} u_j^2/q] = (mq)/(mq) = 1$ and thus $\E[\sum_i \sum_j b_{i,j} u_j^2/q] = k$. But the sum of these coefficients is itself distributed as $1/(mq)$ times a binomial distribution with $mk$ trials and success probability $q$. The number of successes in such a binomial distribution deviates by additive $\Omega(\sqrt{\ln(1/\delta) (mkq)})$ from its expectation $mkq$ with probability at least $\delta^c$ for a small constant $c>0$. Intuitively, we need this deviation to be less than $\eps mkq$ to preserve the norm of $x$ (and thus $u$) to within $(1\pm \eps)$. This implies $\sqrt{\ln(1/\delta)(mkq)} = O(\eps mkq) \Rightarrow q=\Omega(\ln(1/\delta)/(\eps^2 mk)) = \Omega(1/m) = \Omega(\ln(1/\delta)/d)$.

In the second step, we now use the fact that we know that $q$ is sufficiently large, such that coordinates $2,\ldots,k$ of $Pu$ are reasonably well concentrated around their mean. What establishes the second lower bound on $q$, namely $q=\Omega(\varepsilon\min\{1,\ln^2(1/\delta)/(d\ln (1/\eps))\})$, is the possibility that the first coordinate $(Pu)_1$ may be so large that it alone distorts the norm $\|k^{-1/2}Pu\|^2$. In more detail, we show that with good probability, we have $\sum_{i=2}^{k} k^{-1}(Pu)_i^2 \in (1 \pm \eps)(k-1)/k$, i.e. on the last $k-1$ coordinates, the embedding $k^{-1/2}PHDx$ preserves the norm of $x$ as it should (we work with $k^{-1}\|Pu\|^2$ instead of $k^{-1/2}\|Pu\|$ to simplify the analysis - and since the later is a weaker statement by $\sqrt{1\pm\eps} \subset (1\pm\eps)$ it suffices to work with $k^{-1}\|Pu\|^2$). In this case, we show that unless $q$ is large enough, the single coordinate $k^{-1}(Pu)_1$ contributes more than $\eps$ to $k^{-1}\|Pu\|^2$ with probability more than $\delta$.

We now give the details of the proof outlined above. We first show the existence of the vector $x$ for which $u=HDx$ often has $m = O(d/\ln(1/\delta))$ coordinates of magnitude $1/\sqrt{m}$.

\paragraph{Hard Instance.}
Let $\varepsilon,\delta>0$ and set $\clbiiiiii$ to be the integer such that  $\clbiiiiii\leq\log_2\left(\log_2(1/\sqrt{2\delta})\right)\leq \clbiiiiii+1$ and define \begin{equation}\label{phdud2}x_i=\begin{cases}
\sqrt{\frac{1}{2^l}} \ \text{ if } i\leq 2^\clbiiiiii \\
0 \ \text{ else. }
\end{cases}
\end{equation}

We now notice that $Dx=x$ with probability $2^{-2^\clbiiiiii}\geq \sqrt{2\delta}$. Since the unnormalized Hadamard matrix is given recursively by 
\begin{gather*}
H_{2^{i}}=\left[\begin{array}{cc}
H_{2^{i-1}} & H_{2^{i-1}} \\
H_{2^{i-1}} & -H_{2^{i-1}}
\end{array}\right]
=\left[\begin{array}{cc}
H_{2^{l}} & \cdots\\
\vdots & \ddots\\
H_{2^{l}} & \cdots
\end{array}\right],
\end{gather*}
for $i\in\mathbb{N}$ and $x$ has $1$'s in the first $2^l$ places and zeros in the rest, we get $Hx=[H_{2^l}\mathbf{1},\ldots,H_{2^l}\mathbf{1}]^T/\sqrt{d}$, with $\mathbf{1}$ being the all-ones vector in $\mathbb{R}^{2^l}$. Now since we further know that for any $i$, the rows of the unnormalized Hadamard matrix are orthogonal, and that the first row of the unnormalized Hadamard matrix is all-ones, it follows that
$$
(Hx)_i=\begin{cases}
 \sqrt{\frac{2^\clbiiiiii}{ d}}\ \text{ if } i \equiv 0 \mod( 2^\clbiiiiii ) \\
0 \ \text{ else. }
\end{cases}
$$
Thus we conclude that $u:=Hx$ has $d/2^\clbiiiiii$ non-zero entries, all of value $\sqrt{2^\clbiiiiii/d}$. This is the vector $u$ we will analyze throughout the remainder of the lower bound proof.

Using the definition of $u$ we have that 
\begin{gather*}
\norm{Pu}\stackrel{d}{=}\sum_{i=1}^\clbiiiii \left(\sum_{j=1}^{\frac{d}{2^\clbiiiiii}} \sqrt{\frac{2^\clbiiiiii}{ dq}} b_{i,j} N_{i,j}\right)^2=\frac{2^{l}}{ dq}\sum_{i=1}^\clbiiiii \left(\sum_{j=1}^{\frac{d}{2^\clbiiiiii}}  b_{i,j} N_{i,j}\right)^2,
\end{gather*}
where the $b_{i,j}$'s are Bernoulli random variables with success probability $q$ and the $N_{i,j}$'s are $\Norm(0,1)$ distributed, all independent of each other. Conditioned on the outcome of the $b_{i,j}$'s it follows from linear combinations of independent normal random variables that 
\begin{gather*}
\sum_{i=1}^\clbiiiii \left(\sum_{j=1}^{\frac{d}{2^\clbiiiiii}}  b_{i,j} N_{i,j}\right)^2\stackrel{d}{=}\sum_{i=1}^\clbiiiii \left( \left(\sqrt{\sum_{j=1}^{\frac{d}{2^\clbiiiiii}}  b_{i,j}}\right) N_{i}\right)^2\stackrel{}{=}\sum_{i=1}^\clbiiiii b_i  N_{i}^2,
\end{gather*}
where the $b_i$'s are $\sum_{j=1}^{d/2^\clbiiiiii}  b_{i,j}$ and the $N_i$'s are independent standard normal random variables. Hence we conclude that the above is a weighted sum of $\chi^2$-variables.

What remains is the two steps described earlier where we analyze this distribution to derive the lower bound. We give the steps in the following two sections.

\subsection{First step $q=\Omega\left(\frac{\ln \frac{1}
{\delta}}{d}\right)$}\label{firststep}
As described in the proof sketch we need lower bounds on the tail probabilities for weighted sums of independent $\chi^2$-distributions, thus we now restate Theorem 7 from \cite{zhang2020nonasymptotic} in a slightly weaker form.
\begin{restatable}[\cite{zhang2020nonasymptotic}]{Lemma1}{sensseven}
\label{reformulation restatement 1} 
Let $g_1,\ldots,g_d$ be independent $N(0,1)$ random variables and $u_1,\ldots,u_d$ be non-negative numbers, then for constants $0<c_3$ and $C_3\geq 1$ we have that

$$
\begin{gathered}
c_3 \exp \left(-C_3 x^{2} /\|u\|_{2}^{2}\right) \leq \mathbb{P}\left[\sum_{i=1}^d u_i(g_i^2-1) \geq x\right] , \quad \forall 0 \leq x. 
\end{gathered}
$$
\end{restatable}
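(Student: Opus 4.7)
The plan is to derive this matching Gaussian-type lower tail via the standard exponential-tilting (Cram\'er--Chernoff) lower bound, in the spirit of Bahadur--Ranga Rao. Write $S := \sum_{i=1}^d u_i(g_i^2-1)$; for $\lambda < 1/(2\|u\|_\infty)$ its cumulant generating function is
\begin{equation*}
\Lambda(\lambda) \;:=\; \ln \mathbb{E}[e^{\lambda S}] \;=\; \sum_{i=1}^d \Bigl(-\tfrac12\ln(1-2\lambda u_i) - \lambda u_i\Bigr).
\end{equation*}
A Taylor expansion gives $\lambda^2\|u\|_2^2 \leq \Lambda(\lambda) \leq 2\lambda^2\|u\|_2^2$ and $\Lambda'(\lambda) = \Theta(\lambda\|u\|_2^2)$ whenever $|\lambda u_i| \leq 1/4$ for every $i$, which identifies the sub-Gaussian regime. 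The rate function $\Lambda^*(x) := \sup_\lambda(\lambda x - \Lambda(\lambda))$ is therefore of order $x^2/\|u\|_2^2$ in this range.

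For the main case $0\leq x \leq \|u\|_2^2/(16\|u\|_\infty)$, I would choose $\lambda^*$ with $\Lambda'(\lambda^*) = x$ (so $\lambda^* = \Theta(x/\|u\|_2^2)$ and $|\lambda^* u_i| \leq 1/4$), and define the tilted probability measure $Q$ by $dQ/dP = \exp(\lambda^* S - \Lambda(\lambda^*))$. Under $Q$ the coordinates remain independent, $\mathbb{E}_Q[S] = \Lambda'(\lambda^*) = x$, and $\mathrm{Var}_Q(S) = \Lambda''(\lambda^*) = \Theta(\|u\|_2^2)$. The change-of-measure identity then yields, for any $\alpha > 0$,
\begin{equation*}
\Pr[S \geq x] \;\geq\; e^{\Lambda(\lambda^*) - \lambda^*(x+\alpha\|u\|_2)} \cdot Q\bigl(S \in [x, x+\alpha\|u\|_2]\bigr).
\end{equation*}
The prefactor equals $\exp(-\Lambda^*(x) - \alpha\lambda^*\|u\|_2) = \exp(-\Theta(x^2/\|u\|_2^2) - O(x/\|u\|_2))$; taking $C_3$ slightly larger than the implicit constant in $\Lambda^*$ absorbs the linear term into the $\exp(-C_3 x^2/\|u\|_2^2)$ factor. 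The $Q$-probability of the interval is bounded below by an absolute constant via a Berry--Esseen or one-sided Paley--Zygmund estimate, using that under $Q$ the sum $S$ has mean $x$, standard deviation $\Theta(\|u\|_2)$, and third absolute central moment at most $O(\|u\|_\infty\|u\|_2^2) = O(\|u\|_2^3)$ in the present regime --- precisely the regularity Berry--Esseen requires to guarantee a non-degenerate $Q$-density around the mean on the scale of $\|u\|_2$.

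For the extreme regime $x > \|u\|_2^2/(16\|u\|_\infty)$, the target $\exp(-C_3 x^2/\|u\|_2^2)$ is actually weaker than the true sub-exponential rate, so a direct argument suffices: with $i^*$ attaining $\|u\|_\infty$, I would condition on $u_{i^*}(g_{i^*}^2-1) \geq 2x + 4\|u\|_2$, which has probability at least $c\exp(-x/\|u\|_\infty)$ by the standard Gaussian lower tail $\Pr[g^2 \geq t] \geq c\exp(-t/2)/\sqrt{t}$, and then use Chebyshev on $\sum_{j\neq i^*} u_j(g_j^2-1)$ (mean $0$, variance $\leq 2\|u\|_2^2$) to lower bound the remaining sum by $-x-4\|u\|_2$ with probability at least $7/8$. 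This yields $\Pr[S\geq x] \geq c\exp(-x/\|u\|_\infty) \geq c_3\exp(-C_3 x^2/\|u\|_2^2)$ once $C_3$ is taken sufficiently large, since $x/\|u\|_\infty \leq x^2/\|u\|_2^2$ throughout this regime. The main obstacle is the $Q$-probability estimate in the sub-Gaussian case, which is the technical heart of any Bahadur--Ranga Rao-style argument; Paley--Zygmund keeps it elementary but loose in constants, while a Berry--Esseen local estimate sharpens them at the cost of explicitly controlling the third $Q$-moment of $S$.
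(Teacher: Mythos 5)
The paper does not actually prove this lemma; it is cited verbatim (in weakened form) as Theorem~7 of \cite{zhang2020nonasymptotic}, so there is no in-paper argument to compare against. Your self-contained attempt via exponential tilting is the natural route and is structurally sound, but it contains a genuine gap in the sub-Gaussian regime.

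The problem is the step lower-bounding $Q\bigl(S\in[x,x+\alpha\|u\|_2]\bigr)$ by an absolute constant. Your Lyapunov ratio under $Q$ is of order $\|u\|_\infty/\|u\|_2$, and you write that a third absolute central moment of $O(\|u\|_\infty\|u\|_2^2)=O(\|u\|_2^3)$ is ``precisely the regularity Berry--Esseen requires.'' It is not: Berry--Esseen gives an additive error of order $\|u\|_\infty/\|u\|_2$, which is only useful when that ratio is small. For a vector dominated by a single coordinate (e.g.\ $d=1$, so $\|u\|_\infty=\|u\|_2$), the Lyapunov ratio is $\Theta(1)$ and the Berry--Esseen error term swamps the Gaussian interval probability entirely. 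Paley--Zygmund does not rescue this as written either, since $S$ is signed with $\mathbb{E}_Q[S]=x$ and you have not explained which non-negative functional you would apply it to. The statement is still true in this parameter range, but your argument does not establish it. A clean patch is a further case split on whether $\|u\|_\infty/\|u\|_2\le\eta$ for a small absolute $\eta$: if so, Berry--Esseen applies; if not, then on the sub-Gaussian boundary $x\le\|u\|_2^2/(16\|u\|_\infty)\le\|u\|_2/(16\eta)$ the quantity $x/\|u\|_2$ is bounded, and the single-coordinate conditioning you already use in the extreme regime (condition on $g_{i^*}^2\ge 1+(2x+4\|u\|_2)/\|u\|_\infty$, then Chebyshev on the rest) directly yields a constant lower bound on $\Pr[S\ge x]$.

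Two smaller slips: in the extreme regime you assert ``$x/\|u\|_\infty\le x^2/\|u\|_2^2$ throughout,'' but the boundary $x>\|u\|_2^2/(16\|u\|_\infty)$ only gives $x/\|u\|_\infty\le 16\,x^2/\|u\|_2^2$; and the Gaussian lower tail you invoke, $\Pr[g^2\ge t]\ge c\,e^{-t/2}/\sqrt{t}$, yields a conditioning probability of order $e^{-x/\|u\|_\infty}/\sqrt{x/\|u\|_\infty}$, not $e^{-x/\|u\|_\infty}$. Both are absorbed by enlarging $C_3$ (since the target $e^{-C_3x^2/\|u\|_2^2}$ decays at least as fast as $e^{-(C_3/16)\,x/\|u\|_\infty}$ in this regime, which beats the polynomial loss for $C_3$ large), but you should make the absorption explicit rather than stating the false intermediate bounds.
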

We will also need the following reverse Chernoff bound from \cite{reversechernoff} which we restate in a multiplicative version instead of an additive:
\begin{restatable}[\cite{reversechernoff}]{Lemma1}{senseight2}
\label{lem:revchernoff}
Let $X$ be binomial distributed with $r$ trials and success probability  $q \leq 1 / 4$. Then for any $0 \leq \alpha q \leq 1/4$ it holds that:
$$
\operatorname{Pr}\left[X \geq (1+\alpha) qr\right] \geq \frac{1}{4} \exp \left(-2 \alpha^2qr\right)
$$
\end{restatable}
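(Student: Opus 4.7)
The plan is to reduce the tail bound to a single-point estimate and then amplify by summing consecutive terms. Let $k^\star = \lceil (1+\alpha)qr \rceil$; it suffices to show $\Pr[X \geq k^\star] \geq (1/4)\exp(-2\alpha^2 qr)$. I will first prove a lower bound on $\Pr[X = k^\star]$ of the shape $c_0/\sqrt{qr(1-q)} \cdot \exp(-r\,\mathrm{KL}(k^\star/r \,\|\, q))$, then upper-bound the binary Kullback--Leibler divergence by $2\alpha^2 q$, and finally argue that $\Theta(\sqrt{qr})$ nearby terms are each within a constant factor of $\Pr[X = k^\star]$, so that summing them cancels the $1/\sqrt{qr}$ loss from Stirling.

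For the single-point lower bound, I would apply a quantitative Stirling estimate to the binomial coefficient,
\[
\binom{r}{k^\star} \geq \frac{c_0}{\sqrt{r\,(k^\star/r)(1-k^\star/r)}}\,\exp\bigl(r\,H(k^\star/r)\bigr),
\]
where $H$ is the binary entropy and $c_0$ is a universal constant. Multiplying by $q^{k^\star}(1-q)^{r-k^\star}$ converts the entropy into $-r\,\mathrm{KL}(p^\star\|q)$ with $p^\star = k^\star/r$. To bound this KL divergence I would Taylor-expand around $q$: the quadratic piece is $(p^\star-q)^2/(2q(1-q))$, and since $p^\star - q \leq \alpha q + 1/r$ with $q \leq 1/4$ and $\alpha q \leq 1/4$, the cubic remainder and the rounding slack $1/r$ can be absorbed into a factor of at most $2$, yielding $r\,\mathrm{KL}(p^\star\|q) \leq 2\alpha^2 qr$.

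For the amplification step, I would consider indices $j \in \{k^\star, k^\star+1, \ldots, k^\star + \lfloor\sqrt{qr}\rfloor\}$ and use the explicit likelihood ratio
\[
\frac{\Pr[X = j+1]}{\Pr[X = j]} = \frac{q(r-j)}{(j+1)(1-q)},
\]
which stays in an interval $[e^{-C_2}, 1]$ across this range because every such $j$ lies within $O(\sqrt{qr})$ of the mode and the derivative of $\ln \Pr[X = j]$ near the mode is $O(1/\sqrt{qr})$. Summing therefore gives $\Omega(\sqrt{qr}) \cdot e^{-C_2} \cdot \Pr[X=k^\star]$, which combines with the Stirling bound to the desired $\Omega(1)\cdot\exp(-2\alpha^2 qr)$. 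The main obstacle is purely constant-chasing: pinning down the leading $1/4$ requires careful bookkeeping through Stirling, through the cubic remainder of the KL expansion (which is where the hypothesis $\alpha q \leq 1/4$ is essential so that the quadratic $2\alpha^2 qr$ bound really survives), and through the geometric factor in the amplification. None of these pieces is deep individually, but aligning the numerical constants so that the leading coefficient is exactly $1/4$ and the exponent is exactly $2\alpha^2 qr$ is fiddly.
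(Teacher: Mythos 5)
The paper does not prove this lemma: it is imported from \cite{reversechernoff} and stated ``in a slightly weaker form,'' so there is no in-paper argument to compare yours against. Your Stirling-plus-KL-plus-amplification outline is indeed the standard route to reverse Chernoff bounds, but as written it has a genuine gap, and moreover the lemma, taken literally with only the hypotheses $q\le 1/4$ and $0\le\alpha q\le 1/4$, is false, which any proof attempt must eventually run into.

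The concrete problem is the step ``$r\,\mathrm{KL}(p^\star\|q)\le 2\alpha^2 qr$,'' where you claim the cubic remainder and the rounding slack $1/r$ ``can be absorbed into a factor of at most $2$.'' That cannot be right: take $\alpha=0$ with $qr\notin\mathbb{Z}$, so $p^\star=\lceil qr\rceil/r>q$, hence $\mathrm{KL}(p^\star\|q)>0$ while $2\alpha^2 qr=0$. The rounding term is not a multiplicative perturbation of $\alpha^2 qr$; it is an additive one that dominates whenever $\alpha q\lesssim 1/r$, i.e.\ whenever $\alpha qr\lesssim 1$. In that regime $r(p^\star-q)^2/(2q(1-q))$ behaves like $1/(rq)$, which has nothing to do with $\alpha^2 qr$. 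The same regime also exposes that the statement itself needs an extra hypothesis: with $q=0.01$, $r=1$, $\alpha=0$ one has $\Pr[X\ge qr]=\Pr[X\ge 1]=0.01<1/4=\tfrac14 e^{-2\alpha^2 qr}$. The cited reverse-Chernoff sources carry a condition of the form $\alpha^2 qr\ge \mathrm{const}$ (equivalently a lower bound on $r$ in terms of $\alpha,q$), and that condition is exactly what rules out the degenerate regime, makes the rounding slack negligible ($1/r\ll \alpha q$ then follows), and lets the single-point Stirling bound absorb its $1/\sqrt{qr}$ prefactor into the extra slack in the exponent ($2\alpha^2 qr$ versus the quadratic $\approx\alpha^2 qr/2$ from the KL expansion). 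Your proposal should state and use such a hypothesis rather than try to ``absorb'' the rounding.

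Separately, the amplification step is also more delicate than described. You take $j$ ranging over $\{k^\star,\ldots,k^\star+\lfloor\sqrt{qr}\rfloor\}$ and assert the ratio $\Pr[X=j+1]/\Pr[X=j]$ stays in $[e^{-C_2},1]$ because $j$ is ``within $O(\sqrt{qr})$ of the mode.'' But $k^\star\approx(1+\alpha)qr$ is at distance $\alpha qr$ from the mode, and the ratio at $k^\star$ is roughly $1/(1+\alpha)$; the log-probability drops by about $\alpha\sqrt{qr}+\tfrac12$ across your window, so the uniform $e^{-C_2}$ lower bound on the ratio only holds when $\alpha\lesssim 1/\sqrt{qr}$. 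For larger $\alpha$ the $\sqrt{qr}$ gain from summing evaporates, and one must instead argue that the single-point estimate already carries enough slack (which is where the exponent constant $2$, compared with the Chernoff-tight $\approx 1/2$, is used). Splitting the argument into the cases $\alpha^2 qr=O(1)$ (amplification near the mode) and $\alpha^2 qr=\Omega(1)$ (single-point estimate plus exponent slack), under an explicit lower bound on $\alpha^2 qr$ or on $qr$, would close these gaps.
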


With the above lemma stated we now present the first step in the proof of \Cref{phdtheorem2main}.
\paragraph{First step in proof of \Cref{phdtheorem2main}.}
\begin{proof}
We condition on the randomness in $HD$ resulting in the fixed vector $u$ as argued earlier. In this case, we start by showing that $\sum_i b_i$ is large with reasonable probability. Observe that $\sum_i b_i$ is binomial distributed with $r=kd/2^l$ trials and success probability $q$. Hence for $\alpha = \sqrt{\ln(1/(4^{4}\delta))/(8qr)}$, it follows from \Cref{lem:revchernoff} that either $\alpha q  > 1/4$ or $q > 1/4$ or $\Pr[\sum_i b_i \geq qr + \sqrt{\ln(1/(4^{4}\delta))qr/8}] \geq \delta^{1/4}$. 

If $q\geq 1/4$ we are done. Likewise, if $\alpha q  \geq 1/4$ then $q\geq1/(4\alpha)$ implying that $q \geq \sqrt{qr/(2\ln(1/(4^{4}\delta)))} \geq \Omega(\eps^{-2})$ by assumptions on $r=kd/2^l$, $k=\log(1/\delta)/\eps^2$ and $d/2^l\geq1$ and we are done again.

Thus what remains is the case $\Pr[\sum_i b_i \geq qr + \sqrt{\ln(1/(4^{4}\delta))qr/8}] \geq \delta^{1/4}$. Let us condition on $\sum_i b_i \geq qr + \sqrt{\ln(1/(4^{4}\delta))qr/8}$. Then by \Cref{reformulation restatement 1} with $x=0$ we get $\Pr[\sum_i b_i(N_i^2-1) \geq 0] \geq c_3$. This implies $\sum_i b_i N_i^2 \geq \sum_i b_i \geq qr + \sqrt{\ln(1/(4^{4}\delta))qr/8}$ with probability at least  $c_3 \delta^{1/4}$. But $ (2^l/(dq))(qr +  \sqrt{\ln(1/(4^{4}\delta))qr/8}) = k + \sqrt{\ln(1/(4^{4}\delta)) 2^{2 l} r/(8d^2q)} = k + \Omega(\sqrt{\ln(1/\delta) 2^l k/(qd)})$. Thus with probability at least $c_3 \delta^{1/4}$, we have $(2^l/(dq))\sum_i b_i N_i^2 \geq k + \Omega(\sqrt{\ln(1/\delta) 2^l k/(qd)})$. And since $\|Pu\|^2\stackrel{d}{=}(2^l/(dq))\sum_i b_i N_i^2 $ we also have that $\|Pu\|^2 \geq k + \Omega(\sqrt{\ln(1/\delta) 2^l k/(qd)})$ with probability  $c_3 \delta^{1/4}$. Further since we noticed (below \cref{phdud2}) that the probability of $HDx=u$ is at least $\sqrt{2\delta}$ it now follows what with probability at least $c_3 \delta^{3/4}$ we have that $$\frac{1}{k}\norm{PHDx} >1 + \Omega(\sqrt{\ln(1/\delta) 2^l /(kqd)}).$$ Thus for $\delta \leq c_3^{4}$ it follows that we must have 
\[
\Omega\left(\sqrt{\ln(1/\delta) 2^l /(kqd)}\right) \leq \eps 
\]
for $\frac{1}{k}\norm{PHDx}$ to satisfy \cref{eq:jl} (being a length preserving projection) with probability $\delta$,  which implies $q \geq\Omega(  \ln(1/\delta) 2^l/(\eps^{2}kd)) = \Omega(\ln(1/\delta)/d)$ where we have used that $2^l$ is $\Theta(\ln(1/\delta))$ by the choose of $l$, which completes the proof of the first step.
\end{proof}

\subsection{Second step $q=\Omega\left(\varepsilon\min\left\{1,\ln^2\left(1/\delta\right)/\left(d\ln \left(1/\eps\right)\right)\right\}\right)$}\label{secondsteplowerbound}
In this section we show the second step of the lower bound. Recall from the proof sketch that we use the result from the first step, giving $q = \Omega(\ln(1/\delta)/d)$. The basic idea is to show that there is a reasonably large probability that the first coordinate $(Pu)_1$ is so large that it distorts the embedding of $x$ by too much, even when all other coordinates behave well.

We now make some preliminaries and present some lemmas we will need in the proof of the second step.
By the the first step, we already have our claimed lower bound in \Cref{phdtheorem2main} whenever 
\[
\Theta\left(\max\left\{\ln(1/\delta)/d,\varepsilon\min\left \{1,\ln^2(1/\delta) /(d\ln(1/\eps))\right \}\right\}\right )=\Theta\left(\ln(1/\delta)/d\right),
\] 
so we now consider the cases where $\varepsilon,\delta,d$ are such that $$\Theta\left(\max\left\{\ln(1/\delta)/d,\varepsilon\min\left\{1,\ln^2(1/\delta) /(d\ln(1/\eps))\right\}\right\}\right)=\Theta\left(\varepsilon\min\left\{1,\ln^2(1/\delta) /(d\ln(1/\eps))\right\}\right),$$ and then show that for 
\begin{gather}\label{constansc6} \clbxii\ln(1/\delta)/d\leq q\leq \clbiii\varepsilon\min\left\{1,\ln^2\left(1/\delta\right) /\left(d\ln\left(1/\eps\right)\right)\right\},\end{gather}where $\clbxii$ is the constant from the lower bound $q \geq \clbxii \ln(1/\delta)/d$ and $\clbiii$ is a constant to be fixed later (but will be chosen less than 1), we have that the projection fails with at least $\delta$ probability. 

We construct our hard instance as in step one, except we adjust $l$ a bit (to deal with constants). We thus set $\clbiiiiii$ to be the integer such that  $\clbiiiiii\leq\log_2\left(\log_2(\left(1/\delta\right)^{\min\left\{1/50,\clbxii/\log_2(e)\right\}})\right)\leq \clbiiiiii+1$ and define \begin{equation*}x_i:=\begin{cases}
\frac{1}{\sqrt{2^l}} \ \text{ if } i\leq 2^\clbiiiiii \\
0 \ \text{ else. } 
\end{cases}
\end{equation*} 
It thus follows that  with probability $2^{-2^\clbiiiiii}\geq{\delta}^{\min\left\{1/50,\clbxii/\log_2(e)\right\}}$, the first $2^l$ signs in $D$ are $1$ , thus $Dx=x$ with at least probability ${\delta}^{\min\left\{1/50,\clbxii/\log_2(e)\right\}}$. We further notice that for the above $x$ we have that
$$
u_i:=(Hx)_i=\begin{cases}
 \sqrt{\frac{2^\clbiiiiii}{ d}}\ \text{ if } i \equiv 0 \mod( 2^\clbiiiiii ) \\
0 \ \text{ else. }
\end{cases}
$$
 We notice that the $u$ has $d/2^\clbiiiiii$ entries of size $\sqrt{2^\clbiiiiii/d}$ and 0 else, we let $m$ denote the number of non-zero entries. 
 
 We further notice if $\ln(1/\delta)/(qm)\leq \clbxiii$ then by the choose of $l$ and $m=d/2^l$ we have that $q$ is greater than $\ln^2 (1/\delta)\clbiiii/( \clbxiii d)$  and since  $\Theta\left(\max\left\{\ln(1/\delta)/d,\varepsilon\min\{1,\ln^2(1/\delta) /(d\ln(1/\eps))\}\right\}\right)=O(\ln^2(1/\delta)/d)$  we are done. Hence we may assume in the following that \begin{gather}\label{constantc7}
\ln(1/\delta)/(qm)\geq \clbxiii,
\end{gather} where $\clbxiii$ is at least $8$, and will be chosen larger later. 
 
Let now for $i=1,\ldots,k$, $Z_i$ denote a normalized sum of $m$ independent Bernoulli random variables $Z_i=(1/m)\sum_{j=1}^m b_{i,j}$ and $N_i$ denote a standard normal random variable, where all the $Z_i$'s and the $N_i$'s are independent of each other. Then for the $u$ described above, we have by linear combinations of independent normal distributions that:
\begin{gather*}
\norm{Pu}\stackrel{d}{=}
\sum_{i=1}^\clbiiiii \left(\sum_{j=1}^{\frac{d}{2^\clbiiiiii}} \sqrt{\frac{2^\clbiiiiii}{ dq}} b_{i,j} N_{i,j}\right)^2
\stackrel{d}{=}\sum_{i=1}^\clbiiiii \frac{1}{ q} Z_i N_{i}^2,
\end{gather*}
where we in the following will work with the later variable.

With the above preliminaries we now present the lemmas, used in the second step in the proof of \Cref{phdtheorem2main}. The proof of the lemmas can be found in \Cref{lowerboundappendix}.

In the spirit of the proof sketch of the second step, we now present \Cref{lowerbound1/qZN2}, which state that with good probability, the first coordinate of our projection vector, $Z_1 N_1^2/q$, is large.

\begin{restatable}{Lemma1}{senseight}
\label{lowerbound1/qZN2}
For $0<\varepsilon,\delta\leq 1/4$, $\clbiii$ sufficiently small (\cref{constansc6}), and $\clbxiii$ sufficiently large (\cref{constantc7}) we have with probability at least $\delta^{1/50+1/2+1/\pi}$ that
\begin{gather*}
\frac{1}{q}Z_1N_1^2\geq\frac{5\ln(1/\delta)}{\varepsilon}.
\end{gather*}
\end{restatable}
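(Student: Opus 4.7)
My plan is to exploit the independence of $Z_1$ and $N_1$, decomposing the event $\{Z_1 N_1^2/q \geq 5\ln(1/\delta)/\varepsilon\}$ as the intersection $\{Z_1 \geq A\} \cap \{N_1^2 \geq B\}$ for carefully chosen thresholds satisfying $AB \geq 5q\ln(1/\delta)/\varepsilon$, and then multiplying the two marginal lower bounds. The three exponents $1/50$, $1/2$, and $1/\pi$ in the conclusion will come from absorbing the constant prefactor $1/4$ of reverse Chernoff (into $\delta^{1/50}$ using that $\delta$ is below a universal constant), the reverse-Chernoff deviation probability itself, and a Mills-ratio Gaussian tail bound, respectively.

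For the Gaussian factor, I will apply the standard Mills-ratio inequality $\Pr[|N|\geq x] \geq \sqrt{2/\pi}\cdot \frac{x}{x^2+1}\cdot e^{-x^2/2}$ at $x=\sqrt{(2/\pi)\ln(1/\delta)}$. The exponential is then exactly $\delta^{1/\pi}$; the polynomial correction $x/(x^2+1)$ is absorbed when $\delta$ is below a universal constant, so I obtain $\Pr[N_1^2\geq (2/\pi)\ln(1/\delta)] \geq \delta^{1/\pi}$. This fixes $B := (2/\pi)\ln(1/\delta)$.

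For the Bernoulli factor, I will invoke \Cref{lem:revchernoff} with $\alpha := \frac12\sqrt{\ln(1/\delta)/(qm)}$ (truncated at $1/(4q)$ so the applicability hypothesis $\alpha q \leq 1/4$ is met), yielding
\[
\Pr[Z_1 \geq (1+\alpha)q] \geq \tfrac14 e^{-2\alpha^2 qm} \geq \tfrac14\delta^{1/2}.
\]
Setting $A := (1+\alpha)q$, the matching requirement $AB\geq 5q\ln(1/\delta)/\varepsilon$ reduces to $(1+\alpha)(2/\pi)\geq 5/\varepsilon$, i.e.\ $\alpha \geq 5\pi/(2\varepsilon)-1$. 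This will follow from $qm \leq \ln(1/\delta)/\clbxiii$ in \eqref{constantc7} with $\clbxiii$ a sufficiently large universal constant, together with $q \leq \clbiii\varepsilon$ from \eqref{constansc6} with $\clbiii$ sufficiently small (so that $1/(4q)$ is itself at least $5\pi/(2\varepsilon)-1$, handling the truncation case). Multiplying the two marginal bounds by independence and absorbing the $1/4$ prefactor into $\delta^{1/50}$ (valid for $\delta \leq 4^{-50}$) gives the claimed $\delta^{1/50+1/2+1/\pi}$.

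The main obstacle will be simultaneously reconciling the three constraints on $\alpha$: the reverse-Chernoff applicability $\alpha q \leq 1/4$, the deviation-probability lower bound $(1/4)e^{-2\alpha^2 qm}\geq \delta^{1/2}$, and the threshold-matching $\alpha\geq 5\pi/(2\varepsilon)-1$. This reconciliation will rely critically on choosing the universal constants $\clbiii$ small and $\clbxiii$ large, and on treating the edge case where $\alpha$ saturates at $1/(4q)$ (corresponding to $q$ near its maximal value $\clbiii\varepsilon$) by using the truncated $\alpha$ and tracking that $(1+\alpha)q$ is still at least $5\pi q/(2\varepsilon)$ so that the Gaussian threshold $B$ remains attainable.
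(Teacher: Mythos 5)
There is a genuine gap in the Bernoulli half of your decomposition, and it is not fixable by tuning the universal constants. Your untruncated choice $\alpha=\tfrac12\sqrt{\ln(1/\delta)/(qm)}$ is essentially forced: to keep the reverse-Chernoff lower bound $\tfrac14 e^{-2\alpha^2 qm}$ at least $\delta^{1/2}$ (after paying the prefactor elsewhere) one must take $\alpha=O\bigl(\sqrt{\ln(1/\delta)/(qm)}\bigr)$. By \cref{boundonration} the ratio $\rho:=\ln(1/\delta)/(qm)$ is $\Theta(\ln(1/\eps)/\eps)$ up to factors depending on $\clbxii$ and $\clbiii$, so the deviation you can certify at that probability level is only $\alpha=\Theta\bigl(\sqrt{\ln(1/\eps)/\eps}\bigr)$. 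But your threshold-matching step demands $\alpha\geq 5\pi/(2\eps)-1=\Theta(1/\eps)$, and $\sqrt{\ln(1/\eps)/\eps}=o(1/\eps)$ as $\eps\to0$. Enlarging $\clbxiii$ or shrinking $\clbiii$ only moves $\rho$ (hence $\alpha$) by a constant factor, while the required threshold grows unboundedly in $1/\eps$; no universal-constant choice closes this. Nor does the truncation $\alpha=1/(4q)$ rescue the argument: truncation engages only when $m<4q\ln(1/\delta)$, and since $m=d/2^{\clbiiiiii}=\Theta(d/\ln(1/\delta))$ and $q\leq\clbiii\eps$, this fails — i.e.\ you remain in the untruncated regime — whenever $d$ is moderately large, a case the lemma must certainly cover.

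The paper avoids this because $Z_1$ lives in the \emph{Poisson}, not Gaussian, lower-tail regime. \Cref{lowerboundz1} partitions the $m$ Bernoullis into $L=\Theta(\clbi\ln(1/\delta)/\ln\rho)$ disjoint buckets and lower-bounds the probability that every bucket sees at least one success by $(qm/(2L))^L\geq\delta^{\clbi}$; this forces at least $L$ successes and hence $Z_1/q\geq L/(qm)=\Theta(\rho/\ln\rho)=\Theta(1/\eps)$. Reverse Chernoff (\Cref{lem:revchernoff}) is a Gaussian-flavoured reverse bound and can certify only $\Theta(\sqrt{\rho})$ at the same probability budget; the gap $\rho/\ln\rho$ versus $\sqrt{\rho}$ is exactly the Poisson-versus-Gaussian lower-tail discrepancy, and it is the ingredient your proposal is missing. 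The Gaussian half of your plan — a Mills-ratio estimate at threshold $(2/\pi)\ln(1/\delta)$ in place of the paper's \Cref{lowerboundN1} — is a perfectly workable substitute, and your bookkeeping of the exponent $1/50+1/2+1/\pi$ would be fine once the Bernoulli part is repaired; the defect is isolated to invoking a Gaussian reverse tail where a Poisson one is needed.
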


As noted in the proof sketch, we also want to show that the sum of the coordinates except $Z_1N_1^2/q$ have a good concentration around its mean: 
\begin{restatable}{Lemma1}{sensnine}
\label{lowerboundsumN}
For $0<\eps\leq1/4$ and $0<\delta\leq1/8$ we have with probability at least $\delta^{1/8}$ that 
\begin{gather*}
\sum_{i=2}^k \frac{1}{q}Z_iN_i^2\geq (1-3\varepsilon)(k-1).
\end{gather*}
\end{restatable}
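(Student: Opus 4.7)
The plan is to apply Bernstein's inequality (in its one-sided form) to $S := \sum_{i=2}^k \tfrac{1}{q} Z_i N_i^2$, exploiting that each summand $T_i := \tfrac{1}{q} Z_i N_i^2$ is non-negative, has mean $1$, and has variance bounded by a universal constant.

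Concretely I would first note that by independence of $Z_i$ and $N_i$, $\E[T_i] = \E[Z_i/q]\cdot\E[N_i^2] = 1$, so $\E[S] = k-1$, and $\mathrm{Var}(T_i) = \E[(Z_i/q)^2]\cdot 3 - 1 = 2 + 3(1-q)/(mq)$. The key observation is that this variance is at most $5$ as soon as $qm \geq 1$, and $qm \geq 1$ is precisely what the cutoff $\min\{1/50,\, \clbxii/\log_2 e\}$ in the definition of $l$ guarantees: indeed $2^l \leq c\log_2(1/\delta)$ combined with the step-one lower bound $q \geq \clbxii\ln(1/\delta)/d$ gives
\[
qm \;\geq\; \frac{\clbxii\ln(1/\delta)}{d}\cdot\frac{d\ln 2}{c\ln(1/\delta)} \;=\; \frac{\clbxii\ln 2}{c} \;\geq\; 1,
\]
where the last inequality can be verified in each of the two cases of the minimum. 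Consequently $\mathrm{Var}(S) \leq 5(k-1)$.

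Since $T_i \geq 0$ gives $-(T_i - 1) \leq 1$, Bernstein's inequality yields
\[
\Pr\!\left[S \leq (k-1) - 3\eps(k-1)\right] \;\leq\; \exp\!\left(-\frac{9\eps^2(k-1)^2}{10(k-1) + 2\eps(k-1)}\right) \;=\; \exp\!\left(-(k-1)\cdot\frac{9\eps^2}{10 + 2\eps}\right).
\]
Substituting $k = \eps^{-2}\ln(1/\delta)$ and $\eps \leq 1/4$ bounds the exponent below by $(6/7)\ln(1/\delta) - O(1)$, so the failure probability is at most $O(\delta^{6/7})$. It remains to check that $1 - O(\delta^{6/7}) \geq \delta^{1/8}$ for all $\delta \leq 1/8$: the extremal case is $\delta$ close to $1/8$, where $\delta^{1/8} \approx 0.77$ while the failure bound is $\lesssim 0.2$, and for smaller $\delta$ both shrink with the failure bound shrinking faster.

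The main obstacle is obtaining the universal variance bound, which is tight up to constants and hinges on the careful choice of cutoff in the construction of $l$; without $qm \geq 1$ the per-term variance $2 + 3/(mq)$ would blow up and no concentration inequality with $O(1)$ constants would suffice. An alternative two-step plan (Chernoff on $\sum_i Z_i$ followed by \Cref{chisquareuppertails1} applied conditionally to the weighted $\chi^2$) would also yield the required tail, but requires additional control on $\|a\|_\infty$ and $\|a\|_2^2$ for $a_i = Z_i/q$; the direct Bernstein route avoids this extra work.
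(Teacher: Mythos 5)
Your proof is correct, but it takes a genuinely different route from the paper. The paper writes $X = \tfrac{1}{q}\sum_{i=2}^k Z_iN_i^2$, computes $\mathrm{Var}(X)=(k-1)\bigl(2+(1-q)/(mq)\bigr)$, uses $1/(mq)\leq 1$ to bound the per-term variance by a constant, and then applies the Chebyshev--Cantelli inequality to get failure probability at most $3/\bigl(3+9\eps^2(k-1)\bigr)\leq 1/7 \leq 1-\delta^{1/8}$. You instead apply one-sided Bernstein, exploiting that $T_i=Z_iN_i^2/q\geq 0$ so that $1-T_i\leq 1$, which yields an exponentially small failure probability $\lesssim\delta^{6/7}$ rather than the paper's $O(1)$ bound. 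Both approaches hinge on the identical key observation that $qm\geq 1$ (your derivation of this from the choice of $l$ and the step-one bound $q\geq\clbxii\ln(1/\delta)/d$ matches the paper's), and both need only that the per-term variance is $O(1)$. Your route gives a much stronger tail than is actually needed -- here we only need probability $\geq\delta^{1/8}$, and Chebyshev--Cantelli already delivers that from the variance alone without invoking boundedness or moment-generating functions. As a side note, your per-term variance $2+3(1-q)/(mq)$ appears to be the correct one; the paper writes $2+(1-q)/(mq)$, which looks like a dropped factor of $3$ in expanding $3\bigl((mq)^2+mq(1-q)\bigr)-(mq)^2$, though this has no effect on the validity of either argument since both variance bounds are $O(1)$ under $mq\geq 1$ and the Chebyshev--Cantelli step still goes through with the corrected constant ($5/23 < 1-(1/8)^{1/8}$).
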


We are now ready to put the above lemmas together and complete the proof of \Cref{phdtheorem2main}. 
\paragraph{Second step in proof of \Cref{phdtheorem2main}.}
\begin{proof} 
Let $0<\varepsilon\leq 1/4$ and $0<\delta\leq 1/8$. We now choose $\clbiii$ and $\clbxiii$ accordingly to \Cref{lowerbound1/qZN2}, thus we have with probability at least $\delta^{1/50+1/2+1/\pi}$ that $Z_1N_1^2/q\geq 5\ln(1/\delta)\varepsilon^{-1}$.
By \Cref{lowerboundsumN} we have $\sum_{i=2}^k Z_iN_i^2/q\geq (1-3\varepsilon)(k-1)$ with probability at least $\delta^{1/8}$. 

Thus we conclude by independence of the $Z_i$'s and the $N_i$'s that with probability $\delta^{1/50+1/2+1/\pi+1/8}$ we have for the vector $u$ that
\begin{gather*}
\|Pu\|^2\stackrel{d}{=} \sum_{i=1}^{k} \frac{1}{q}Z_iN_i^2\\
= \frac{1}{q}Z_1N_1^2+\sum_{i=2}^{k} \frac{1}{q}Z_iN_i^2\\
\geq 5\cphdii\ln(1/\delta)\varepsilon^{-1}+(1-3\varepsilon)(k-1)\\
= 5\varepsilon k+k-3\varepsilon k-1+3\varepsilon\\
=(1+\varepsilon)k+\varepsilon k-1+3\varepsilon\\
> (1+\varepsilon)k,
\end{gather*} 
where the last inequality follows by the assumptions on $\varepsilon\leq 1/4$ implying that $\varepsilon k=\cphdii \ln(1/\delta)\varepsilon^{-1}> 4\geq 1-3\varepsilon $. 

Now since we early notice that for the choose of $l$ ($\clbiiiiii\leq\log_2\left(\log_2(\left(1/\delta\right)^{\min\left\{1/50,\clbxii/\log_2(e)\right\}})\right)\leq \clbiiiiii+1$ ) we had $u=HDx$ happining with probability at least $\delta^{1/50}$ independently of the outcomes of the $b_{i,j}$'s and the $N_{i,j}$'s in $P$. Thus we conclude by the law of conditional probability that with probability at least $\delta^{1/50+1/2+1/\pi+1/8+1/50}\geq \delta$ we have $\norm{PHDx}>(1+\varepsilon)k$. Thus we have shown that for $\delta,\varepsilon$ less than sufficiently small constants, we must have $q\geq \clbiii\varepsilon\min\left \{1,\ln(1/\delta)/(d\ln (1/\eps))\right \}$ for the mapping $PHD$ to be a length preserving random projection with probability $1-\delta$.
\end{proof}

\section{Concentration Inequalities}\label{appendix}

\subsection{Inequalities for the Upper Bound}
\label{sec:upperin}
We now restate and present the proof of \Cref{ublemma1}
\begin{customlem}{\ref{ublemma1}}
\sensfour*
\end{customlem}

\begin{proof}
First notice that by a union bound and Markov's inequality we have that 
\begin{gather}\label{eqphdub1}
\ppp\left[ \max_{i=1,\ldots,k} Z_i>t\right] \leq k\ppp\left[Z_1>t\right] \leq k\E\left[\exp(cZ_1)\right]\exp(-ct),
\end{gather}
for $c>0$.

Now since that
\begin{gather*}
\E[\exp(cZ_1)]=\sum_{b' \in \{0,1\}^d}  \exp\left(\sum_{j=1}^d u_j^2 b'_{i,j}\right)\ppp\left[b'\right],
\end{gather*}
where $\sum_{j=1}^d u_j^2 b'_{i,j}$ is an convex function in $(u_1^2,\ldots,u_d^2)$, implying that $\exp(\sum_{j=1}^d u_j^2 b'_{i,j})$ is convex since it is the composition of the convex function $\sum_{j=1}^d u_j^2 b'_{i,j}$ and the increasing convex function $\exp(\cdot)$. Since a linear combination with positive scalars of convex functions is again a convex function, we conclude that $\E[\exp(cZ_1)]=\sum_{b' \in \{0,1\}^d}  \exp(\sum_{j=1}^d u_j^2 b'_{i,j})\ppp[b']$ is a convex function in $(u_1^2,\ldots,u_d^2)$. Now since we have that $(u_1^2,\ldots,u_d^2)$ lies in the set $\{x\in\mathbb{R}^d| x_i \in[0,1/m] \forall i\in 1,\ldots,d \text{ and } \sum_{i=1}^d x_i=1\}$ (which is a convex polytope), we must have that the function  $\E[\exp(cZ_1)]$ obtains its maximum on a vertex. The choice of vertex does not change the distribution of the random variable, so we can without loss of generality assume that $u_1^2,\ldots,u_m^2=1/m$ and $u_{m+1}^2,\ldots,u_{d}^2=0$. 

Using that the maximum of $\E[\exp(cZ_1)]$ is attained in such a vertex, we obtain that
\begin{gather}\label{eqphdub2}
\E\left[\exp(cZ_1)\right]\leq \E\left[\exp(\frac{c}{m}\sum_{i=1}^m b_{1,i})\right]=\left(\exp\left(\frac{c}{m}\right)q+\left(1-q\right)\right)^m\\\leq \exp\left(m\left(\exp\left(\frac{c}{m}\right)q-q\right)\right)=\exp\left(mq\left(\exp\left(\frac{c}{m}\right)-1\right)\right)\nonumber,
\end{gather}
where the first equality follows from the bernoulli trailes $b_{1,i}$ being independent and identically distributed. The second inequality uses that $0\leq(1+x)\leq \exp(x)$ for $x\in\mathbb{R}^+$.
Now setting $c=m\ln(t/q)$ (for $t>q$) and using \cref{eqphdub1} and \cref{eqphdub2}
\begin{gather*}
\ppp\left[\max_{i=1,\ldots,k} Z_i>t\right]\leq k\E\left[\exp\left(cZ_1\right)\right]\exp\left(-ct\right)\leq k\exp\left(mq\left(\frac{t}{q}-1\right)-mt\ln\frac{t}{q}\right).
\end{gather*}
Now setting $t=q/(2\alpha)>q$ we get that
\begin{gather*}
\ppp\left[\max_{i=1,\ldots,k} Z_i>t\right] \leq k\exp\left(mq\left(\frac{1}{2\alpha}-1-\frac{1}{2\alpha}\ln\frac{1}{2\alpha}\right)\right) =\\ k\exp\left(\frac{mq}{2\alpha}\left(1-2\alpha-\ln\frac{1}{2\alpha}\right)\right)\leq k\exp\left(-\frac{mq\ln\left(1/\alpha\right)}{32\alpha}\right),
\end{gather*}
where we in the second inequality have used that $\alpha\leq1/4$ so $(1-2\alpha-\ln(1/(2\alpha))\leq -\ln(1/\alpha)/16$.
\end{proof}

Next we give the proof of \Cref{ublemma3}. For this, we need the following technical lemma about linear combinations of independent Bernoulli random variables.

\begin{restatable}{Lemma1}{sensten}
\label{ublemma2a}
  Let $Z = \sum_{j=1}^d u^2_j b_{j}$ where $b_{j}$ are independent Bernoulli random variables with success probability $q$ and $u^2_j$ are positive real numbers bounded by $1/m$ and summing to $1$. We then have for $t>q$:
  \[
    \Pr\left[Z > t\right] < \left(\frac{t}{eq}\right)^{-mt}.
  \]
\end{restatable}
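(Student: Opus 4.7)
The plan is a standard Chernoff-type argument on $Z$, paralleling the proof of \Cref{ublemma1}. First I would apply Markov's inequality to $\exp(cZ)$ for $c>0$, giving $\Pr[Z>t]\leq \E[\exp(cZ)]\exp(-ct)$, so the task reduces to bounding the moment generating function of $Z$ uniformly over all admissible coefficient vectors $(u_1^2,\dots,u_d^2)$.

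Next I would reuse the convexity reduction from the proof of \Cref{ublemma1} verbatim: the function $(u_1^2,\dots,u_d^2)\mapsto \E[\exp(cZ)]$ is a convex combination of exponentials of linear functions in $(u_j^2)$, hence is convex on the polytope $\{x\in\mathbb{R}^d_{\geq 0}:x_j\leq 1/m,\,\sum_j x_j=1\}$. Thus its maximum over this polytope is attained at a vertex, where exactly $m$ coordinates equal $1/m$ and the rest vanish. Without loss of generality I may therefore assume $u_j^2=1/m$ for $j=1,\dots,m$ and $u_j^2=0$ otherwise.

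Under that reduction the $b_j$'s in the sum are i.i.d.\ Bernoulli$(q)$, so
\[
\E[\exp(cZ)] = \bigl(qe^{c/m}+(1-q)\bigr)^m \leq \exp\bigl(mq(e^{c/m}-1)\bigr),
\]
using $1+x\leq e^x$. Substituting into the Markov bound yields
\[
\Pr[Z>t]\leq \exp\bigl(mq(e^{c/m}-1)-ct\bigr).
\]
Since $t>q$, I would pick $c=m\ln(t/q)>0$, which makes $e^{c/m}=t/q$. The exponent then becomes $m(t-q)-mt\ln(t/q) = -mt\ln(t/(eq))-mq \leq -mt\ln(t/(eq))$, and exponentiating produces
\[
\Pr[Z>t]\leq \exp\bigl(-mt\ln(t/(eq))\bigr) = \bigl(t/(eq)\bigr)^{-mt},
\]
as claimed. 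There is no real obstacle here; the only subtlety is the convexity-vertex argument, which is already established in \Cref{ublemma1} and transfers without change, and the choice of the optimal $c=m\ln(t/q)$, which mirrors the tuning used there.
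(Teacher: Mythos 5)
Your proposal is correct and follows essentially the same route as the paper: Markov's inequality applied to $\exp(cZ)$, the convexity reduction from \Cref{ublemma1} to put $m$ coordinates at $1/m$, the bound $\E[\exp(cZ)]\leq\exp(mq(e^{c/m}-1))$, and the choice $c=m\ln(t/q)$. The only cosmetic difference is that the paper drops the favorable $-mq$ slack term before substituting $c$, whereas you carry it through and discard it at the end; both give the same bound.
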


\begin{proof}
The proof follows the proof steps in \Cref{ublemma1}. For any $c \geq 0$, we have
  \[
    \E\left[\exp\left(cZ\right)\right] \leq \exp\left(mq\left(\exp\left(\frac{c}{m}\right)-1\right)\right).
  \]
  Thus by Markov's, we have for $c>0$
  \[
    \Pr\left[Z > t\right] = \Pr\left[\exp\left(cZ\right) > \exp\left(ct\right)\right] \leq \exp\left(mq\left(\exp\left(\frac{c}{m}\right)-1\right)\right) \exp\left(-ct\right) \leq \exp\left(mq\exp\left(\frac{c}{m}\right)-ct\right).
  \]
  Setting $c = m \ln(t/q)$ gives
  \[
    \Pr\left[Z > t\right] < \exp\left(\frac{mqt}{q} - mt \ln\frac{t}{q}\right ) = \exp\left(mt-mt\ln\frac{t}{q}\right) = \exp\left(-mt \ln\frac{t}{eq}\right) = \left(\frac{t}{eq}\right)^{-mt}.
  \]
  \end{proof}

With \Cref{ublemma2a} in place we now restate and prove \Cref{ublemma3}. 
\begin{customlem}{\ref{ublemma3}}
\sensfive*
\end{customlem}

  \begin{proof}
For simplicity we assume in the following that $\log_2 k$ is an integer. 
    For $j=0,\dots,\log_2 (k)/2$, let $E_j$ denote the event that there are at least $2^j/(j+1)^2$ indices $i$ such that $Z_i^2 \geq t/(2^{j+3})$ and let $E'_j$ denote the event that there are at least $k/(2^j (j+1)^2)$ indices $i$ with $Z_i^2 \geq t 2^{j-3}/k$. We claim that if $\sum_{i=1}^k Z_i^2 > t$, then one of the events $E_j$ or $E_j'$ must occur for some $j$. Before we prove this, we briefly motivate why we need the two separate events $E_j$ and $E_j'$. If we had only defined the events $E_j$, but let $j$ range all the way to $\log_2 k$, then either the $j=0$ or $j=\log_2 k$ term would dominate. The issue with this, is that the $(j+1)^2$ term is sub-optimal (i.e. non-constant) for $j=\log_2 k$. One could simply try to remove the $1/(j+1)^2$ term, but this would not work as $\sum_j 2^j \cdot t/2^{j+3}$ is $\omega(t)$. Including $1/(j+1)^2$ is precisely used to guarantee that $\sum_j 2^j/(j+1)^2 \cdot t/2^{j+3} = O(t)$.  For that reason, we define the events $E_j'$ that will handle the case of many indices with small values. 
    
    To prove that at least one event must occur, assume for the sake of contradiction that none of the events occur. Then:
    \begin{eqnarray*}
      \sum_{i=1}^k Z_i^2 &\leq& \\
      \sum_{i=1}^k \sum_{j=0}^\infty 1_{\{Z_i^2 \geq \frac{t}{2^{j+3}}\}} \frac{t}{2^{j+3}} &=& \\
      \sum_{j=0}^\infty \frac{t}{2^{j+3}} \sum_{i=1}^k 1_{\{Z_i^2 \geq \frac{t}{2^{j+3}}\}} &=& \\
      \sum_{j=0}^{\log_2 k } \frac{t}{2^{j+3}} \sum_{i=1}^k 1_{\{Z_i^2 \geq \frac{t}{2^{j+3}}\}} + \sum_{j=\log_2 k + 1}^{\infty} \frac{t}{2^{j+3}} \sum_{i=1}^k 1_{\{Z_i^2 \geq \frac{t}{2^{j+3}}\}} &\leq& \\
      \sum_{j=0}^{\log_2 (k)/2} \frac{t}{2^{j+3}} \sum_{i=1}^k 1_{\{Z_i^2 \geq \frac{t}{2^{j+3}}\}} + \sum_{j=0}^{\log_2 (k)/2} \frac{t}{2^{\log_2k - j+3}} \sum_{i=1}^k 1_{\{Z_i^2 \geq t/2^{\log_2 k - j+3}\}}  + \sum_{j=\log_2 k + 1}^{\infty} \frac{tk}{2^{j+3}}&\leq& \\
     \sum_{j=0}^{\log_2 (k)/2}  \frac{t2^j}{2^{j+3}(j+1)^2})+\sum_{j=0}^{\log_2 (k)/2} \frac{t2^{j-3}}{k} \sum_{i=1}^k 1_{\{Z_i^2 \geq \frac{t2^{j-3}}{k}\}}  + \frac{t}{8}&\leq& \\
      \frac{t}{8} \sum_{j=0}^{\log_2 (k)/2} \frac{1}{(j+1)^2}+\sum_{j=0}^{\log_2 (k)/2} \frac{kt2^{j-3}}{k(2^j (j+1)^2)}+ \frac{t}{8} &\leq& \\
      \frac{t}{4} \sum_{j=0}^{\infty}\frac{1}{(j+1)^2}+ \frac{t}{8} &=& \\
      \frac{t\pi^2}{4\cdot6} + \frac{t}{8} &<& t.
    \end{eqnarray*}
    We thus have $\Pr[\sum_{i=1}^k Z_i^2 > t] \leq \sum_{j=0}^{\log_2 (k)/2} \Pr[E_j] + \Pr[E'_j]$. To bound $\Pr[E_j]$, let $S$ be any subset of $2^j/(j+1)^2$ indices in $[k]$ and define the event $E_{j,S}$ which happens when all $i \in S$ satisfy $Z_i^2 \geq t/(2^{j+3})$. Notice since $t \geq 64\cdot24 e^3 q^2 k$ and $j \leq \log_2 (k)/2$ we have $t/2^{j+3}\geq 64\cdot24 e^3 q^2 k/(8k^{1/2})\geq 64\cdot3 e^3 q^2 k^{1/2}$ implying that the ratio of $\sqrt{t/2^{j+3}}$  with $q$ is larger than 1, \Cref{ublemma2a} is applicable with $Z\geq \sqrt{t/2^{j+3}}$. Now using an union bound over the events $E_{j,S}$ for any such set $S$, and that the $Z_i$'s on such sets are independent and identically distributed, combined with  \Cref{ublemma2a} yields that,
    \[
      \Pr\left[E_j\right] \leq \sum_S \Pr\left[E_{j,S}\right] \leq \binom{k}{2^j/(j+1)^2} \left(\sqrt{t/2^{j+3}}/\left(eq\right)\right)^{-m \sqrt{t/2^{j+3}} 2^j/\left(j+1\right)^2},
    \]
    and bounding $\binom{k}{2^j/(j+1)^2}$ by $k^{2^j/(j+1)^2}$, we obtain
    \[
      \Pr\left[E_j\right] \leq \exp\left(-\frac{2^j\left(m \sqrt{t/2^{j+3}} \ln\left(\sqrt{t/2^{j+3}}/\left(eq\right)\right) - \ln k\right)}{\left(j+1\right)^2}\right).
    \]
    For $t \geq 8e^2kq^2$ and $j \leq \log_2 (k)/2$, we have $\sqrt{t/2^{j+3}}/(eq) \geq \sqrt{8e^2kq^2/(8 \sqrt{k} e^2 q^2)}) \geq k^{1/4}$ and thus it follows that $\ln(\sqrt{t/2^{j+3}}/(eq)) \geq  \ln (k)/4$. Using $q \geq 8/(em)$ we also have $m \sqrt{t/2^{j+3}} \geq m \sqrt{8 e^2 k q^2/(8 \sqrt{k})} \geq m e q k^{1/4} \geq 8$. By this we then obtain
 \[
      \left(m \sqrt{t/2^{j+3}} \ln\left(\sqrt{t/2^{j+3}}/\left(eq\right)\right) - \ln k\right) \geq  m \sqrt{t/2^{j+3}} \ln\left(\sqrt{t/2^{j+3}}/\left(eq\right)\right)/2. 
    \]
    Thus letting $f(j)=2^{\frac{1}{2}j-5/2}m \sqrt{t}\ln(\sqrt{t/2^{j+3}}/(eq))/(j+1)^2$ we get that
\begin{gather*}
      \Pr\left[E_j\right] \leq \exp\left(-\left(\frac{2^{j-1}m \sqrt{t/2^{j+3}} \ln(\sqrt{t/2^{j+3}}/(eq))}{(j+1)^2}\right) \right)\\=\exp\left(-\frac{2^{\frac{1}{2}j-5/2}m \sqrt{t}\ln\left(\sqrt{t/2^{j+3}}/\left(eq\right)\right)}{\left(j+1\right)^2}\right)   
    =\exp\left(-f\left(j\right)\right).    
\end{gather*}

 Now using that $\ln(\sqrt{t/2^{j+3}}/(eq))\geq\ln(\sqrt{64\cdot 24e^3q^2k/(8\sqrt{k})}/(eq))\geq \ln\left (32\cdot 3 e\right )/2=\ln\left (96 e\right )/2$ for any $j\in0,\ldots,\log_2(k)/2$ and $t\geq64\cdot24e^3q^2k$ we get that the ratio between $f(j)$ and $f(j+1)$ for $j\in 0,\ldots,\log_2(k)/2-1$  is lower bounded by
 \begin{gather*}
\frac{f\left(j+1\right)}{f\left(j\right)}=  \frac{2^{1/2}\left(1-\ln \left(\sqrt{2}\right)/\ln \left(\sqrt{t/2^{j+3}}/\left(eq\right)\right)\right)\left(j+1\right)^2}{\left(j+2\right)^2}\geq   \frac{2^{1/2}\left(1-\ln \left(2\right)/\ln\left (96e\right )\right)\left(j+1\right)^2}{\left(j+2\right)^2}.
 \end{gather*}
By iteratively applying the above inequality for the ratio of consecutive terms of $f$ we get that for $j'\in1,\ldots,\log_2(k)/2$  that
 \begin{gather*}
 f\left(j'\right)\geq \frac{\left(2^{1/2}\left(1-\ln \left(2\right)/\ln\left (96e\right )\right)\right)^{j'} f\left(0\right)}{\left(j'+1\right)^2}\geq \frac{j'f\left(0\right)}{200},
 \end{gather*}
 where we in the last inequality have used that $((1-2\ln \left(2\right)/\ln\left (96e\right ))2^{1/2})^{j'}/\left( j'+1 \right )^2\geq j'/200$ for $j'\geq 0$.
 
 Now using the above inequality for $f$ we get by a geometric series argument that,
\begin{gather*}
      \sum_{j=0}^{\log_2 (k)/2} \Pr\left[E_j\right]\leq \exp\left(-f\left(0\right)\right)+\sum_{j=1}^{\log_2 (k)/2} \exp\left(-\frac{jf\left(0\right)}{200}\right) \\
      \leq \exp\left(-f\left(0\right)\right)+\frac{\exp\left(-\frac{f\left(0\right)}{200}\right)}{1-\exp\left(-\frac{f\left(0\right)}{200}\right)}\leq 3\exp\left(-2^{-5/2} \cdot m \sqrt{t}\ln\left(\sqrt{t/2^{3}}/\left(eq\right)\right)/200\right),
\end{gather*}
where we in the last inequality have used that $f(0)=2^{-5/2} \cdot m \sqrt{t}\ln(\sqrt{t/2^{3}}/(eq))\geq 250$, to say that $1/(1-\exp(-f(0)/200))\leq 2$.

Next we bound $\Pr[E'_j]$ . Again by a union bound over all sets of $k/(2^j (j+1)^2)$ indices and \Cref{ublemma2a}, we get:
    \[
      \Pr[E'_j] \leq \binom{k}{k/\left(2^j (j+1)^2\right)} \left(\sqrt{t 2^{j-3}/k}/\left(eq\right)\right)^{-m \sqrt{t 2^{j-3}/k} \cdot k/\left(2^j \left(j+1\right)^2\right)}.
    \]
    Bounding
$
      \binom{k}{k/(2^j (j+1)^2)}$ from above by $ (e2^j (j+1)^2)^{k/(2^j (j+1)^2)}$
    we get that
    \[
      \Pr\left[E'_j\right] \leq \exp\left(- \frac{k}{2^j (j+1)^2}\cdot \left(m \sqrt{t 2^{j-3}/k} \ln\left(\sqrt{t 2^{j-3}/k}/\left(eq\right)\right) - \ln\left(e2^j \left(j+1\right)^2\right)\right)\right).
    \]
    For $t \geq 24 e^3 k q^2$, we have $\sqrt{t 2^{j-3}/k}/(eq) \geq \sqrt{ 3 e 2^j}$. Since $(j+1)^2 \leq 3 \cdot 2^j$ for all $j \geq 0$, $\sqrt{ 3 e 2^j}$ is at least $\sqrt{e 2^{j/2}(j+1)} \geq (e 2^j (j+1)^2)^{1/4}$ and thus $\ln(\sqrt{t 2^{j-3}/k}/(eq)) \geq  \ln(e2^j(j+1)^2)/4$. For $q \geq 8/(em)$, we also have $m \sqrt{t 2^{j-3}/k} \geq m \sqrt{3 e^3 q^2}\geq 8$ and hence:
    \[
      m \sqrt{t 2^{j-3}/k} \ln\left(\sqrt{t 2^{j-3}/k}/\left(eq\right)\right) - \ln\left(e2^j \left(j+1\right)^2\right) \geq  m \sqrt{t 2^{j-3}/k} \ln\left(\sqrt{t 2^{j-3}/k}/\left(eq\right)\right)/2.
    \]
    
Now let $g(j)=m\sqrt{tk} \ln(\sqrt{t 2^{j-3}/k}/(eq)/((j+1)^22^{1/2j+5/2})$ then we have
\begin{gather*}
      \Pr\left[E'_j\right] \leq \exp\left(-\frac{km \sqrt{t 2^{j-3}/k} \ln\left(\sqrt{t 2^{j-3}/k}/\left(eq\right)\right)}{\left(j+1\right)^22^{j+1}} \right) =\exp\left(-g\left(j\right)\right).
\end{gather*}

 Now for any $j\in0,\ldots,\log_2(k)/2$ and $t\geq64\cdot24e^3q^2k$ it holds that $\ln(\sqrt{t 2^{j-3}/k}/(eq))$ is at least  $\ln(\sqrt{32\cdot24e^3q^2k /(8k)}/(eq))\geq \ln  \left (192e \right )/2$. This implies that the ratio between $g(j+1)$ and $g(j)$ for $j\in 0,\ldots,\log_2(k)/2-1$ is
 \begin{gather*}
\frac{g\left(j+1\right)}{g\left(j\right)}= \frac{2^{-1/2}\left(1+\ln \left(\sqrt{2}\right)/\ln\left(\sqrt{t 2^{j-3}/k}/(eq)\right)\right)(j+1)^2}{(j+2)^2}\leq  \frac{2^{-1/2}\left(1+\ln \left(2\right)/\ln  \left (192e \right )\right)(j+1)^2}{(j+2)^2}. 
 \end{gather*}
 Now iteratively using the above relation on the ratio between $g(j+1)$ and $g(j)$ and that $g(\log_2(k)/2)=k^{1/4}m\sqrt{t}\ln \left(t/(8e^2q^2\sqrt{k})\right)/(2^{7/2}(\ln(k)/2+1)^2)$ we get for $j'\in0,\ldots,\log_2(k)/2-1$ that

\begin{gather}
 g(j')\geq 
\frac{ \left (\log_2\left(k\right)/2+1\right )^2 g\left(\log_2\left(k\right)/2\right)}{\left(2^{-1/2}\left(1+\ln \left(2\right)/\ln  \left (192e \right )\right)\right)^{\left(\log_2(k)/2-j'\right)}\left(j'+1\right )^2}  \nonumber\\
\geq \frac{k^{1/4}m\sqrt{t}\ln \left(t/(8e^2q^2\sqrt{k})\right)}{ \left(2^{-1/2}\left(1+\ln \left(2\right)/\ln  \left (192e \right )\right)\right)^{\left(\log_2(k)/2-j'\right)}22k^{1/8}2^{7/2}}\nonumber\\
 \geq \frac{k^{1/8}m\sqrt{t}\ln \left(t/(8e^2q^2\sqrt{k})\right)}{\left(2^{-1/2}\left(1+\ln \left(2\right)/\ln  \left (192e \right )\right)\right)^{\left(\log_2(k)/2-j'\right)}22\cdot2^{7/2}}\label{ubineq1}\\
 \geq \frac{\left(\log_2(k)/2-j'\right)k^{1/8}m\sqrt{t}\ln \left(t/(8e^2q^2\sqrt{k})\right)}{200\cdot22\cdot2^{7/2}},\nonumber
 \end{gather}
 where we in the second inequality have used that for $j'\geq 0$ we have $(j'+1)^2\leq 22\cdot 2^{j'/4}\leq 22\cdot k^{1/8}$  and where we in the last inequality have used that for $j'=0,\ldots,\log_2(k)/2-1$  we have 
 \[
 \left(2^{-1/2}\left(1+\ln (2)/(\ln (192e))\right)\right)^{-\left(\log_2(k)/2 -j'\right)}\geq (\log_2(k)/2-j')/200.
 \]
 Now using that \cref{ubineq1}, also holds for $j'=\log_2 (k)/2$, and a geometric series argument we get that,
\begin{gather*}
      \sum_{j=0}^{\log_2 (k)/2} \Pr\left[E'_j\right]\\
      \leq \exp\left(-\frac{k^{1/8}m\sqrt{t}\ln \left(t/\left(8e^2q^2\sqrt{k}\right)\right)}{22\cdot2^{7/2}}\right)+\sum_{j'=0}^{\log_2 (k)/2-1} \exp\left(-
\frac{\left(\log_2(k)/2-j'\right)k^{1/8}m\sqrt{t}\ln \left(t/\left(8e^2q^2\sqrt{k}\right)\right)}{200\cdot22\cdot2^{7/2}}\right)
\\
 \leq \exp\left(-\frac{k^{1/8}m\sqrt{t}\ln \left(t/\left(8e^2q^2\sqrt{k}\right)\right)}{22\cdot2^{7/2}}\right)+\frac{\exp\left(-k^{1/8}m\sqrt{t}\ln \left(t/\left(8e^2q^2\sqrt{k}\right)\right)/(200\cdot22\cdot2^{7/2})\right)}{1-\exp\left(-k^{1/8}m\sqrt{t}\ln \left(t/\left(8e^2q^2\sqrt{k}\right)\right)/(200\cdot22\cdot2^{7/2})\right)}\\
 \leq 11\exp\left(-\frac{k^{1/8}m\sqrt{t}\ln \left(t/\left(8e^2q^2\sqrt{k}\right)\right)}{200\cdot22\cdot2^{7/2}}\right),
\end{gather*}
where we in the last inequality have used that $k^{1/8} m \sqrt{t}\ln(t/(8e^2q^2\sqrt{k}))/(200\cdot22\cdot 2^{7/2}) \geq 1/10$ .

By the above upper bounds on $\sum_{j=0}^{\log_2 (k)/2} \Pr[E'_j]$ and $\sum_{j=0}^{\log_2 (k)/2} \Pr[E_j]$ we can conclude that 
\begin{gather*}
\Pr\left[\sum_{i=1}^k Z_{i}^2\geq t\right]\\
\leq 14\exp\left(-\min\left\{k^{1/8}m\sqrt{t}\ln \left(t/\left(8e^2q^2\sqrt{k}\right)\right)/\left(200\cdot22\cdot2^{7/2}\right), m \sqrt{t}\ln\left(\sqrt{t/2^{3}}/\left(eq\right)\right)/\left(200\cdot2^{5/2}\right)\right\}\right)\\
\leq14\exp\left(-m\sqrt{t}/\left(200\cdot2^{7/2}\right)\min\left\{k^{1/8}\ln \left(t/\left(8e^2q^2\sqrt{k}\right)\right )/22,\ln\left(t/\left(8e^2q^2\right)\right)\right\}\right)\\
\leq 14\exp\left(-\frac{m\sqrt{t}\ln\left(\sqrt{t/2^{3}}/\left(eq\right)\right)}{200\cdot44\cdot2^{5/2}} \right),
    \end{gather*}
where we have used that the second term in the $\min$ is always smallest, when it is scaled by $1/44$, this follows from the assumption about $t \geq 64\cdot24 e^3 k q^2$ implying that for any such given t there exist $\tilde{c}\geq1$ such that $t=\tilde{c}8e^2kq^2$ and we get that
the first term in the $\min$ is equal to $k^{1/8}\ln \left(\tilde{c}\sqrt{k}\right )/22=k^{1/8}(\ln \left(\tilde{c}\right )+\ln\left (k\right )/2)/22$ and the second term in the $\min$ is equal to $\ln \left(\tilde{c}\right)+\ln \left(k\right)$, where by the claim follows.

\end{proof}

We now restate and present the proof of \Cref{ublemma4}.

\begin{customlem}{\ref{ublemma4}}
\senssix*
\end{customlem}

\begin{proof}
In the following we assume for simplicity that $\log_2 (k)$ and $\log_2(t)$ are integers.
We proceed in a somewhat similar fashion as in the proof of \Cref{ublemma3}. For $j=\log_2t,\ldots, \log_2k$ let $E_j$ be the event that there are at least $2^{j-1}/(j-\log_2(t)+1)^2$ indices such that $Z_i^2\geq t/2^{j+1}$. Assume that none of the events $E_j$ occurs, we then have that
\begin{align*}
&\sum_{i=1}^k Z_i^2  \\
&\leq\sum_{i=1}^k \sum_{j=\log_2(t)}^\infty 1_{\{Z_i^2 \geq \frac{t}{2^{j+1}}\}} \frac{t}{2^{j+1}}  \\
&= \sum_{j=\log_2(t)}^\infty \frac{t}{2^{j+1}} \sum_{i=1}^{k}1_{\{Z_i^2 \geq \frac{t}{2^{j+1}}\}}  \\
&=\sum_{j=\log_2(t)}^{\log_2 k } \frac{t}{2^{j+1}} \sum_{i=1}^{k}1_{\{Z_i^2 \geq \frac{t}{2^{j+1}}\}} + \sum_{j=\log_2 k + 1}^{\infty} \frac{t}{2^{j+1}} \sum_{i=1}^{k}1_{\{Z_i^2 \geq \frac{t}{2^{j+1}}\}}  \\
&\leq \sum_{j=\log_2(t)}^{ \log_2 k} \frac{t2^{j-1}}{2^{j+1}\left(\log_2(t)-j+1\right)^2}  + \sum_{j=\log_2 k + 1}^{\infty} \frac{tk}{2^{j+1}} \\
&\leq \frac{t}{4}\sum_{j=1}^{\infty} \frac{1}{j^2} + \frac{t}{4}\sum_{j=0}^{\infty} \frac{1}{2^{j} }\\
&\leq \frac{t\pi^2}{24} +\frac{t}{2}<t,
\end{align*}
where the first inequality follows by $Z_i^2\leq 1$, so the sum of the terms $1_{\{Z_i^2 \geq t/2^{j+1}\}}t/2^{j+1}$ starting at $j=\log_2(t)$ is always greater than $Z_i^2$. Thus we conclude that one of the events $E_j$ happens when $ \sum_{i=1}^k Z_i^2\geq t$. Now by an union bound over the events $E_j$ we have 
\begin{gather*}
\ppp\left[\sum_{i=1}^k Z_i^2\geq t\right] \leq \sum_{j=\log_2(t)}^{\log_2(k)} \ppp\left[E_j\right] .
\end{gather*}
When $E_j$ happens we know that there is a set $S$ of $2^{j-1}/(j-\log_2(t)+1)^2$ indices such that for $i\in S$ we have $Z_i^2\geq t/2^{j+1}$. Thus the probability of each $E_j$ can be bounded by using a union bound over all such possible sets of indices ($k$ choose $2^{j-1}/(j-\log_2(t)+1)^2$). Now using that the $Z_i$'s are independent and identically distributed, the probability of each of the sets $S$ splits into a product of probabilities $\ppp\left[Z_i^2\geq t/2^{j+1}\right] $, where \Cref{ublemma2a} can be used to bound each of these probabilities. We note that \Cref{ublemma2a} with $Z\geq \sqrt{t/2^{j+1}}$ is applicable since $\sqrt{t/2^{j+1}}/q\geq \sqrt{\cubi\ln(n)/(2k)}/(\cubii\varepsilon)=\sqrt{\cubi/(2\cubii^3)}\geq e^4$, where we have used the assumption that $t\geq \cubi\ln(n)$. We now get that:
\begin{gather*}
\ppp\left[E_j\right] 
\leq \binom{k}{2^{j-1}/\left(j-\log_2(t)+1\right)^2}\left(\sqrt{t/2^{j+1}}/\left(eq\right)\right)^{\sqrt{t/2^{j+1}}m2^{j-1}/(j-\log_2(t)+1)^2}\\
\leq \exp\left(-\frac{2^{j-1}\left(\sqrt{t/2^{j+1}}m\ln \left(\sqrt{t/2^{j+1}}/(eq)\right)-\ln \left(ek(j-\log_2(t)+1)^2/2^{j-1}\right)\right)}{(j-\log_2(t)+1)^2}\right),
\end{gather*}
where the last inequality follows by $\binom{k}{2^{j-1}/(j-\log_2(t)+1)^2}\leq \left(ek(j-\log_2(t)+1)^2/2^{j-1}\right)^{2^{j-1}/(j-\log_2(t)+1)^2}$.

To evaluate the term $\sqrt{t/2^{j+1}}m\ln \left(\sqrt{t/2^{j+1}}/(eq)\right)-\ln \left(ek(j-\log_2(t)+1)^2/2^{j-1}\right)$ we notice the following four relations for $j=\log_2(t),\ldots,\log_2(k)$
\begin{gather*}
\sqrt{t/2^{j+1}}m\geq \sqrt{\cubi\ln\left(n\right)/\left(2k\right)}\cubiiii d/\ln\left(n\right) \geq\sqrt{\cubi\varepsilon^2/\left(2\cubii\right)}\cubiiii k/\ln\left(n\right) \geq \sqrt{\cubi\cubiii/2}\cubiiii \varepsilon^{-1}\geq e^4\varepsilon^{-1},
\end{gather*}
\begin{gather*}
 \left(\sqrt{t/2^{j+1}}/(eq)\right)\geq \sqrt{\cubi\ln\left(n\right)/\left(2k\right)}/\left(e\cubii\varepsilon\right)=\sqrt{\cubi/\left(2e^2\cubii^3\right)}\geq e^3,
 \end{gather*}
 \begin{gather*}
\frac{ek}{2^{j-1}}
 \leq  e2k/t \leq e2\cubiii/\left(\cubi\varepsilon^2\right)\leq 1/\left(e^7\varepsilon^2\right),
 \end{gather*}
 \begin{gather*}
 j-\log_2(t)+1\leq \log_2(k/t)+1\leq\log_2\left(\cubiii/\left(\cubi\varepsilon^2\right)\right)+1=\log_2\left(2\cubiii/\left(\cubi\varepsilon^2\right)\right)\leq \log_2\left(1/\left(e^8\varepsilon^2\right)\right),
 \end{gather*}
where we have used that $c_1\geq 1/\cubiiii$ $t\geq \cubi\ln(n)$, $k=\cubii\varepsilon^{-2}\ln(2)$ and $d\geq k$. By the above relations we conclude that for sufficiently small $\varepsilon$, we have that \begin{gather*}
\sqrt{t/2^{j+1}}m\ln \left(\sqrt{t/2^{j+1}}/(eq)\right)-\ln \left(ek(j-\log_2(t)+1)^2/2^{j-1}\right)\geq  \sqrt{t/2^{j+1}}m\ln \left(\sqrt{t/2^{j+1}}/(eq)\right)/2.
\end{gather*}
Hence for such $\varepsilon$ and $f(j)=2^{j/2-5/2}\sqrt{t}m\ln \left(\sqrt{t/2^{j+1}}/(eq)\right)/(j-\log_2(t)+1)^2$ we have that
\begin{gather*}
\ppp\left[E_j\right] 
\leq \exp\left(-\frac{2^{j-1}\sqrt{t/2^{j+1}}m\ln \left(\sqrt{t/2^{j+1}}/\left(eq\right)\right)/2}{\left(j-\log_2(t)+1\right)^2}\right)=\exp\left(-f\left(j\right)\right).
\end{gather*}

Now using the assumptions that $t\geq \cubi\ln(n)$ and $q=\cubiii\varepsilon$ we get that $\sqrt{t/2^{j+1}}/(eq)\geq \sqrt{\cubi/2\cubii^3}/e\geq e^3$ such that for $j=\lg_2t,\ldots,\lg_2(k)-1$
\begin{gather*}
 \frac{f\left(j+1\right)}{f\left(j\right)}\geq \frac{\left(j-\log_2\left(t\right)+1\right)^2 \left(1-\ln \left(2\right)/6\right)\sqrt{2}}{\left(j+1-\log_2\left(t\right)+1\right)^2},
 \end{gather*}
using this iteratively we get that for $j'\in 1,\ldots,\log_2(k)-\log_2(t)$ 

\begin{gather*}
 f(\log_2(t)+j')\geq \frac{\left  (\left (1-\ln \left(2\right)/6\right)\sqrt{2}\right )^{j'}f(\log_2t)}{(j'+1)^2}\geq \frac{j'f(\log_2t)}{150},
 \end{gather*}
 where the last inequality follows by $\left  (\left (1-\ln \left(2\right)/6\right )\sqrt{2}\right )^{j'}/(j'+1)^2 \geq j'/150$ for $j'>1$. 
 
Now using a geometric series argument we get that
\begin{align*}
&\ppp\left[\sum_{i=1}^k Z_i^2\geq t\right] \\
&\leq \sum_{j=\log_2(t)}^{\log_2(k)}   \ppp\left[E_{j}\right]\\
&\leq \sum_{j=\log_2(t)}^{\log_2(k)} \exp\left(-f(j)\right)\\
&\leq \exp\left(-f(\log_2t)/150\right)+\sum_{j=1}^{\infty} \exp\left(-jf(\log_2t)/150\right)\\
&\leq 2\frac{\exp(-f(\log_2t)/150)}{1-\exp(-f(\log_2t)/150)}\\
&\leq 2\frac{\exp\left (-tm\ln(1/(\sqrt{2}eq))/(600\sqrt{2})\right )}{1-\exp\left (-tm\ln(1/ (\sqrt{2}eq))/(600\sqrt{2})\right )}.
\end{align*}
Now using that $t\geq2c_1^3e^8\ln(n)$ and $\varepsilon\leq \cubiii^{-1}/(4e)$ so $\ln(1/(\sqrt{2}eq))\geq \ln(2)$ we end up with the following inequality $t\ln(1/ (\sqrt{2}eq))/(600\sqrt{2})\geq\cubii^3 e^8 \ln (2)/(300\sqrt{2})\ln(n)\geq 4\cubii^3$ and since $m\geq 1$ we conclude that

\begin{gather*}
\ppp\left[\sum_{i=1}^k Z_i^2\geq t\right] \leq 2\frac{\exp\left (-tm\ln(1/ (\sqrt{2}eq))/(600\sqrt{2})\right )}{1-\exp\left (-tm\ln(1/ (\sqrt{2}eq))/(600\sqrt{2})\right )}
\leq2\frac{n^{-4\cubiiiii^3}}{1-n^{-4\cubiiiii^3}}
\leq 3n^{-4\cubiiiii},
\end{gather*}
where we in the last inequality have assumed that $n\geq 2$ and used that $\cubiii\geq1$, which completes the proof.
 
\end{proof}

\subsection{Inequalities for the Lower Bound}\label{lowerboundappendix}
In this section we proof \Cref{lowerbound1/qZN2} and \Cref{lowerboundsumN}. \Cref{lowerbound1/qZN2} states that the first coordinate $Z_1N_1^2/q$ is $\Omega(\varepsilon k)$ with good probability and \Cref{lowerboundsumN} says that $\sum_{i=2}^k Z_iN_i^2/q$ is $\Omega(k)$ with good probability, which we combined in (\Cref{secondsteplowerbound})  (the second step in the lower bound proof) to say that the sum of them became to large. To show \Cref{lowerbound1/qZN2} and \Cref{lowerboundsumN} we first recall the preliminaries for the second step of the lower bound (\Cref{secondsteplowerbound}). After the preliminaries we proof \Cref{lowerbound1/qZN2} via 4 helping lemmas and lastly we proof \Cref{lowerboundsumN}. Recall from \Cref{secondsteplowerbound}:

We consider the cases where $\varepsilon,\delta,d$ are such that 
\begin{equation}\label{constansc6a}
\clbxii\ln(1/\delta)/d\leq q\leq \clbiii\varepsilon\min\{1,\ln^2(1/\delta) /(d\ln(1/\eps)\}
\end{equation} where $\clbxii$ is the constant from \Cref{phdtheorem2main} and $\clbiii$ is a constant to be fixed later and will be chosen less than $1$.

We have $m=d/2^l$ where $\clbiiiiii\leq\log_2\left(\log_2(\left(1/\delta\right)^{\min\{1/50,\clbxii/\log_2(e)\}})\right)\leq \clbiiiiii+1$ implying that  $$ m\leq2d/(\min\{1/50,\clbxii/\log_2(e)\}\lg_2 (1/\delta))\leq2d/(\min\{1/50,\clbxii/\log_2(e)\}\ln(1/\delta)),$$ and  $$m\geq d/(\min\{1/50,\clbxii/\log_2(e)\}\lg_2 (1/\delta))\geq d/(\min\{1/50,\clbxii/\log_2(e)\} \log_2 (e)\ln(1/\delta)).$$ 

We notice that for $q$'s as in \cref{constansc6a} and the above $m$ we have that
\begin{equation}\label{boundonration}
    \min\{1/50,\clbxii/\log_2(e)\}\log_2(e) \ln(1/\delta)/\clbii\geq \ln(1/\delta)/qm \geq \clbiiii\ln(1/\eps) /(2\clbiii\varepsilon),
\end{equation} especially that $1/(qm) \leq1$.

We have that \begin{gather}\label{constantc7a}
\ln(1/\delta)/(qm)\geq \clbxiii,
\end{gather} 
where $\clbxiii$ is at least $8$, and will be chosen larger later. 

We consider the random variables $Z_1N_1^2/q$ and $\sum_{i=2}^k Z_i N_{i}^2/q$, where the $Z_i$'s denotes normalized sums of independent Bernoulli random variables $Z_i=(1/m)\sum_{j=1}^m b_j$ and the $N_i$'s denotes standard normal random variable, where all the $Z_i$'s and the $N_i$'s are independent of each other.  

We now present at technical lemma that we will need in the following proofs.

\begin{restatable}{Lemma1}{senseleven}
\label{lwbi}
For $a,x\in\mathbb{R}$ such that $0\leq x\leq 1$ and $0\leq ax\leq 1$ we have that
\begin{gather*}
\left(1-x\right)^a\leq \left(1-ax/2\right).
\end{gather*}
\end{restatable}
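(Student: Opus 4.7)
The plan is to prove the inequality by chaining two elementary bounds through the exponential function. First I would establish $(1-x)^a \leq e^{-ax}$, then show $e^{-t} \leq 1 - t/2$ for all $t \in [0,1]$, and finally apply the latter with $t = ax$ (which lies in $[0,1]$ by hypothesis) to obtain the claim.

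For the first step, I would separate out the degenerate cases. If $x = 0$ both sides equal $1$, and if $x = 1$ with $a > 0$ then $a \in (0,1]$, so the left-hand side is $0$ while the right-hand side is at least $1/2$. For $x \in (0,1)$, the constraints $ax \geq 0$ and $x > 0$ force $a \geq 0$. Then the standard logarithmic bound $\ln(1-x) \leq -x$ on $[0,1)$, multiplied by $a \geq 0$ and exponentiated, immediately gives $(1-x)^a \leq e^{-ax}$.

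For the second step, I would analyze $h(t) := 1 - t/2 - e^{-t}$ on $[0,1]$. Observe $h(0) = 0$, and $h'(t) = e^{-t} - 1/2$ is positive on $[0, \ln 2)$ and negative on $(\ln 2, 1]$, so $h$ rises to a maximum at $t = \ln 2$ and then decreases. Because $h(0) = 0$ and $h(1) = 1/2 - 1/e > 0$, the function $h$ is non-negative on the entire interval $[0,1]$, yielding $e^{-t} \leq 1 - t/2$ there. Applying this with $t = ax \in [0,1]$ and combining with the first step completes the proof.

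I expect no substantive obstacle: the argument is a short calculus exercise. The only minor care is handling the boundary cases $x \in \{0, 1\}$ where the logarithm degenerates, and verifying the sign pattern of $h'$ in order to conclude non-negativity of $h$ on $[0,1]$ from its endpoint values alone.
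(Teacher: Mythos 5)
Your proof is correct and structurally the same as the paper's: both decompose the target through the intermediate bound $(1-x)^a \leq e^{-ax} \leq 1 - ax/2$, which at the logarithm level is precisely the paper's chain $\sum_{i\geq 1}(ax/2)^i/i \leq ax \leq a\sum_{i\geq 1}x^i/i$. The only genuine difference is how the second inequality, $e^{-t}\leq 1-t/2$ on $[0,1]$, is verified: you analyze the sign of $h'(t)$ for $h(t)=1-t/2-e^{-t}$ together with the endpoint values $h(0)=0$ and $h(1)=\tfrac{1}{2}-\tfrac{1}{e}>0$, whereas the paper bounds $-\ln(1-t/2)=\sum_{i\geq 1}(t/2)^i/i$ by factoring out $t/2$ and summing the geometric series $\sum_{i\geq 1}(1/2)^{i-1}=2$ to obtain $-\ln(1-t/2)\leq t$. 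The two verifications are of comparable length; the paper's is purely algebraic (power series only), while yours uses the standard single-variable calculus route and is arguably the more familiar one. A very small omission is the corner case $a=0,\ x=1$ (where $0^0=1$ and equality holds), but this is trivial to patch.
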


\begin{proof}
Cases $x=0,1$ and $ax=0$ can be realised by insertion, and the case $ax=1$ corresponds to $(1-x)^{1/x}\leq 1/2$ which holds. Now for the remainding cases we first note by Taylor expansion of $\ln(1-x)=-\sum_{i=1}^\infty x^i/i$ that $(1-x)^a=\exp(-a\sum_{i=1}^\infty x^{i}/i)$ and $(1-ax/2)=\exp(-\sum_{i=1}^\infty (ax/2)^i/i)$. So it suffices to show that $\sum_{i=1}^\infty (ax/2)^i/i \leq a\sum_{i=1}^\infty x^{i}/i$. Now using that $ax\leq1$ and that a geometric series with common ratio of $1/2$ equals $2$ we get that $\sum_{i=1}^\infty (ax/2)^i/i=(ax/2)\sum_{i=1}^\infty \frac{(ax/2)^{i-1}}{i}\leq (ax/2)2=ax$. We also have that $ax\leq a\sum_{i=1}^\infty x^i/i$. Hence we conclude that $\sum_{i=1}^\infty (ax/2)^i/i \leq a\sum_{i=1}^\infty x^i/i $ which proofs the claim.
\end{proof}

We will now present and proof \Cref{lowerboundz1}, \Cref{remarklowerboundstep2} and \Cref{lowerboundz1i} which combined yield that with good probability we have a lower bound of $\Theta(\varepsilon^{-1})$ on the scaled binomial $Z_1/q$.

\begin{restatable}{Lemma1}{senstwelve}
\label{lowerboundz1}
Let $0<\varepsilon,\delta\leq 1/4$. Let further $\clbi\leq 1$ and $L=\clbi\ln(1/\delta)/\ln\left (\ln(1/\delta)/(qm)\right ) $ if $m/L\geq1$, $qm/L\leq 1$ and $\clbiii$ (\cref{constansc6a}) is chosen so small that $\clbiiii /(2\clbiii)$ is greater than $2$. We then have with probability at least $\delta^{\clbi}$ that:
\begin{gather*}
\frac{Z_1}{q}= \frac{1}{q}\sum_{i=1}^{m}\frac{1}{m} b_{1,i}
\geq \frac{c_8\clbi}{\varepsilon\sqrt{\clbiii}},
\end{gather*}
with $c_8=\ln(2) \sqrt{\clbiiii }/(4\sqrt{2}).$
\end{restatable}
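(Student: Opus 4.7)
The plan is to use the classical combinatorial lower bound on the upper tail of a binomial. Since $\sum_{i=1}^m b_{1,i}$ is $\mathrm{Bin}(m,q)$-distributed, for any positive integer $\ell \leq m$ we have
\[
\Pr\!\left[\sum_{i=1}^m b_{1,i} \geq \ell\right] \geq \binom{m}{\ell}q^\ell(1-q)^{m-\ell} \geq \left(\frac{qm}{\ell}\right)^\ell (1-q)^{m-\ell},
\]
using $\binom{m}{\ell}\geq (m/\ell)^\ell$. I would take $\ell = \lceil L\rceil$; the hypotheses $m/L\geq 1$ and $qm/L\leq 1$ ensure both that the binomial coefficient is meaningful and that we are truly in the upper-tail regime where this bound is informative. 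The argument then splits into a lower bound for $(qm/L)^L$, a lower bound for $(1-q)^{m-L}$, and finally a conversion of the event on $\sum_i b_{1,i}$ into the target statement about $Z_1/q$.

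Write $r := \ln(1/\delta)/(qm)$, so the definition of $L$ yields $qm/L = (\ln r)/(c_7 r)$ and $L = c_7\ln(1/\delta)/\ln r$. Taking logarithms,
\[
L\ln(qm/L) = \frac{c_7\ln(1/\delta)}{\ln r}\bigl(\ln\ln r - \ln c_7 - \ln r\bigr) = -c_7\ln(1/\delta) + \frac{c_7\ln(1/\delta)(\ln\ln r - \ln c_7)}{\ln r},
\]
which, since $c_7 \leq 1$ and $r \geq c_6 \geq 8$, gives $(qm/L)^L \geq \delta^{c_7}$ with positive slack. For the second factor I would use $\ln(1-q) \geq -2q$ (valid since $q\leq 1/2$), so $(1-q)^{m-L} \geq e^{-2qm} = \delta^{2/r} \geq \delta^{2/c_6}$. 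Taking $c_6$ from \cref{constantc7a} sufficiently large lets the positive slack in the first factor absorb the $\delta^{2/c_6}$ loss, giving the claimed probability $\delta^{c_7}$.

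Finally the event $\sum_i b_{1,i} \geq L$ rewrites as $Z_1/q \geq L/(qm) = c_7 r/\ln r$, so what remains is to show $c_7 r/\ln r \geq c_8 c_7/(\varepsilon\sqrt{c_5})$. I would plug in \cref{boundonration}, which gives $r \geq c_4\ln(1/\varepsilon)/(2c_5\varepsilon)$, and use monotonicity of $x/\ln x$ for $x>e$ together with $\ln r \leq O(\ln(1/\varepsilon))$ for $\varepsilon$ small, to obtain $r/\ln r \geq c_4/(\Theta(c_5)\,\varepsilon)$. The hypothesis $c_4/(2c_5)\geq 2$ then guarantees $c_4/(\Theta(c_5))\geq \ln(2)\sqrt{c_4}/(4\sqrt{2}\sqrt{c_5})$ after routine square-root manipulation, matching the definition $c_8=\ln(2)\sqrt{c_4}/(4\sqrt{2})$.

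The main obstacle I expect is the bookkeeping of constants. Concretely, one has to verify that the slack $c_7\ln(1/\delta)(\ln\ln r)/\ln r$ appearing in $(qm/L)^L$ always dominates the loss $2\ln(1/\delta)/r$ from $(1-q)^{m-L}$; this requires $c_6$ large relative to $1/c_7$. One must likewise check that the conversion $r/\ln r \to c_8/(\varepsilon\sqrt{c_5})$ really matches the prescribed $c_8$, which is possible because $c_4/(2c_5)\geq 2$ leaves plenty of room. The probabilistic step itself is essentially a one-line combinatorial anti-concentration; all the work is in juggling constants.
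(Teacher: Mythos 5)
Your approach is genuinely different from the paper's. The paper proves the lower bound via a bucketing argument: split the $m$ trials into $L$ disjoint buckets of size $m/L$, and bound from below the probability that every bucket records at least one success, namely $\left(1-(1-q)^{m/L}\right)^L \geq (qm/(2L))^L$, the second inequality coming from the elementary Lemma~\ref{lwbi}. You instead lower-bound the upper tail by the point mass at $\ell=\lceil L\rceil$, giving $\binom{m}{\ell}q^\ell(1-q)^{m-\ell}\geq (qm/\ell)^\ell(1-q)^{m-\ell}$. Both reduce to controlling a quantity of the form $(qm/L)^L$ up to a multiplicative loss; the paper's loss is the $2^{-L}$ hidden inside $(qm/(2L))^L$, while yours is the explicit $(1-q)^{m-\ell}$, and each must be absorbed by the slack $\exp\!\bigl(c_7\ln(1/\delta)(\ln\ln r - \ln c_7)/\ln r\bigr)$ coming from $\ln\ln r - \ln c_7 > 0$. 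Your computation $c_7 r(\ln\ln r - \ln c_7)\geq 2\ln r$ for $r\geq c_6$ with $c_6$ large is correct and the bucketing version requires an analogous verification, so the two losses are of the same difficulty to handle.

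There is one real bookkeeping gap you acknowledge but do not close: you set $\ell=\lceil L\rceil$ but then lower-bound $(qm/L)^L$ rather than $(qm/\ell)^\ell$. Since $qm/\ell\leq qm/L\leq1$ and $\ell\geq L$, we have $(qm/\ell)^\ell\leq(qm/L)^L$, so you are quoting a bound in the wrong direction and need the slack to absorb the extra loss $\approx\exp\!\bigl(-\theta(1+\ln(L/(qm)))\bigr)$ with $\theta=\ell-L\in[0,1)$. This works out when $\ln(1/\delta)\gtrsim\ln^2 r$, which covers the regime where $L\geq 1$; when $L<1$ (i.e. $\delta$ close to the stated $1/4$), your estimate $(qm)^1(1-q)^{m-1}\geq\delta^{c_7}$ can actually fail. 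Note, however, that the paper's bucketing also implicitly requires $L\geq 1$ (one cannot form $L<1$ disjoint buckets), so both proofs are only rigorous for $\delta$ small enough — which is consistent with the ultimate Theorem~\ref{phdtheorem2main}, where $\delta\leq r$ for a small universal constant. Your final conversion of $r/\ln r$ into $c_8/(\varepsilon\sqrt{c_5})$ is stated very tersely (and ``$\ln r\leq O(\ln(1/\varepsilon))$'' should really read ``$\ln R$'' where $R=c_4\ln(1/\varepsilon)/(2c_5\varepsilon)$ is the lower bound on $r$, since monotonicity of $x/\ln x$ only lets you substitute the lower bound), but it matches the paper's chain of inequalities ($\ln(1/\varepsilon)\geq\ln2$ plus $x/\ln x\geq\sqrt{x}$) and would recover the same $c_8$.
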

\begin{proof}
The idea of the proof is to divide the $m$ Bernoulli trails inside the sum $Z_1=\sum_{i=1}^m \frac{1}{m}b_{1,i}$ into $L$ disjoint buckets of size $m/L$(we choose $\clbi$ such that the bucket size is an integer), and then calculate the probability that all the buckets have at least one success, and here by get the above lower bound on $Z_1/q$.

Using that the buckets are disjoint so the events of buckets having a success in it is independent of each other the probability of having at least one success in every disjoint bucket is $(1-(1-q)^{m/L})^L$. Now using \Cref{lwbi} with $x=q$ and $a=m/L$ we get that $\left (1-(1-q)^{m/L}\right )^L\geq \left (1-\left (1-(qm)/(2L)\right )\right )^L=\left ((qm)/(2L)\right )^L$. Now plugging $L$ into this expression we get that 
\begin{gather*}
\left (\frac{qm}{2L}\right )^L=
\left (\frac{\ln \left (\ln\left(1/\delta\right)/\left(qm\right)\right )qm}{2\clbi\ln (1/\delta) }\right )^{\clbi\ln\left(1/\delta\right)/\ln\left (\ln\left(1/\delta\right)/\left(qm\right)\right ) }\\
= \left (\frac{\ln \left (\ln\left(1/\delta\right)/\left(qm\right)\right )}{2\clbi}\right )^{\clbi\ln\left(1/\delta\right)/\ln\left (\ln\left(1/\delta\right)/\left(qm\right)\right ) }  \delta^{\clbi}
\geq \delta^{\clbi},
\end{gather*}
where the last inequality follows from the assumption that $\ln (1/\delta)/(qm)\geq 8$ (\cref{constantc7a})  so the first term in the second to last expression is lower bounded by 1. Hence with probability at least $\delta^{\clbi}$ we have that all the disjoint $L$ buckets have at least one success and hence on this event $Z_1/q\geq L/(qm)$. Plugging L into the expression, using that $x/\ln x$ is increasing for $x\geq3$ and that $\ln (1/\delta)/(qm)$ is lower bounded by  $ \clbiiii\ln(1/\eps) /(2\clbiii\varepsilon)$ (\cref{boundonration}) which is at least 3 by assumptions on $\clbiii$ and $\varepsilon\leq 1/4$, it follows that

\begin{gather}\label{eqsenstwelve}
\frac{1}{q}Z_1
\geq \frac{\clbi\ln\left(1/\delta\right)}{qm \ln\left (\ln\left(1/\delta\right)/\left(qm\right)\right )}
\geq  \frac{\clbi\clbiiii\ln(1/\eps) }{2\clbiii\varepsilon\ln \left (\clbiiii\ln(1/\eps) /(2\clbiii\varepsilon)\right )}.
\end{gather}
Since $ \clbiiii/(2\clbiii)\geq2$ by assumption it holds that 
$\ln \left (\clbiiii\ln(1/\eps) /(2\clbiii\varepsilon)\right )$ is less than or equal to $\ln \left ((\clbiiii /(2\clbiii\varepsilon))^2\right )$, thus

\begin{gather*}
\frac{\ln(1/\eps) }{\ln \left (\clbiiii\ln(1/\eps) /(2\clbiii\varepsilon)\right )}\geq \frac{\ln(1/\eps) }{2\ln \left (\clbiiii /(2\clbiii\varepsilon)\right )}.
\end{gather*}
Now using that $x/(x+a)$ with $a,x>0$ is increasing in $x$, with $a=\ln\left(4/(\clbiii\clbiiii)\right)$, $x=\ln (1/\varepsilon)$ and $\ln (1/\varepsilon)\geq \ln 2$ it follows that 
\begin{gather*}
\frac{\ln(2) }{2(\ln \left (\clbiiii /(2\clbiii)\right )+\ln(2))}\geq\frac{\ln(2) }{4\ln \left (\clbiiii /(2\clbiii)\right )}.
\end{gather*} 
Plugging this into \cref{eqsenstwelve} it follows that 
\begin{gather*}
\frac{1}{q}Z_1
\geq  \frac{\clbi\clbiiii\ln(2) }{8\clbiii\varepsilon\ln \left (\clbiiii /(2\clbiii)\right )}.
\end{gather*}
Now using that $x/\ln(x)\geq \sqrt{x}$ for $x\geq 1$ with $x=\clbiiii /(2\clbiii) $, which is greater than $2$ by assumptions, we get that  
\begin{gather*}
     \frac{\clbiiii }{2\clbiii\ln \left (\clbiiii /(2\clbiii)\right )}\geq \sqrt{\clbiiii /(2\clbiii) }.
\end{gather*}
Thus we get 
\begin{gather*}
\frac{1}{q}Z_1
\geq  \frac{\clbi\ln(2) \sqrt{\clbiiii }}{4\sqrt{2\clbiii}\varepsilon}=\frac{c_8\clbi}{\varepsilon\sqrt{\clbiii}},
\end{gather*}
with $c_8=\ln(2) \sqrt{\clbiiii }/(4\sqrt{2}).$
\end{proof}

We now notice that the assumption of $qm/L\leq 1$ in \Cref{lowerboundz1} for a fixed $\clbi$ maybe be removed.

\begin{restatable}{Remark1}{sensthirteen}
\label{remarklowerboundstep2}
We may assume that $qm/L\leq 1$ in \cref{lowerboundz1} for a fixed $\clbi$ holds by choosing $\clbxiii$ sufficiently large.
\end{restatable}
\begin{proof}
 To see this we notice that the assumption $qm/L\leq 1$ is equivalent to $$ \frac{qm \ln \left(\ln\left (1/\delta\right)/\left(qm\right)\right)}{\clbi\ln\left (1/\delta\right)}\leq 1.$$
So if we can upper bound the left hand side by 1, we are done. 
To upper bound the left hand side we use that $\ln \left(x\right)/x$ is decreasing for $x\geq 3$ so using this fact with $x=\ln (1/\delta)/(qm)$ and $\ln (1/\delta)/(qm)$ being lower bounded by $\clbxiii$ (\cref{constantc7a}) we get that
$$ \frac{qm \ln \left(\ln (1/\delta)/(qm)\right)}{\clbi\ln (1/\delta)}\leq \frac{\ln \clbxiii}{\clbi\clbxiii},$$
which is less than 1 for sufficiently large $\clbxiii$ hence the assumption of $qm/L\leq 1$ for a fixed $\clbi$ may be removed.

\end{proof}

\begin{restatable}{Lemma1}{sensfourteen}
\label{lowerboundz1i}
Let the setting be as in \Cref{lowerboundz1} other than $m/L\leq 1$ then we have with probability $\delta^{\clbi}$ that
$$\frac{1}{q}Z_1\geq \frac{1}{q}\geq \frac{1}{\clbiii\varepsilon}.$$
\end{restatable}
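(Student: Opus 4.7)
The plan is to observe that the inequality $Z_1/q \geq 1/q$, i.e.\ $Z_1 \geq 1$, can hold only when every one of the $m$ Bernoullis $b_{1,i}$ equals $1$, since $Z_1 = \tfrac{1}{m}\sum_{i=1}^m b_{1,i}$ is always bounded by $1$. By independence this event has probability exactly $q^m$, and the bound $1/q \geq 1/(\clbiii \varepsilon)$ in the conclusion is immediate from the standing assumption $q \leq \clbiii \varepsilon$ in \cref{constansc6a}. So the whole task reduces to showing $q^m \geq \delta^{\clbi}$, or equivalently
\[
m \ln(1/q) \leq \clbi \ln(1/\delta).
\]

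To derive this I would compare $\ln(1/q)$ with $\ln(\ln(1/\delta)/(qm))$, since the hypothesis $m/L \leq 1$ unpacks precisely to $m \ln(\ln(1/\delta)/(qm)) \leq \clbi \ln(1/\delta)$. A direct computation gives the ratio $(\ln(1/\delta)/(qm))/(1/q) = \ln(1/\delta)/m$, so whenever $m \leq \ln(1/\delta)$ we have $1/q \leq \ln(1/\delta)/(qm)$ and hence $\ln(1/q) \leq \ln(\ln(1/\delta)/(qm))$; the desired inequality then chains through the hypothesis.

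It only remains to verify $m \leq \ln(1/\delta)$. For this I would invoke \cref{constantc7a}, which gives $\ln(1/\delta)/(qm) \geq \clbxiii \geq 8$ and therefore $\ln(\ln(1/\delta)/(qm)) \geq \ln 8 > 1$. Combining this with $m \leq L = \clbi \ln(1/\delta)/\ln(\ln(1/\delta)/(qm))$ and the standing assumption $\clbi \leq 1$ from \Cref{lowerboundz1} yields $m \leq \clbi \ln(1/\delta)/\ln 8 < \ln(1/\delta)$, exactly as needed. I do not anticipate any real obstacle here; the crux is simply noticing that the regime $m/L \leq 1$ forces $m$ small enough that the trivial "all $m$ Bernoullis succeed" event already has probability at least $\delta^{\clbi}$, and the rest is bookkeeping of the logarithmic inequalities under the constant regime fixed by \cref{constansc6a} and \cref{constantc7a}.
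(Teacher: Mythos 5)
Your proposal is correct. The underlying insight is identical to the paper's: the event $Z_1 \geq 1$ forces all $m$ Bernoullis to fire, so the relevant probability is exactly $q^m$, the bound $1/q \geq 1/(\clbiii\eps)$ is immediate from \cref{constansc6a}, and the substantive task is showing $q^m \geq \delta^{\clbi}$ using the regime $m \leq L$. Where you diverge is in how you close that last gap. The paper chains multiplicative comparisons, noting $q^m \geq q^L \geq (qm/L)^L$ (each step from $m/L \leq 1$), and then simply re-invokes the arithmetic already carried out in the proof of \Cref{lowerboundz1}, which shows $(qm/(2L))^L \geq \delta^{\clbi}$. You instead work additively in the logarithm: you reduce $q^m \geq \delta^{\clbi}$ to $m\ln(1/q) \leq \clbi\ln(1/\delta)$, observe that $m \leq L$ is literally the statement $m\ln\bigl(\ln(1/\delta)/(qm)\bigr) \leq \clbi\ln(1/\delta)$, and so it suffices to compare $\ln(1/q)$ with $\ln\bigl(\ln(1/\delta)/(qm)\bigr)$, which collapses to the clean condition $m \leq \ln(1/\delta)$. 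You then verify this last inequality from $m \leq L$, $\clbi \leq 1$, and $\ln(1/\delta)/(qm) \geq \clbxiii \geq 8$ via \cref{constantc7a}. Both routes are sound and of comparable length; yours is more self-contained (it does not lean on the intermediate $(qm/(2L))^L$ calculation from \Cref{lowerboundz1}) and surfaces the intuitive fact that $m/L \leq 1$ already forces $m < \ln(1/\delta)$, which the paper leaves implicit. Incidentally, you have also implicitly repaired a typo in the paper's proof, which writes ``less than or equal to'' where the chain $q^m \geq (qm/L)^L$ is clearly intended.
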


\begin{proof}
Now since $1/q\geq Z_1/q$ happens if and only if $Z_1=(1/m)\sum_{j=1}^{m} b_{1,j}=1$, hence all the Bernoulli trails in the binomial being one, the above happens with probability $q^m$. This probability is less than or equal to $\left(qm/L\right)^L$ since $m/L\leq 1$ now the calculations in \Cref{lowerboundz1} for $\left(qm/(2L)\right)^L$ yields that $q^m \geq \delta^{\clbi}$. The later lower bound on $1/q$ follows from $q\leq\clbiii\varepsilon$ (\cref{constansc6a})
\end{proof}

We now show that with good probability we have that $N_1^2$ is $\Theta(\ln (1/\delta))$.
 
\begin{restatable}{Lemma1}{sensfifteen}
\label{lowerboundN1}
For $x\geq0$ we have with probability at least $1-\sqrt{1-\exp(-2x/\pi})$ that
\begin{gather*}
 N^2\geq x.
 \end{gather*}
\end{restatable}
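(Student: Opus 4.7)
The plan is to prove the equivalent bound $\Pr[|N|<\sqrt{x}]^2 \leq 1-\exp(-2x/\pi)$, from which the claim follows by taking square roots and complementing. Introduce an independent copy $N'$ of $N$; then, by independence,
\[
\Pr[|N|<\sqrt{x}]^2 = \Pr[|N|<\sqrt{x},\ |N'|<\sqrt{x}] = \frac{1}{2\pi}\int_S e^{-(s^2+t^2)/2}\,ds\,dt,
\]
where $S=[-\sqrt{x},\sqrt{x}]^2\subset\mathbb{R}^2$ is the square of area $4x$ on which both coordinates stay small. The factor $2/\pi$ in the statement hints strongly at comparing the Gaussian mass of $S$ with that of a Euclidean disk of the same area, so I would introduce $D$, the disk of radius $R=2\sqrt{x/\pi}$ centred at the origin, so that $|D|=|S|=4x$.

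The Gaussian mass of $D$ is elementary to compute in polar coordinates:
\[
\frac{1}{2\pi}\int_D e^{-(s^2+t^2)/2}\,ds\,dt = \int_0^R e^{-r^2/2}\,r\,dr = 1-e^{-R^2/2} = 1-e^{-2x/\pi}.
\]
So what remains is the rearrangement inequality $\int_S e^{-(s^2+t^2)/2}\leq\int_D e^{-(s^2+t^2)/2}$. This I would prove directly rather than quoting a general symmetrization theorem: since $|S|=|D|$ we also have $|S\setminus D|=|D\setminus S|$, and the radially decreasing density $\frac{1}{2\pi}e^{-(s^2+t^2)/2}$ is at least $\frac{1}{2\pi}e^{-R^2/2}$ everywhere on $D\setminus S$ (those points lie inside the disk) while being strictly less than $\frac{1}{2\pi}e^{-R^2/2}$ everywhere on $S\setminus D$ (those points lie outside the disk). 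Therefore $\int_{D\setminus S}e^{-(s^2+t^2)/2}\geq e^{-R^2/2}\,|D\setminus S| = e^{-R^2/2}\,|S\setminus D|\geq \int_{S\setminus D}e^{-(s^2+t^2)/2}$, and subtracting $\int_{S\cap D}$ from both $\int_S$ and $\int_D$ yields the desired comparison.

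Combining these two steps gives $\Pr[|N|<\sqrt{x}]^2 \leq 1-\exp(-2x/\pi)$, so $\Pr[N^2\geq x] = 1-\Pr[|N|<\sqrt{x}] \geq 1-\sqrt{1-\exp(-2x/\pi)}$, as claimed. There is no real obstacle here: the only substantive step is the area-preserving exchange between $S$ and $D$, and as outlined above it reduces to a one-line observation because the density crosses the threshold $\frac{1}{2\pi}e^{-R^2/2}$ exactly on the boundary $\partial D$ that separates the two symmetric differences.
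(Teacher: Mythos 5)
Your proof is correct, and it takes a genuinely different route from the paper. The paper simply quotes the error-function bound $\operatorname{erf}(x) < \sqrt{1-\exp(-4x^2/\pi)}$ from the cited reference and then unwinds it through $\Phi(x)=(1+\operatorname{erf}(x/\sqrt{2}))/2$ and the symmetry $\Pr[N^2\geq x]=2(1-\Phi(\sqrt{x}))$. You instead re-derive that underlying bound from scratch: you tensorize to an independent pair $(N,N')$, express $\Pr[|N|<\sqrt{x}]^2$ as the Gaussian mass of the square $S=[-\sqrt{x},\sqrt{x}]^2$, and compare it to the mass of the equal-area disk $D$ of radius $R=2\sqrt{x/\pi}$, which evaluates exactly to $1-e^{-2x/\pi}$ in polar coordinates. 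The comparison $\int_S\leq\int_D$ is clean: because $|S\setminus D|=|D\setminus S|$ and the radially decreasing density crosses the threshold value $e^{-R^2/2}/(2\pi)$ precisely on $\partial D$, one has $\int_{D\setminus S}\geq e^{-R^2/2}|D\setminus S|=e^{-R^2/2}|S\setminus D|\geq\int_{S\setminus D}$, and adding $\int_{S\cap D}$ to both ends gives $\int_D\geq\int_S$. (Your phrase ``subtracting $\int_{S\cap D}$'' should be ``adding'', but the logic is fine.) This is essentially the classical P\'olya argument, and it buys you a self-contained elementary proof where the paper delegates the crux to a citation; the paper's route is shorter on the page but the content is identical, since the cited inequality is derived by exactly this square-vs-disk comparison.
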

\begin{proof}
For showing this we will us an upper bound on the error function and here by get at lower bound on the two tails of the standard normal distributions. The error function is defined as $\mathrm{erf}(x):= (2/\sqrt{\pi})\int_{0}^{x}e^{-x^2} \  dx$ and has the property that $\Phi(x)=(1+\mathrm{erf}(x/\sqrt{2}))/2$ where $\Phi$ denote the cdf of the standard normal distribution. We will use the following upper bound $\mathrm{erf}(x)<\sqrt{1-\exp(-4x^2/\pi})$ from \cite{sym12122017}. Now using the symmetry of the standard normal distribution around 0 we get
\begin{gather*}
\ppp\left[N^2\geq x\right] 
=\ppp\left[N\leq -\sqrt{x},N\geq \sqrt{x}\right] 
=2\left(1-\Phi\left(\sqrt{x}\right)\right).
\end{gather*}
Now using $\Phi(x)=(1+\mathrm{erf}(x/\sqrt{2}))/2$ we get
\begin{gather*}
\ppp\left[N^2\geq x\right]
=2\left(1-\left(1+\mathrm{erf}\left(\sqrt{x/2}\right)\right)/2\right)
=1-\mathrm{erf}\left(\sqrt{x/2}\right).
\end{gather*}
Lastly using $\mathrm{erf}(x)<\sqrt{1-\exp(-4x^2/\pi})$ we get
\begin{align*}
\ppp\left[N^2\geq x\right] \geq1- \sqrt{1-\exp\left(-2x/\pi\right)},
\end{align*}
Which concludes the proof.
\end{proof}

We will now combine \Cref{lowerboundz1}, \Cref{remarklowerboundstep2}, \Cref{lowerboundz1i} and \Cref{lowerboundN1}  to show \Cref{lowerbound1/qZN2}, recall that \Cref{lowerbound1/qZN2} is.

\begin{customlem}{\ref{lowerbound1/qZN2}}
\senseight*
\end{customlem}

\begin{proof}

Let $\clbi=1/50$ and now fix $\clbxiii$ large enough such that $qm/L\leq1$ as described in \Cref{remarklowerboundstep2} and such that $\clbxiii$ is greater than $8$. Then we have with probability $\delta^{1/50}$ by either \Cref{lowerboundz1} (and accordingly small $\clbiii$) or \Cref{lowerboundz1i}  that
$$\frac{1}{q}Z_1\geq \min\left (\frac{1}{\clbiii\varepsilon} , \frac{c_8}{50\varepsilon\sqrt{\clbiii}}\right).$$ 
We now also choose $\clbiii$ so small that the above is greater than $2\cdot5\cphdii\varepsilon^{-1}$. 

Now using $\sqrt{1-x}\leq 1-x/2$ for $x\leq 1$ and that $\delta\leq 1/4$ it follows by \Cref{lowerboundN1} that with probability $1-\sqrt{1-\exp(-\ln (1/\delta) /\pi})\geq\delta^{1/\pi}/2\geq \delta^{1/2+1/\pi}$, we have  $N_1^2\geq \ln (1/\delta)/2$.

Now since that $Z_1$ and $N_1^2$ are independent we conclude that with probability $\delta^{1/50+1/2+1/\pi}$ we have that
\begin{gather*}
\frac{1}{q}Z_1N_1^2\geq \frac{2\cdot5\ln (1/\delta)}{2\varepsilon}=\frac{5\ln (1/\delta)}{\varepsilon},
\end{gather*}

which concludes the proof of \Cref{lowerbound1/qZN2}
\end{proof}

We now restate and prove \Cref{lowerboundsumN}. 
\begin{customlem}{\ref{lowerboundsumN}}
\sensnine*
\end{customlem}

\begin{proof}
Let $X=(1/q)\sum_{i=2}^{k}Z_iN_i^2\stackrel{d}{=}(1/(mq))\sum_{i=2}^{k}b_iN_i^2$, where the $b_i$'s are binomial random variables with $m$ trails and success probability $q$, the $N_i$'s are standard normal random variables and the $b_i$'s and the $N_i$'s are all independent of each other. We now notice since the $b_iN_i^2$'s are independent and identically distributed the variance of their sum i equal to $k-1$ times the variance of $b_2N_2^2$:
\begin{gather*}
\mathrm{Var}\left(X\right)=\frac{1}{\left(mq\right)^2}\sum_{i=2}^{k}\mathrm{Var}\left(b_iN_i^2\right)=\frac{k-1}{\left(mq\right)^2}\mathrm{Var}\left(b_2N_2^2\right).
\end{gather*}
Now using the independence of $b_2$ and $N_2$ and  that the forth moment of a standard normal distribution is $3$, and that the first and second moment of a binomial random variable is respectively $mq$ and $(mq)^2+mq(1-q)$ we get that
\begin{gather*}
 \mathrm{Var}\left(b_2N_2^2\right)=\E\left[ \left(b_2N_2^2\right)^2 \right ]-E\left [\left(b_2N_2^2\right)\right ]^2=\E\left [b_2^2\right ]\E\left [N_2^4\right ]-\left(\E\left [b_2\right ]\E\left [N_2^2\right ]\right)^2\\
 =3\left(\left(mq\right)^2+mq(1-q)\right)-\left(mq\right)^2=\left(mq\right)^2\left(2+(1-q)/\left(mq\right)\right).
 \end{gather*}
 Now plugging $\mathrm{Var}(b_2N_2^2)$ back into the expression of $\mathrm{Var}\left(X\right)$, yields that 
 \begin{gather*}
 \mathrm{Var}\left(X\right)=\left(k-1\right)\left(2+(1-q)/\left(mq\right)\right).
 \end{gather*}
 Now using that $\E\left [X\right ]=(k-1)$, the above calculation of the variance of $X$ and Chebyshev-Cantelli's inequality $\ppp\left[Y-\mathrm{E}[Y] \leq -t\right]  \leq \operatorname{Var}(Y) /\left(\operatorname{Var}(Y)+t^{2}\right)$ which holds for $ t>0$, yields that

\begin{gather*}
 \ppp\left[\sum_{i=2}^k \frac{1}{q}Z_iN_i^2\leq \left(1-3\varepsilon\right)\left(k-1\right)\right] 
 \leq \frac{\left(k-1\right)\left(2+\left(1-q\right)/\left(mq\right)\right)}{\left(k-1\right)\left(2+\left(1-q\right)/\left(mq\right)\right)+\left(3\varepsilon\left(k-1\right)\right)^2} \\ \leq \frac{\left(2+\left(1-q\right)/\left(mq\right)\right)}{\left(2+\left(1-q\right)/\left(mq\right)\right)+\left(3\varepsilon\right)^2\left(k-1\right)}.
\end{gather*}
Since $y\rightarrow y/(y+a)$ is increasing in $y$ for $a,y>0$, it now follows using this with $a=(3\varepsilon)^3(k-1)$ and $y=2+(1-q)/(mq)\leq 2+1=3$, where we have used that $1/(mq)\leq1$ by the comment under \cref{boundonration}, we get that
\begin{gather*}
 \ppp\left[\sum_{i=2}^k \frac{1}{q}Z_iN_i^2\leq \left(1-3\varepsilon\right)\left(k-1\right)\right] 
 \leq \frac{3}{3+\left(3\varepsilon\right)^2\left(k-1\right)}
 \end{gather*}
 Lastly using that $k=\ln(1/\delta)/\eps^2,$ $\eps\leq 1/4$ and $\delta\leq 1/8$ we get $\varepsilon^2(k-1)=\ln(1/\delta)-\varepsilon^2\geq 2$, and we conclude that
 \begin{gather*}
 \ppp\left[\sum_{i=2}^k \frac{1}{q}Z_iN_i^2\leq \left(1-3\varepsilon\right)\left(k-1\right)\right] \leq
\frac{3}{3+18}\leq 1-\left(1/8\right)^{1/8}\leq 1- \delta^{1/8},
 \end{gather*}
 which ends the proof.
\end{proof}

\bibliographystyle{abbrv}
\bibliography{bibli}

\end{document}